\newcommand{\cmark}{\ding{51}}
\newcommand{\xmark}{\ding{55}}
\definecolor{NG}{rgb}{0.66,0.29,0.48}
\newtheorem{definition}{Definition}
\newtheorem{lemma}{Lemma}
\newtheorem{theorem}{Theorem}
\crefname{figure}{Figure}{Figures}
\crefname{equation}{Equation}{Equations}
\DeclareMathOperator\pool{pool}
\DeclareMathOperator\erf{erf}
\newcommand{\lnorm}[1]{\left\lVert #1 \right\rVert}
\renewcommand{\op}[2]{|#1\rangle\langle#2|}
\renewcommand{\ip}[2]{\langle#1|#2\rangle}
\renewcommand{\bra}[1]{\langle#1|}
\renewcommand{\ket}[1]{|#1\rangle}
\renewcommand{\vec}[1]{\boldsymbol{#1}}
\newcolumntype{Y}{>{\centering\arraybackslash}X}
\newcommand{\mailto}[1]{\href{mailto:#1}{#1}}
\title{Accelerating Inference for Multilayer Neural Networks with Quantum Computers}
\author{Arthur G. Rattew\textsuperscript{1, 3,}\thanks{\mailto{arthur.rattew.science@gmail.com}}\hspace{1mm},
Po-Wei Huang\textsuperscript{2, 3}, 
Naixu Guo\textsuperscript{4},
Lirandë Pira\textsuperscript{4}, 
Patrick Rebentrost\textsuperscript{4,5}\\
$^{1}$ Department of Materials, University of Oxford, Oxford OX1 3PH, United Kingdom \\
$^{2}$ Mathematical Institute, University of Oxford, Oxford OX2 6GG, United Kingdom \\
$^{3}$ Quantum Motion, 9 Sterling Way, London N7 9HJ, United Kingdom \\
$^{4}$ Centre for Quantum Technologies, National University of Singapore, Singapore 117543 \\
$^{5}$ Department of Computer Science, National University of Singapore, Singapore 117417
}
\begin{document}

\maketitle

\begin{abstract}

Fault-tolerant Quantum Processing Units (QPUs) promise to deliver exponential speed-ups in select computational tasks, yet their integration into modern deep learning pipelines remains unclear. In this work, we take a step towards bridging this gap by presenting the first fully-coherent quantum implementation of a multilayer neural network with non-linear activation functions. Our constructions mirror widely used deep learning architectures based on ResNet, and consist of residual blocks with multi-filter 2D convolutions, sigmoid activations, skip-connections, and layer normalizations. We analyse the complexity of inference for networks under three quantum data access regimes. Without any assumptions, we establish a quadratic speedup over classical methods for shallow bilinear-style networks. With efficient quantum access to the weights, we obtain a quartic speedup over classical methods. With efficient quantum access to both the inputs and the network weights, we prove that a network with an $N$-dimensional vectorized input, $k$ residual block layers, and a final residual-linear-pooling layer can be implemented with an error of $\epsilon$ with $O(\text{polylog}(N/\epsilon)^k)$ inference cost. 
\end{abstract}

\section{Introduction}
Within the past decade, deep learning methods~\citep{lecun_deep_2015,goodfellow_deep_2016} have become the mainstream methodology for tackling problems in machine learning and generative artificial intelligence, including tasks in computer vision~\citep{he2016deep, ho2020denoising, dosovitskiy2021image}, natural language processing~\citep{vaswani2017attention, brown2020language} and various other tasks with increasing applicability~\citep{silver2016mastering,jumper2021highly,fawzi2022discovering}. This progress is partly facilitated by advances in GPUs, which offer speed-ups for parallelizable operations such as matrix-vector arithmetic. However, as we approach the physical limits of Moore's law~\citep{moore1965cramming}, the continuous upscaling of CPUs and GPUs may begin to plateau. Consequently, a natural question is whether quantum computing~\citep{feynman1982simulating,feynman1986quantum,nielsen_quantum_2011} and potential quantum processing units (QPUs) can offer further acceleration for deep learning.

The field of quantum machine learning (QML)~\citep{biamonte_quantum_2017, schuld_machine_2021, du2025quantum}, investigates this possibility. QML can broadly be separated into two main paradigms: (1) quantum algorithms tailored to the structure of near-term quantum hardware~\citep{preskill_quantum_2018} under assumptions of limited quantum resources, and (2) using quantum subroutines to obtain provable speed-ups for existing machine learning models, typically requiring large amounts of quantum resources necessitating error-corrected fault-tolerant quantum computers.

In the first paradigm, proposals of quantum neural networks (QNN) based on variational quantum algorithms (VQA)~\citep{peruzzo2014variational,cerezo2021variational} train parametrized quantum circuits (PQC)~\citep{benedetti2019parameterized} analogously to multi-layer neural networks. However, these algorithms face trainability issues in the form of poor local minima~\citep{bittel2021training,anschuetz2022quantum} and vanishing gradients, or \emph{barren plateaus}~\citep{mcclean2018barren, larocca_reviewbarrenplateausvariational_2024}. Moreover, techniques mitigating these issues often result in the algorithms being classically simulable~\citep{cerezo2024does,bermejo2024quantumconvolutionalneuralnetworks}. While alternate approaches such as quantum kernel methods~\citep{havlicek2019supervised, schuld2019quantum} and others have been proposed~\citep{benedetti2019generative, huang_postvariational_2024}, they often face similar trainability issues~\citep{thanasilp2024exponential,rudolph2024trainability}.

The second paradigm focuses on the use of quantum subroutines~\citep{harrow_quantum_2009,montanaro_quantum_2016,gilyen2019quantum,dalzell2023quantum} to provide asymptotic speed-ups in the underlying linear algebra of classical machine learning models, e.g., in matrix inversion, matrix-vector arithmetic, and sampling. Applications include support vector machines~\citep{rebentrost_quantum_2014}, regression~\citep{wiebe2012quantumalgorithmfordatafitting}, feedforward neural networks~\citep{allcock2020quantum}, convolutional neural networks~\citep{kerenidis2020qcnn}, transformers~\citep{guo_quantum_2024}, and other models~\citep{lloyd2014quantum,wiebe2015quantumdeeplearning, rebentrost2018quantum,kapoor2016quantum,cherrat2022quantum,liu2021rigorousandrobust, yang2023quantumalphatron,ivashkov2024qkanquantumkolmogorovarnoldnetworks,wang2025towards}. Other works have also explored speeding up classical neural network training and inference~\citep{kerenidis2020quantumgradientdescent, abbas2023quantum, liu2024towardsprovablyefficient}.

\begin{figure}[t]
    \centering
    \includegraphics[width=\linewidth]{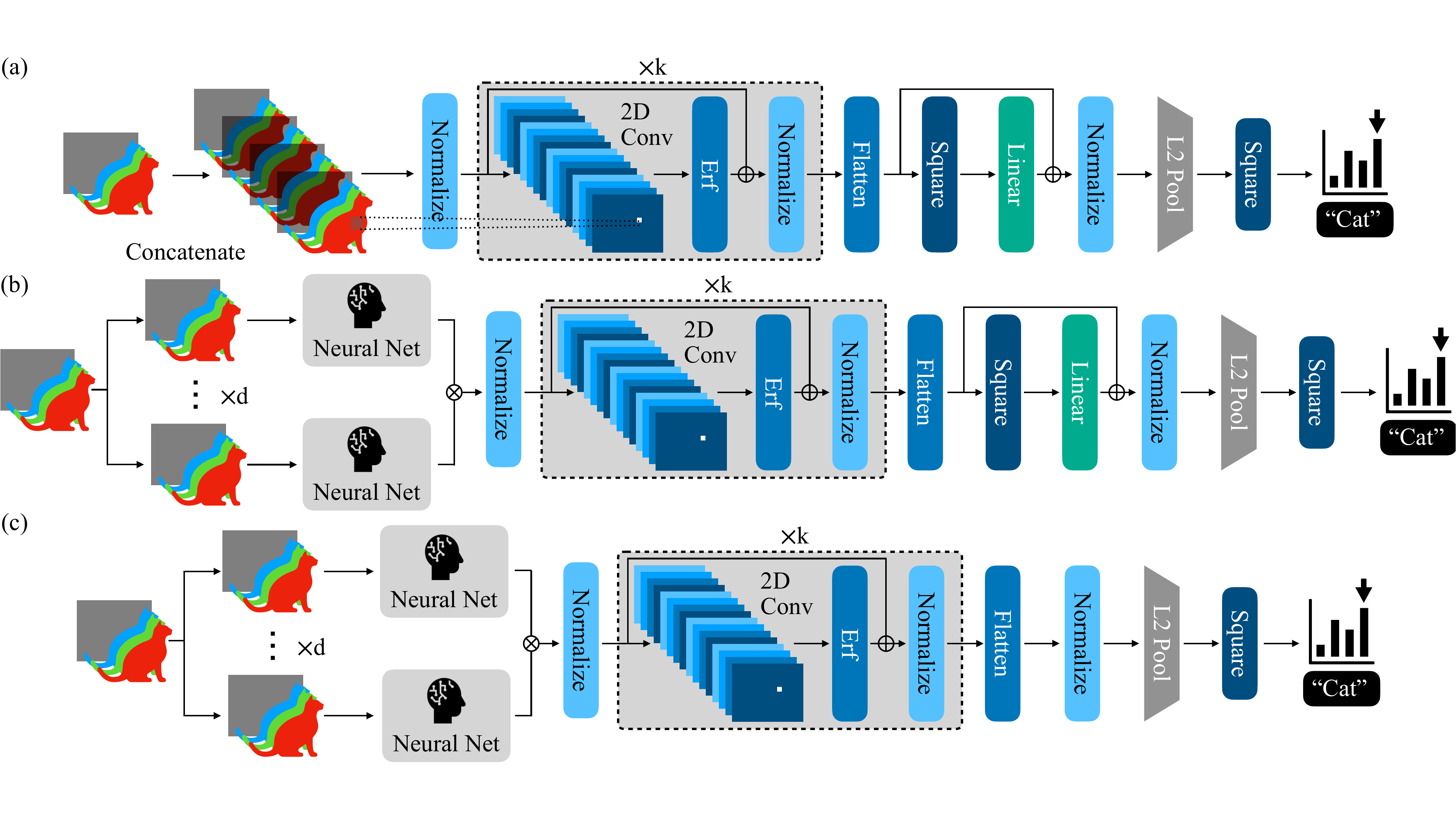}
    \caption{\textbf{Architecture for Convolutional Neural Networks.} 
    This figure shows the architectures we consider with provable quantum complexity guarantees for inference under three regimes of quantum data access assumptions. 
    (a) Depicts the architecture where both the inputs and network weights are provided in an efficient quantum data structure.
    (b) Only the network weights are provided in an efficient quantum data structure.
    (c) No input assumptions are made. In all architectures, the input is assumed to be a rank-3 tensor (e.g., images with 4 channels).
    }
\label{fig:adl:flagship_architecture}
\end{figure}

\paragraph*{Main Contributions.} In this paper, we propose a method that can be used to accelerate inference for multilayer residual networks (ResNets)~\citep{he2016deep} on quantum computers, given their significance in enabling deep networks~\citep{xie2017aggregated,dong2021attention}. We provide core quantum subroutines and techniques for regularized multi-filter 2D convolutions, sigmoid activations, skip-connections, and layer normalizations -- all of which we show can be coherently implemented on quantum computers. 
We list the main contributions as follows. 
\begin{itemize}[leftmargin=*]
    \item In~\cref{sec:qm_arithmetic}, we further develop a modular vector-encoding framework for quantum matrix-vector arithmetic. This is a special case of quantum block-encodings, with many useful properties. 
    \item In~\cref{sec:qm_arithmetic:subsec:mat_vec_squared}, we derive a novel quantum algorithm for the multiplication of arbitrary full-rank and dense matrices with the element-wise square of a given vector, \textit{without} incurring a rank-dependence. To the best of our knowledge, this is the first result which allows a quantum algorithm to utilize an arbitrary full-rank and dense matrix without a Frobenius norm complexity dependence. 
    \item In~\cref{sec:qm_arithmetic:subsec:conv_block_encoding}, we provide a novel QRAM-free block-encoding for 2D multi-filter convolutions.
    \item In~\cref{sec:architectures}, to the best of our knowledge, we derive the \textit{first coherent quantum implementations of multi-layer neural networks with non-linear activations}. We provide rigorous end-to-end complexity proofs for inference under three QRAM regimes:
    \begin{itemize}[leftmargin=*]
        \item \textbf{Regime 1 (inputs and weights provided via QRAM):} Assuming QRAM access to both inputs and weights, for a network with $k$ non-linear activations acting on $N$-dimensional inputs we prove $\tilde{O}(\mathrm{polylog}(N/\epsilon)^k)$ inference cost. Moreover, we argue that existing techniques are insufficient to dequantize this result. 
        
        \item \textbf{Regime 2 (weights provided via QRAM):} When a cost linear in the dimension of the input must be paid (i.e., no QRAM for the input), but the network weights are stored in QRAM, we prove a quartic speedup over exact classical implementations for shallow architectures.
    
    \item \textbf{Regime 3 (no QRAM):} In the absence of any QRAM, we prove a quadratic speedup over an exact classical implementation.
    \end{itemize}
\end{itemize}

The relevant architectures in each regime can be seen in \cref{fig:adl:flagship_architecture}. We derive a number of techniques and algorithms which have broad utility in implementing machine learning architectures on quantum computers. However, our main focus is on accelerating inference for classification, with our formal problem statement given in~\cref{def:approximate_sampling_based_classification}. 
At a high-level, we assume that we are given a trained neural network which, given an input, outputs a probability distribution over possible outputs (e.g., over image classes). The goal is to draw a sample from this output distribution (thereby assigning a class to the input). We introduce an error parameter $\epsilon$, which allows the algorithm to sample from a distribution whose $\ell_2$ norm distance from the true distribution is bounded by at most $\epsilon$.
\begin{definition}[The Approximate Sampling-Based Classification Problem]\label{def:approximate_sampling_based_classification}
Let $0 \le \epsilon \le 1$.
Given a neural network represented by function $h : \mathbb R^D \mapsto \mathbb R^C$ (i.e. with $D$-dimensional inputs and $C$-dimensional outputs) which returns a probability distribution as its output (i.e., for any $\vec x \in \mathbb R^D$, $\vec y := h(\vec x)$ is all non-negative, and $\lnorm{\vec y}_1 = 1$), then the sampling-based classification problem is to return a sample from some probability vector $\vec{\hat y}$ such that $\lnorm{\vec y - \vec{\hat y}}_2 \le \epsilon$.
\end{definition}
For example, in the case of CIFAR-10, $D = 3 \times 32\times 32 = 3072$, and $C=10$. Then, given some input $\vec x \in \mathbb R^{3072}$, $\vec y \in \mathbb R^{10}$ the entries of $\vec y$ correspond to the probability of assigning a given class (e.g., class $i$ is assigned with probability $y_i$, etc). 
This problem statement also naturally captures other applications, such as autoregressive next-token prediction, where the output distribution would instead be over the set of possible tokens rather than classes. 

\paragraph*{Comparison to Prior Work.}
In prior work, to achieve multi-layer architectures in feedforward and convolutional neural networks as well as transformers, intermediate measurements for inner products~\citep{allcock2020quantum} or quantum state tomography that read out the entire state~\citep{kerenidis2020qcnn, guo_quantum_2024} are required to extract information out to classical computers where data is required to be re-encoded into the quantum circuit for computation in the next layer, breaking the coherence of the quantum architecture and limiting potential speed-ups. We compare against the prior work in~\cref{table:comparison_to_prior_work}. To the best of our knowledge, \emph{our work provides the first fully coherent quantum implementation of classical multi-layer neural networks .} Further, our work is also the first in works that accelerate classical deep learning algorithms to present an architecture which does not use QRAM. Moreover, we demonstrate that careful tracking on bounds of the vector norm (as it propagates through the forward-pass of a given network) is required to prevent arbitrary decay of the norm in multilayer structures, and subsequent unbounded runtimes. We provide rigorous proofs and develop tools to prove this norm preservation in our architectural blocks. Further, we make the observation that residual skip connections that enable deep networks classically are fundamental to the norm stability and preservation, enabling us to provide an efficient and coherent multilayer architecture not present in prior work.

\begin{table}
    \centering
    
    \begin{tabularx}{\textwidth}{@{}lc@{}YYYYYY@{}}
    \toprule
         & Architecture
         & Coherent Multi-Layer & Coherent Non-Linearity & QRAM-Free & Norm Preservation & Polylog $1/\epsilon$ & Polylog N\\
    \midrule
    \citet{cong2019quantum}$^*$ & \makecell{CNN Inspired\\ PQC} & \xmark & \xmark & \cmark & \cmark & N/A & N/A \\ \hdashline
    \citet{allcock2020quantum} & Feed-forward & \xmark & \xmark & \xmark & \xmark & \xmark & \xmark\\ \hdashline
    \citet{kerenidis2020qcnn} &  CNN  &\xmark & \xmark & \xmark & \xmark & \xmark & \xmark \\ \hdashline
    \citet{guo_quantum_2024} & Transformer &\xmark & \cmark & \xmark & \xmark & \xmark & \xmark \\ \hdashline
    Our work - Regime 1 & Residual CNN & \cmark & \cmark & \xmark & \cmark & \cmark  & \cmark \\ \hdashline
    Our work - Regime 2 & \makecell{Bilinear \\ Residual CNN} & \cmark & \cmark & \xmark & \cmark & \cmark  & \xmark  \\ \hdashline
    Our work - Regime 3 & \makecell{Bilinear \\ Residual CNN} & \cmark & \cmark & \cmark & \cmark & \cmark  & \xmark \\
    \bottomrule
    \end{tabularx}
    \caption{\textbf{Comparison with prior work.} We briefly explain the meaning of each column. Coherent multi-layer refers to the construction of multi-layer architectures separated by non-linear activation functions without tomography. Coherent non-linearity refers to the implementation of non-linear transformations on the quantum computer without readout. Norm preservation refers to the preservation of vector norms throughout the network forward pass. Next, each quantum implementation of a classical architecture incurs some error over the exact classical implementation, and as such an entry \cmark~in the polylog $1/\epsilon$ column indicates a $O(\text{polylog}(1/\epsilon))$ error-dependence, whilst a \xmark~entry indicates a $O(\text{poly}(1/\epsilon))$ error-dependence. Finally, polylog $N$ refers to polylogarithmic complexity in the input dimension $N$.
    \textbf{$^*$Note:} the architecture presented in~\citet{cong2019quantum}, is inspired by CNNs but is based on parameterized quantum circuits (PQC). As they do not aim to accelerate an existing classical architecture, it is not possible to provide an entry in the polylog $\epsilon$ column. Moreover, they do not provide complexities when considering classical input data, and so we do not give an entry in the column corresponding to polylog $N$.
}\label{table:comparison_to_prior_work}
\end{table}

\paragraph*{Notation.}\label{adl:appendix:section:prelims}
We use standard big and small $O$ notations for asymptotics, using $\tilde O$ to hide polylogarithmic factors.
The notation $[N]$ represents the set of integers $0, ..., N-1$. We use kets to represent arbitrary (not necessarily normalized vectors). Logarithms are assumed to be base-2 unless otherwise stated. The subscript on the ket denotes the number of qubits it acts on (i.e., the log of the dimension), thus $\ket{\psi}_n \in \mathbb C^{2^n}$. When we assume a ket is normalized, we will explicitly state that it is. The one exception is with the definition of a vector-encoding (as defined subsequently in~\cref{def:VE}). For example, an $(1, a,\epsilon)$-VE for $\ket{\psi}_n$ implicitly implies that $\lnorm{\ket{\psi}_n}_2 =1$, and so we will not explicitly state the normalization of the encoded vector every time we introduce a VE. A bra is defined as the conjugate transpose of a ket, $\bra{\psi}_n = \ket{\psi}_n^{\dagger}$. 
We use the notation $I_n$ to refer to an $n$-qubit (i.e., $2^n$-dimensional) identity matrix. We define the Kronecker product with the symbol $\otimes$, and will sometimes refer to this as a tensor product. We define basis functions both in vector notation and in ket notation, i.e., $\ket{j} \equiv \vec e_j$. E.g., $\ket{0} = \vec e_0 = \begin{pmatrix}1 & 0 & \hdots & 0 \end{pmatrix}^T$.
When we define a function $f$ on scalars, i.e., $f : \mathbb C \mapsto \mathbb C$, given a vector $\vec x \in \mathbb C^N$ we sometimes use the notation $f(\vec x) := \sum_{j=0}^{N-1} f(x_j) \vec e_j$, i.e., $f(\vec x)$ denotes an element-wise application of $f$ to $\vec x$.

\section{Quantum Matrix-Vector Arithmetic}
\label{sec:qm_arithmetic}
In this section, we define and motivate the tools necessary to perform quantum matrix-vector arithmetic. These subroutines are essential for our subsequent results implementing classical neural networks on quantum computers. In~\cref{sec:qm_arithmetic:subsec:bes}, we provide a summary of quantum block-encodings and quantum vector encodings. \textbf{Novel contributions in this section:}
In~\cref{sec:qm_arithmetic:subsec:ve_operations_novel}, we further develop the framework of vector-encodings, introducing straight-forward new quantum algorithms for vectors encoded as VEs, enabling vector sums, matrix-vector products, tensor products, and vector concatenations. 
In~\cref{sec:qm_arithmetic:subsec:mat_vec_squared}, we present a novel algorithm which applies an arbitrary full-rank and dense matrix to the element-wise square of a vector, without incurring a Frobenius norm dependence. Finally, in~\cref{sec:qm_arithmetic:subsec:conv_block_encoding}, we give a novel QRAM-free block-encoding for 2D multi-filter convolutions. 

\subsection{Quantum Block-Encodings and Vector-Encodings}\label{sec:qm_arithmetic:subsec:bes}
A widely used tool in quantum algorithm design is the block-encoding~\citep{gilyen2019quantum}, which can be viewed as a way to encode and manipulate matrices in quantum algorithms. A block-encoding is a unitary matrix $U$, specified by a quantum circuit, whose top left block contains a matrix $\tilde A$ (such that $\lVert \tilde A\rVert_2 \le 1$) which is a scaled approximation to some matrix $A$. 
We give the formal definition in the following. 
\begin{definition}[Block encoding \citep{gilyen2019quantum}]\label{def:quantum_block_encoding}
Suppose that $A$ is a $2^s\times 2^s$ matrix, $\alpha,\epsilon \in\mathbb R_+$ and $a\in \mathbb N$, then we say that the $2^{s + a}\times 2^{s + a}$ unitary matrix $U$ is an $(\alpha,a,\epsilon)$-block-encoding of $A$, if 
\begin{equation}
\Vert A - \alpha(\bra{0}^{\otimes a}\otimes I)U(\ket{0}^{\otimes a}\otimes I)\Vert \leq \epsilon.
\end{equation}
\end{definition}
Essentially, noting that $\bra{0}^{\otimes a}\otimes I = \begin{pmatrix} I & 0 & \hdots & 0\end{pmatrix}$, we see that $\bra{0}^{\otimes a}\otimes I$ selects the first $2^s$ rows of $U$, and then $\ket{0}^{\otimes a}\otimes I$ selects the first $2^s$ columns of $(\bra{0}^{\otimes a}\otimes I)U$, meaning that $(\bra{0}^{\otimes a}\otimes I)U(\ket{0}^{\otimes a}\otimes I)$ is simply the top-left $2^s\times 2^s$ block of $U$. Indeed, if $\epsilon=0$, then $A/\alpha = (\bra{0}^{\otimes a}\otimes I)U(\ket{0}^{\otimes a}\otimes I)$. Additionally, $\alpha$ can be viewed as an upper-bound on the normalization factor of $A$, e.g., if $\epsilon = 0$, then $\lnorm{A/\alpha}_2\le 1$. Any matrix encoded in a sub-block of a unitary matrix cannot have norm exceeding $1$. 

Analogously to how a quantum block-encoding encodes a general matrix in the top left block of a unitary, we can embed arbitrary (sub-normalized) $N$-dimensional vectors in the first $N$ rows of a larger vector corresponding to a normalized quantum state. 

This naturally leads to the following definition of quantum vector-encodings (VEs), the definition of which we take nearly verbatim from~\citet{rattew2023non}, where they were called SPBEs.
\begin{definition}[Vector-Encoding (VE)~\citep{rattew2023non}]\label{def:VE}
Let $\alpha \ge 1$, $a\in\mathbb N$, and $\epsilon \ge 0$. We call the $2^{a + n}\times 2^{a+n}$ unitary matrix $U_{\psi}$ an $(\alpha, a, \epsilon)-$VE for the $2^n$-dimensional quantum state $\ket{\psi}_n$, if
\begin{align}
    \lnorm{\ket{\psi}_n - \alpha\left(\bra{0}_a\otimes I_n \right)U_{\psi}\ket{0}_{a+n}}_2 \le \epsilon.
\end{align}
\end{definition}
Note that $\left(\bra{0}_a\otimes I_n \right)U_{\psi}\ket{0}_{a+n}$ corresponds to the exact vector encoded by $U_{\psi}$, specifically encoded in the first $2^{n}$ rows of the first column of $U_{\psi}$. The parameter $\alpha$ is a measure of the norm of the encoded vector, e.g., if  $\epsilon=0$ then $\lnorm{\left(\bra{0}_a\otimes I_n \right)U_{\psi}\ket{0}_{a+n}}_2 = 1/\alpha$. One of the most essential components of working with matrix-vector arithmetic in quantum algorithms is tracking the norm of the encoded vectors throughout the algorithm, as the quantum complexity is usually inversely proportional to the norm of the encoded vector. Vector encodings give a methodical way to track encoded vector norms when implementing various matrix-arithmetic operations on the encoded vectors.

In summary, block-encodings provide a formal framework for working with matrices in quantum algorithms, and vector-encodings provide a formal way for working with vectors.

\subsection{New Operations on Vector Encodings}\label{sec:qm_arithmetic:subsec:ve_operations_novel}
To enable our results on architectural blocks, we had to develop primitive operations on vector-encodings. These results are straight-forward modifications of existing techniques into the VE framework, but are necessary to allow easy tracking of the norm of encoded vectors, which is a crucial parameter dictating the complexity of quantum neural network accelerations. 

\begin{lemma}[Vector Sum, Proof in~\hyperlink{proof:lemma:vector_sum}{\cref{proof:lemma:vector_sum}}]\label{adl:lemma:vector_sum}
Let $0 \le \tau \le 1$.
We are given unitary circuits $U_{\psi}$ and $U_{\phi}$ which are $(\alpha, a, \epsilon_0)$ and $(\beta, b, \epsilon_1)$ VEs for $\ket{\psi}_n$ and $\ket{\phi}_n$, respectively.  Define $c := \max(a, b)$, $\ket{\Gamma}_n := \frac{\tau}{\alpha} \ket{\psi}_n + \frac{(1-\tau)}{\beta}\ket{\phi}_n$, $\mathcal{N} := \lnorm{\ket{\Gamma}_n}_2$ and $\ket{\overline\Gamma}_n := \ket{\Gamma}_n/\mathcal N$.
Then, using one controlled $U_{\psi}$ circuit, one controlled $U_{\phi}$ circuit, and two additional single-qubit gates, we can construct a unitary matrix $V$ such that $V$ is a $(\mathcal N^{-1}, c + 1, (\frac{\epsilon_0}{\alpha} + \frac{\epsilon_1}{\beta})/\mathcal N)$-VE for $\ket{\overline{\Gamma}}$.
\end{lemma}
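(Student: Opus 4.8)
The plan is to use a single-ancilla linear-combination-of-unitaries (LCU) construction. First I would name the exact encoded vectors $\ket{\tilde\psi}_n := (\bra{0}_a\otimes I_n)U_{\psi}\ket{0}_{a+n}$ and $\ket{\tilde\phi}_n := (\bra{0}_b\otimes I_n)U_{\phi}\ket{0}_{b+n}$, so that the VE hypotheses read $\lnorm{\ket{\psi}_n - \alpha\ket{\tilde\psi}_n}_2 \le \epsilon_0$ and $\lnorm{\ket{\phi}_n - \beta\ket{\tilde\phi}_n}_2 \le \epsilon_1$; dividing through gives $\lnorm{\frac{1}{\alpha}\ket{\psi}_n - \ket{\tilde\psi}_n}_2 \le \epsilon_0/\alpha$ and analogously $\epsilon_1/\beta$ for $\ket{\phi}$. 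To reconcile the differing ancilla registers, I would pad whichever unitary uses fewer ancillas with an identity acting on $c-a$ (resp. $c-b$) fresh qubits initialised to and projected onto $\ket{0}$; because the controlled version acts trivially on the pad and the pad is projected back onto $\ket{0}$, the extracted vector is unchanged.

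Next I would define a one-qubit prepare gate $P$ with $P\ket{0} = \sqrt{\tau}\ket{0} + \sqrt{1-\tau}\ket{1}$ (a single $R_y$ rotation) and a select operation $\op{0}{0}\otimes U_{\psi} + \op{1}{1}\otimes U_{\phi}$ acting on the padded target register, realised as one controlled-$U_{\psi}$ (on $\ket0$) and one controlled-$U_{\phi}$ (on $\ket1$), and set $V := (P^{\dagger}\otimes I)\,\text{Select}\,(P\otimes I)$. This uses exactly the two controlled circuits and the two single-qubit gates $P,P^{\dagger}$ claimed in the statement. A direct evaluation of the top-left column then yields $(\bra{0}_{c+1}\otimes I_n)V\ket{0}_{c+1+n} = \tau\ket{\tilde\psi}_n + (1-\tau)\ket{\tilde\phi}_n$, where the $\bra{0}P^{\dagger}$ and $P\ket{0}$ sandwich contributes the weights $\tau$ and $1-\tau$ to the two branches.

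It remains to certify the VE parameters. The exact extracted vector approximates $\ket{\Gamma}_n = \frac{\tau}{\alpha}\ket{\psi}_n + \frac{1-\tau}{\beta}\ket{\phi}_n$, so I would compute
\begin{align}
\lnorm{\ket{\overline\Gamma}_n - \mathcal N^{-1}\left(\tau\ket{\tilde\psi}_n + (1-\tau)\ket{\tilde\phi}_n\right)}_2 = \frac{1}{\mathcal N}\lnorm{\tau\left(\tfrac{1}{\alpha}\ket{\psi}_n - \ket{\tilde\psi}_n\right) + (1-\tau)\left(\tfrac{1}{\beta}\ket{\phi}_n - \ket{\tilde\phi}_n\right)}_2,
\end{align}
and then apply the triangle inequality together with $0\le\tau,1-\tau\le 1$ to bound this by $\frac{1}{\mathcal N}\left(\tau\frac{\epsilon_0}{\alpha} + (1-\tau)\frac{\epsilon_1}{\beta}\right) \le \left(\frac{\epsilon_0}{\alpha} + \frac{\epsilon_1}{\beta}\right)/\mathcal N$, which is exactly the stated error. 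As a consistency check that $\mathcal N^{-1}\ge 1$ is a legitimate VE scale, I would note $\mathcal N \le \frac{\tau}{\alpha} + \frac{1-\tau}{\beta} \le 1$ since $\alpha,\beta\ge 1$.

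The LCU algebra itself is routine; the parts requiring care are the bookkeeping of the two distinct ancilla counts $a$ and $b$ (verifying that the pad qubits genuinely drop out of both the select action and the final $\bra{0}_c$ projection) and the propagation of the two input errors through division by the output norm $\mathcal N$, where the weights $\tau$ and $1-\tau$ must be bounded by $1$ so as to land precisely on the error $(\epsilon_0/\alpha + \epsilon_1/\beta)/\mathcal N$ rather than a weighted version of it.
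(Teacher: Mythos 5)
Your proposal is correct and is essentially the same argument as the paper's: the paper also constructs $V$ by sandwiching the pair of controlled-$U_{\psi}$/controlled-$U_{\phi}$ gates between $R_{\tau}$ and $R_{\tau}^{\dagger}$ (your $P$, $P^{\dagger}$), pads the smaller ancilla register up to $c$, and bounds the error by exactly the same triangle-inequality computation, yielding $\frac{1}{\mathcal N}\left(\frac{\tau\epsilon_0}{\alpha}+\frac{(1-\tau)\epsilon_1}{\beta}\right)\le\left(\frac{\epsilon_0}{\alpha}+\frac{\epsilon_1}{\beta}\right)/\mathcal N$. Your additional sanity check that $\mathcal N\le 1$ (so $\mathcal N^{-1}\ge 1$ is a valid VE scale) is a small bonus the paper omits.
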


\begin{lemma}[Matrix-Vector Product, Proof in~\hyperlink{proof:matrix_vector_product}{\cref{proof:matrix_vector_product}}]\label{lemma:matrix_vector_product}
We are given an $(\alpha, a, \epsilon_0)$-block-encoding $U_A$ for the $n$-qubit operator $A$, and $U_{\psi}$ a $(\beta, b, \epsilon_1)$-VE  for the $\ell_2$-normalized $n$-qubit quantum state $\ket{\psi}$. 
Let $\mathcal{N} := \lnorm{A\ket{\psi}_n}_2$.
$U_{\psi}$ has $T_{\psi}$ circuit complexity, and $U_A$ has $T_A$ circuit complexity. 
Then, we can obtain an $a + b +n$ qubit unitary $U$ with $O(T_{\psi} + T_A)$ circuit complexity such that $U$ is an $(\alpha\beta/\mathcal N, a + b, (\epsilon_0 + \alpha \epsilon_1)/\mathcal{N})$-VE for the quantum state state $A\ket{\psi}_n/\mathcal N$.
\end{lemma}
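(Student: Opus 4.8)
The plan is to build $U$ by simply composing the two given circuits, so that $U_{\psi}$ first prepares (an approximation of) $\ket{\psi}_n/\beta$ in the system register and then $U_A$ applies (an approximation of) $A/\alpha$ to it. Concretely, I lay out the $a+b+n$ qubits as a block-encoding ancilla register ($a$ qubits), a vector-encoding ancilla register ($b$ qubits), and the system register ($n$ qubits), and set $U := \widetilde{U}_A\,(I_a \otimes U_{\psi})$, where $\widetilde{U}_A$ denotes $U_A$ acting on the block-encoding ancilla and system registers (identity on the VE ancilla). This uses one call each to $U_{\psi}$ and $U_A$, giving circuit complexity $O(T_{\psi} + T_A)$ and $a+b$ ancilla qubits, as required. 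The only bookkeeping subtlety is that the system register is shared by both circuits while each circuit's ancilla register is private.

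The first step is to establish a composition identity for the recovered (unnormalized) vector. Writing $A' := \alpha(\bra{0}_a \otimes I_n) U_A (\ket{0}_a \otimes I_n)$ for the exact sub-block recovered by $U_A$, and $\ket{\psi'} := \beta(\bra{0}_b \otimes I_n) U_{\psi} \ket{0}_{b+n}$ for the exact vector recovered by $U_{\psi}$, I expand $U_{\psi}\ket{0}_{b+n} = \sum_j \ket{j}_b \otimes \ket{\xi_j}_n$ with $\ket{\xi_0}_n = \ket{\psi'}/\beta$, push this through $\widetilde{U}_A$, and project onto $\bra{0}_a \bra{0}_b$. Because $\widetilde{U}_A$ acts as the identity on the VE ancilla, only the $j=0$ branch survives the $\bra{0}_b$ projection, and the $\bra{0}_a$ projection on that branch returns $(A'/\alpha)$ applied to $\ket{\psi'}/\beta$. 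Hence $(\bra{0}_{a+b} \otimes I_n) U \ket{0}_{a+b+n} = A'\ket{\psi'}/(\alpha\beta)$. In the exact case $A'=A$ and $\ket{\psi'}=\ket{\psi}_n$, so this equals $A\ket{\psi}_n/(\alpha\beta) = (\mathcal{N}/(\alpha\beta))\,(A\ket{\psi}_n/\mathcal{N})$, which is precisely the normalized target scaled by $1/\alpha'$ with $\alpha' = \alpha\beta/\mathcal{N}$, confirming the claimed scaling factor.

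For the error, I bound $\lnorm{A\ket{\psi}_n/\mathcal{N} - \alpha'(\bra{0}_{a+b}\otimes I_n)U\ket{0}_{a+b+n}}_2$. Substituting $\alpha'=\alpha\beta/\mathcal{N}$ and the identity above, this equals $\tfrac{1}{\mathcal{N}}\lnorm{A\ket{\psi}_n - A'\ket{\psi'}}_2$. I split $A\ket{\psi}_n - A'\ket{\psi'} = (A - A')\ket{\psi}_n + A'(\ket{\psi}_n - \ket{\psi'})$ and apply the triangle inequality with submultiplicativity: the first term is bounded by $\lnorm{A-A'}_2\lnorm{\ket{\psi}_n}_2 \le \epsilon_0$ (using the block-encoding guarantee and $\lnorm{\ket{\psi}_n}_2=1$), and the second by $\lnorm{A'}_2\lnorm{\ket{\psi}_n-\ket{\psi'}}_2 \le \alpha\epsilon_1$, where I use the VE guarantee together with the fact that $A'/\alpha$ is a sub-block of the unitary $U_A$ and so has spectral norm at most $1$, giving $\lnorm{A'}_2 \le \alpha$. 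Combining yields total error $(\epsilon_0 + \alpha\epsilon_1)/\mathcal{N}$, matching the statement.

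I do not expect a genuine obstacle here, as the argument is essentially a routine composition once the construction is fixed; the one place demanding care is measuring the error against the \emph{normalized} target $A\ket{\psi}_n/\mathcal{N}$ rather than the unnormalized $A\ket{\psi}_n$, since the factor $1/\mathcal{N}$ is exactly what converts the additive error $\epsilon_0 + \alpha\epsilon_1$ on the recovered vector into the stated VE error. As a final sanity check I would verify that $\alpha' = \alpha\beta/\mathcal{N} \ge 1$ as required for a valid VE, which follows from $\mathcal{N} = \lnorm{A\ket{\psi}_n}_2 \le \lnorm{A}_2 \le \alpha$ together with $\beta \ge 1$.
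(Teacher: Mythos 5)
Your proposal is correct and follows essentially the same route as the paper: the same composed circuit (one call each to $U_A$ and $U_{\psi}$ with private ancillas and a shared system register), the same splitting $A\ket{\psi}_n - A'\ket{\psi'} = (A-A')\ket{\psi}_n + A'(\ket{\psi}_n - \ket{\psi'})$, and the same bounds $\lnorm{A-A'}_2 \le \epsilon_0$, $\lnorm{A'}_2 \le \alpha$; the only cosmetic difference is that you prove the projection/composition identity directly via the branch expansion, where the paper cites the proof of Lemma 53 of Gilyén et al. One tiny caveat in your closing sanity check: with $\epsilon_0 > 0$ the block-encoding guarantee only gives $\lnorm{A}_2 \le \alpha + \epsilon_0$, so $\mathcal{N} \le \alpha$ is not exact in the noisy case, but this remark is not needed for the proof and the paper does not perform this check either.
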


\begin{lemma}[Tensor Product of Vector Encodings, Proof in~\hyperlink{proof:lemma:tensor_product_of_VEs}{\cref{proof:lemma:tensor_product_of_VEs}}]\label{adl:lemma:tensor_product_of_VEs}
Given $U_{\psi}$ an $(\alpha, a, \epsilon)$-VE for $\ket{\psi}_n$ with $O(T_{\psi})$ circuit complexity, and $U_{\phi}$ an $(\beta, b, \delta)$-VE for $\ket{\phi}_m$ with $O(T_{\phi})$ circuit complexity, then we can obtain the circuit $V$ which is an $(\alpha\beta, a + b, \epsilon +\delta + \epsilon\delta)$-VE for $\ket{\psi}_n\otimes \ket{\phi}_m$ with $O(\max(T_{\psi}, T_{\phi}) + \max(n, b))$ circuit depth. 
\end{lemma}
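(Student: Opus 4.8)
The plan is to build $V$ by running $U_\psi$ and $U_\phi$ on disjoint registers and then permuting the flag qubits so they are grouped together. Concretely, applying $U_\psi\otimes U_\phi$ to $\ket{0}_{a+n}\otimes\ket{0}_{b+m}$ produces a state on the register ordering $[\,\mathrm{flag}_\psi\,(a),\ \mathrm{data}_\psi\,(n),\ \mathrm{flag}_\phi\,(b),\ \mathrm{data}_\phi\,(m)\,]$, whereas a VE for $\ket{\psi}_n\otimes\ket{\phi}_m$ requires the ordering $[\,\mathrm{flag}_\psi,\ \mathrm{flag}_\phi,\ \mathrm{data}_\psi,\ \mathrm{data}_\phi\,]$, so that all $a+b$ flag qubits come first and the two data registers appear in the order $\psi$ then $\phi$. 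I would therefore define $V := S\,(U_\psi\otimes U_\phi)$, where $S$ is a SWAP network that exchanges the adjacent blocks $\mathrm{data}_\psi$ (size $n$) and $\mathrm{flag}_\phi$ (size $b$), leaving the outer blocks $\mathrm{flag}_\psi$ and $\mathrm{data}_\phi$ untouched.

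Next I would extract the exact encoded vector. Writing $\ket{\tilde\psi}_n := \alpha(\bra{0}_a\otimes I_n)U_\psi\ket{0}_{a+n}$ and $\ket{\tilde\phi}_m := \beta(\bra{0}_b\otimes I_m)U_\phi\ket{0}_{b+m}$ for the scaled vectors that $U_\psi$ and $U_\phi$ actually encode, the VE hypotheses give $\lnorm{\ket{\psi}_n-\ket{\tilde\psi}_n}_2\le\epsilon$ and $\lnorm{\ket{\phi}_m-\ket{\tilde\phi}_m}_2\le\delta$. Since the projector onto the joint flag-zero subspace factorizes across the two disjoint registers, a short computation shows
\[
\alpha\beta\,(\bra{0}_{a+b}\otimes I_{n+m})V\ket{0}_{a+b+n+m} = \ket{\tilde\psi}_n\otimes\ket{\tilde\phi}_m,
\]
with the reordering $S$ guaranteeing that the data registers sit in the order $\psi$ then $\phi$. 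It then remains only to bound $\lnorm{\ket{\psi}_n\otimes\ket{\phi}_m-\ket{\tilde\psi}_n\otimes\ket{\tilde\phi}_m}_2$ by $\epsilon+\delta+\epsilon\delta$, and to confirm $\alpha\beta\ge 1$ (immediate from $\alpha,\beta\ge 1$).

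For the error I would use the standard telescoping identity $\ket{\psi}\otimes\ket{\phi}-\ket{\tilde\psi}\otimes\ket{\tilde\phi} = (\ket{\psi}-\ket{\tilde\psi})\otimes\ket{\phi} + \ket{\tilde\psi}\otimes(\ket{\phi}-\ket{\tilde\phi})$, followed by the triangle inequality and multiplicativity of the $\ell_2$ norm under tensor products. Using $\lnorm{\ket{\phi}_m}_2=1$ together with the bound $\lnorm{\ket{\tilde\psi}_n}_2\le\lnorm{\ket{\psi}_n}_2+\epsilon = 1+\epsilon$ (itself from the triangle inequality and the assumed normalization of $\ket{\psi}_n$), this yields $\epsilon\cdot 1 + (1+\epsilon)\cdot\delta = \epsilon+\delta+\epsilon\delta$, exactly as claimed.

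Finally, for the depth: because $U_\psi$ and $U_\phi$ act on disjoint qubits they run in parallel in depth $O(\max(T_\psi,T_\phi))$, and the block swap $S$ exchanges two adjacent blocks of sizes $n$ and $b$, which a nearest-neighbour SWAP network realizes in depth $O(n+b)=O(\max(n,b))$; combining gives the stated $O(\max(T_\psi,T_\phi)+\max(n,b))$. I expect the only real care to lie in the bookkeeping of the register permutation — verifying that $S$ moves exactly the two blocks that must be swapped while fixing the others, and that the resulting data ordering matches $\ket{\psi}_n\otimes\ket{\phi}_m$ rather than a transposed version — since the error estimate itself is a routine telescoping argument.
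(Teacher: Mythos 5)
Your proposal is correct and follows essentially the same route as the paper's proof: tensor $U_{\psi}\otimes U_{\phi}$, apply a block-swap permutation to group the two flag registers in front, and bound the error via the triangle inequality on a telescoped tensor-product difference (your grouping $(\ket{\psi}-\ket{\tilde\psi})\otimes\ket{\phi}+\ket{\tilde\psi}\otimes(\ket{\phi}-\ket{\tilde\phi})$ versus the paper's three-term expansion is an immaterial difference, and both give $\epsilon+\delta+\epsilon\delta$). The depth accounting also matches the paper's $O(\max(T_{\psi},T_{\phi})+\max(n,b))$.
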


\begin{lemma}[Concatenation of Vector Encodings, Proof in~\hyperlink{proof:lemma:concatenation_of_vector_encodings}{\cref{proof:lemma:concatenation_of_vector_encodings}}]\label{adl:lemma:concatenation_of_vector_encodings}
Let $D = 2^d$, $N = 2^n$, and $0 \le \epsilon < 1$. Assume that $d \le n$.
Suppose we are given a set of $D$ unitary circuits, $\{U_i\}_{i \in [d]}$ such that each $U_i$ is an $(\alpha_i, a, \epsilon)$-VE for the quantum state $\ket{\psi_i}_n$ with $O(T)$ circuit complexity. 
\footnote{If $D$ is not a power of $2$, padding can be added.}
Let $\ket{\Psi}_{d + n} = \sum_{j=0}^{D-1}\ket{j}_d\ket{\psi_j}/\alpha_j$, and let $\mathcal N := \lnorm{\ket{\Psi}_{d + n}}_2 = \sqrt{\sum_{j=0}^{D-1}\frac{1}{\alpha_j^2}}$. Then, we can obtain a $(D/\mathcal N, d + a, \epsilon)$ for $\frac{\ket{\Psi}_{d+n}}{\mathcal N}$ with $O(d D T)$ circuit complexity. 
\end{lemma}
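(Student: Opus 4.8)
The plan is to construct the concatenated state by a \emph{prepare--select--unprepare} circuit in the spirit of linear combination of unitaries, exploiting the fact that each $U_j$ already encodes the subnormalized vector $\ket{\psi_j}/\alpha_j$ of norm exactly $1/\alpha_j$. A \emph{uniform} superposition over the index $j$ therefore automatically induces the desired $1/\alpha_j$ weighting once we postselect the ancilla onto $\ket{0}$, so no nonuniform state preparation is required; the per-branch norms are supplied for free by the individual VEs.

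Concretely, I would use four registers: a $d$-qubit index register $J$ and the $a$-qubit shared VE-ancilla $A$ (together forming the $d+a$ ancilla qubits of the output VE), plus a $d$-qubit output-index register $J'$ and the $n$-qubit data register $B$ (together carrying the encoded $(d+n)$-qubit state). The steps are: (i) apply $H^{\otimes d}$ to $J$ to create $\tfrac{1}{\sqrt{D}}\sum_{j\in[D]}\ket{j}_J$; (ii) copy the index into $J'$ with $d$ CNOTs; (iii) apply the select operation $\sum_{j\in[D]}\op{j}{j}_{J'}\otimes U_j$ on $(J',A,B)$; and (iv) apply $H^{\otimes d}$ to $J$ again. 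Step (iii) is the dominant cost: it is realized as $D$ multi-controlled circuits, each a $d$-qubit-controlled application of a $U_j$ of circuit size $T$, giving the stated $O(dDT)$ complexity. Let $V$ denote the resulting unitary on $2d+a+n$ qubits.

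To read off the encoding parameters, write $U_j\ket{0}_{a+n}=\ket{0}_A\ket{\tilde\psi_j}_B+\ket{\perp_j}$, where $\ket{\tilde\psi_j}:=(\bra{0}_a\otimes I_n)U_j\ket{0}$ is the vector exactly encoded by $U_j$ and $(\bra{0}_A\otimes I)\ket{\perp_j}=0$. Tracing the four steps and projecting the ancilla qubits $J$ and $A$ onto $\ket{0}$, the exact vector encoded by $V$ is $\tfrac{1}{D}\sum_{j\in[D]}\ket{j}_{J'}\ket{\tilde\psi_j}_B$: each Hadamard layer contributes a factor $1/\sqrt{D}$, and selecting the $J=\ket{0}$ branch on the second layer keeps exactly one of the $D$ terms of $H^{\otimes d}\ket{j}$ for every $j$. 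Choosing $\alpha:=D/\mathcal{N}$ then gives $\alpha\cdot(\text{encoded vector})=\tfrac{1}{\mathcal N}\sum_j\ket{j}\ket{\tilde\psi_j}$, which is the quantity to be compared against $\ket{\Psi}_{d+n}/\mathcal{N}=\tfrac{1}{\mathcal N}\sum_j\ket{j}(\ket{\psi_j}/\alpha_j)$.

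The final and most delicate step is the error analysis, which I would carry out using the orthonormality of the index states $\{\ket{j}\}_{J'}$ to split the VE error into a quadrature sum over $j$:
\[
\lnorm{\tfrac{1}{\mathcal N}\ket{\Psi}_{d+n}-\alpha\,(\bra{0}_{d+a}\otimes I_{d+n})\,V\ket{0}}_2
=\frac{1}{\mathcal N}\Bigl(\sum_{j=0}^{D-1}\lnorm{\tfrac{1}{\alpha_j}\ket{\psi_j}-\ket{\tilde\psi_j}}_2^{2}\Bigr)^{\!1/2}
\le\frac{1}{\mathcal N}\Bigl(\sum_{j=0}^{D-1}\tfrac{\epsilon^{2}}{\alpha_j^{2}}\Bigr)^{\!1/2}
=\epsilon .
\]
Here the per-index bound $\lnorm{\tfrac{1}{\alpha_j}\ket{\psi_j}-\ket{\tilde\psi_j}}_2\le\epsilon/\alpha_j$ is exactly the $(\alpha_j,a,\epsilon)$-VE guarantee for $U_j$ rescaled by $1/\alpha_j$, and the last equality is the exact cancellation $\sqrt{\sum_j 1/\alpha_j^2}=\mathcal{N}$. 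I expect the main obstacle to be precisely this index bookkeeping: getting $\ket{j}$ into the output register while returning $J$ to $\ket{0}$ forces the copy-then-Hadamard-uncompute pattern responsible for the $D/\mathcal{N}$ (rather than $\sqrt{D}/\mathcal{N}$) normalization and the $d+a$ ancilla count, and one must verify the uncomputation acts correctly on the junk components $\ket{\perp_j}$ as well. The fact that the index-wise errors add in quadrature and are matched exactly by the definition of $\mathcal{N}$ is what keeps the overall error at $\epsilon$ with no amplification; the remaining gate- and qubit-counting is then routine.
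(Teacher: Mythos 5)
Your proposal is correct and takes essentially the same route as the paper's proof: both are LCU-style prepare--select--unprepare constructions with a Hadamard sandwich on a $d$-qubit index register, giving the $1/D$ prefactor (hence the $D/\mathcal N$ normalization), $d+a$ ancillas, $O(dDT)$ depth, and the identical quadrature error argument in which $\sqrt{\sum_j \epsilon^2/\alpha_j^2} = \epsilon\mathcal N$ cancels exactly against the normalization. The only cosmetic difference is that you copy the index into the output register with CNOTs before the select, whereas the paper produces $\ket{j}_d$ inside the select by tensoring each $U_j$ with bit-flip operators $X_j$ through its tensor-product-of-VEs lemma; also, your worry about uncomputing the junk components $\ket{\perp_j}$ is unnecessary, since they are simply projected out by the $\ket{0}$ ancilla post-selection rather than needing to be uncomputed.
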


\subsection{Matrix Vector Squared Product}\label{sec:qm_arithmetic:subsec:mat_vec_squared}
We are now ready to present the first key result of this section, showing how given a matrix $W$ (with $\lnorm{W}_2\le 1)$ and a vector encoding of $\vec x$, we can obtain a vector encoding of $W(\vec x)^2$. The key idea is to avoid obtaining a quantum block-encoding of the operator $W$ (which in general requires $W$ to be either low-rank, or sparse~\citep{gilyen2019quantum}). We then implement the product by using importance-weighting to coherently combine the columns of $W$ weighted by the corresponding elements of the input vector, and then apply the result to a modified version of the input vector.

\begin{theorem}[Product of Arbitrary Matrix with a Vector Element-wise Squared, Informal]\label{adl:theorem:matrix_vector_squared_product_informal}
Let $N = 2^n$.
We are given a matrix $W \in \mathbb C^{N\times N}$, provided via a pre-processed efficient quantum accessible data-structure. 
Additionally, we are given the unitary $U_{\psi}$ with circuit complexity $O(T_{\psi})$, a $(\alpha, a, \epsilon)$-VE for the quantum state $\ket{\psi}_n$. Define the function $g :\mathbb C\mapsto \mathbb R$ as $g(x) = |x|^2$, and $\mathcal N := \lnorm{Wg(\ket{\psi}_n)}_2$. Then we can construct the unitary $U_f$ which is a $(\frac{\alpha^2}{\mathcal N}, 2a + 2n + 3, \frac{2\alpha\epsilon}{\mathcal N})$-VE for $Wg(\ket{\psi}_n)/\mathcal N$, and has $O(T_{\psi} + n^2)$ circuit depth.\footnote{For simplicity, here we are assuming that the parameter $d$ (as defined in~\cref{adl:theorem:matrix_vector_squared_product}) is set to $n$.}
\end{theorem}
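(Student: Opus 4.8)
The central observation I would exploit is that the element-wise square can be pushed into the matrix. Writing $\ket{\psi}_n=\sum_{j} x_j\ket{j}$, we have $g(\ket{\psi}_n)=\sum_j \lvert x_j\rvert^2\ket{j}=\sum_j x_j\overline{x_j}\ket{j}$, so
\[
  Wg(\ket{\psi}_n)=\sum_i\Big(\sum_j W_{ij}\,x_j\,\overline{x_j}\Big)\ket{i}
  = M\sum_j\overline{x_j}\ket{j},
  \qquad M:=W\,\mathrm{diag}(\vec{x}),
\]
where the $j$-th column of $M$ is $x_j\vec{w}_j$, i.e.\ the $j$-th column $\vec{w}_j$ of $W$ rescaled by the input amplitude $x_j$. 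This is exactly the informal description: ``coherently combine the columns of $W$ weighted by the elements of the input vector'' (that is $M$) and then ``apply the result to a modified version of the input vector'' (the entrywise conjugate $\sum_j\overline{x_j}\ket{j}$). The plan is therefore to (i) obtain a VE for the conjugated input, (ii) build an $(\alpha,\cdot,\cdot)$-block-encoding of $M$, and (iii) combine them with the Matrix--Vector Product lemma (\cref{lemma:matrix_vector_product}).

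\textbf{The two ingredients.} Step (i) is immediate: since $U_{\psi}$ is an $(\alpha,a,\epsilon)$-VE for $\ket{\psi}_n$, its entrywise complex conjugate $\overline{U_{\psi}}$ is unitary and is an $(\alpha,a,\epsilon)$-VE for $\sum_j\overline{x_j}\ket{j}$, because conjugation preserves every norm and the block structure of \cref{def:VE}. Step (ii) is the crux. To realize $M/\alpha$ in the top-left block I would act on a column index $\ket{k}$ and produce $M\ket{k}/\alpha=(x_k/\alpha)\vec{w}_k$ as follows. Using the data structure for $W$, prepare the normalized column $\ket{\vec{w}_k}=\lnorm{\vec{w}_k}_2^{-1}\sum_i W_{ik}\ket{i}$ in a scratch register and undo the normalization with a flagged controlled rotation of amplitude $\lnorm{\vec{w}_k}_2$, which is legitimate precisely because $\lnorm{\vec{w}_k}_2\le\lnorm{W}_2\le 1$; this recovers the un-normalized column $\sum_i W_{ik}\ket{i}$ on the good flag. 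To imprint the remaining scalar $x_k/\alpha$ conditioned on the index, I would invoke a \emph{fresh} copy of $U_{\psi}$, XOR the system index $k$ into its output register, and postselect that register on $\ket{0}$: on the good subspace the surviving amplitude is exactly $x_k/\alpha$. Swapping the prepared column into the system register yields an $(\alpha,\,a+2n+3,\,\alpha\epsilon)$-block-encoding of $M$, whose ancillas are the $a+n$ qubits of the fresh $U_{\psi}$ copy, the $n$-qubit column scratch, and three flags (equality, norm-rotation, combination). Crucially, the weighting is supplied by the input amplitudes $x_k$ rather than by a column-norm superposition $\sum_j\lnorm{\vec{w}_j}_2\ket{j}/\lnorm{W}_F$, which is exactly why no Frobenius-norm factor appears.

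\textbf{Assembling the bound.} Feeding the $(\alpha,a+2n+3,\alpha\epsilon)$-block-encoding of $M$ and the $(\alpha,a,\epsilon)$-VE $\overline{U_{\psi}}$ into \cref{lemma:matrix_vector_product}, with $\mathcal N=\lnorm{M\sum_j\overline{x_j}\ket{j}}_2=\lnorm{Wg(\ket{\psi}_n)}_2$, produces a VE for $Wg(\ket{\psi}_n)/\mathcal N$ with normalization $\alpha\cdot\alpha/\mathcal N=\alpha^2/\mathcal N$, ancilla count $(a+2n+3)+a=2a+2n+3$, and error $(\alpha\epsilon+\alpha\cdot\epsilon)/\mathcal N=2\alpha\epsilon/\mathcal N$, matching the claim. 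The circuit uses a constant number of $U_{\psi}$-type calls together with one data-structure query, an $O(n)$ comparison, and $O(n^2)$ rotation arithmetic, giving depth $O(T_{\psi}+n^2)$.

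\textbf{Main obstacle.} I expect the difficulty to lie entirely in step (ii): constructing a block-encoding of the \emph{input-dependent} matrix $M=W\,\mathrm{diag}(\vec{x})$ whose normalization is only $\alpha$, rather than the Frobenius-norm-scale factor that a generic QRAM block-encoding of $W$ would force. Two points require care. First, $\vec{x}$ is available only as a quantum state, so the diagonal factor must be injected coherently through a fresh $U_{\psi}$ and an equality test instead of being read out classically. Second, one must verify that the column-preparation normalization, the norm rotation ($\le 1$), and the imprinted amplitude ($x_k/\alpha$) telescope so that the encoded block is exactly $M/\alpha$ with no residual dimensional factor; in particular a naive ``uniform'' contraction of the column index would leak a stray $\sqrt{N}$, and it is the use of the input amplitudes as importance weights that removes both this factor and the Frobenius dependence. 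Tracking the error through the postselection, together with the role of the free register-size parameter $d$ (here fixed to $n$), is the remaining bookkeeping.
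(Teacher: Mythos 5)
Your global strategy is sound at the level of bookkeeping, and it reproduces the paper's parameters: writing $Wg(\ket{\psi}_n) = M\,\overline{\vec x}$ with $M := W\,\text{diag}(\vec x)$, spending one copy of $U_{\psi}$ on the conjugated input and one inside the encoding of $M$, correctly yields the $\alpha^2/\mathcal N$ normalization, $2a+2n+3$ ancillas, and $2\alpha\epsilon/\mathcal N$ error. The gap is in step (ii), and it is fatal as written. After you prepare $\ket{w_k}$ in the scratch register, imprint $a_k$ via the flagged rotation and $x_k/\alpha$ via the fresh $U_{\psi}$, and swap the column into the system register, the scratch register still holds the index $\ket{k}$. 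By \cref{def:quantum_block_encoding}, your circuit block-encodes $M/\alpha$ only if projecting \emph{all} ancillas onto $\ket{0}$ returns $M\ket{k}/\alpha$ for every $k$; but the scratch contributes a factor $\ip{0}{k} = \delta_{k,0}$, so what you have actually encoded is the single rank-one term $\frac{x_0 a_0}{\alpha}\op{w_0}{0}$, not $M/\alpha$. This input-dependent garbage cannot be removed: the data structure's $U_W^{\dagger}$ uncomputes the column given the index (it maps $\ket{k}\ket{w_k}\mapsto\ket{k}\ket{0}$, erasing the wrong register), and no unitary can map $\ket{w_k}\ket{k}\mapsto \ket{w_k}\ket{0}$ for all $k$ because the columns $\ket{w_k}$ are not mutually orthogonal. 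Your ``main obstacle'' paragraph correctly senses that the column index must be contracted against the input amplitudes acting as importance weights, but your circuit has already spent those amplitudes (the fresh $U_{\psi}$ imprints $x_k/\alpha$ while leaving the index intact), so no valid contraction remains.

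This index-erasure step is exactly where the paper's proof takes a different route: it never block-encodes $W$ or $M$ at all. Instead, \cref{adl:theorem:matrix_vector_squared_product} builds a VE of the $2n$-qubit state $\ket{\phi}_{2n} = \sum_j \psi_j\ket{j}_n\ket{w_j}_n$ -- keeping the index as part of the encoded \emph{vector}, where it is allowed to live, rather than as ancilla garbage -- and separately builds a block-encoding of $(\op{0}{\psi_1})\otimes I_n$, where $\ket{\psi_1}_n = A\ket{\psi}_n = \sum_j a_j\psi_j\ket{j}_n$ is obtained from the diagonal block-encoding of the column norms composed with $U_{\psi}$ (then reversed and combined with a projector). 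Applying this operator to $\ket{\phi}_{2n}$ via \cref{lemma:matrix_vector_product} contracts the index register against $\bra{\psi_1}$, and this inner product does double duty: it imprints the weights $\overline{a_j\psi_j}$ \emph{and} coherently sends the index register to $\ket{0}_n$, producing $\ket{0}_n\otimes Wg(\ket{\psi}_n)$ with no dimensional loss and no Frobenius-norm factor. To close the gap in your argument, your step (ii) must be replaced by such a contraction, i.e., the weights have to be imprinted by the very same inner product that erases the index, rather than imprinted separately and followed by a projection onto $\ket{0}$.
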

This result is stated formally and proven as~\cref{adl:theorem:matrix_vector_squared_product} in the Appendix, and we formally define one possible implementation of the quantum accessible data-structure assumption in~\cref{adl:preprocessed_matrix_qram_datastructure}. To use this to prepare the quantum state $Wg(\ket{\psi}_n)/\mathcal N$, the vector normalization result (\cref{lemma:VE_fixed_amplitude_amplification_v2}) can be directly applied to the output VE yielded by~\cref{adl:theorem:matrix_vector_squared_product_informal}, preparing the state with $\tilde O(\alpha^2(T_{\psi} +n^2)/\mathcal N)$ circuit complexity. This is the first such result \textit{without a Frobenius norm dependence} on $A$.

We will now informally sketch the proof of this procedure. First, define the columns of $W$ as $W = \begin{pmatrix}\vec w_0 & \hdots & \vec w_{N-1}\end{pmatrix}$. Define the normalized version as $\ket{w_j} = \vec w_j/\lnorm{\vec w_j}_2$, and define $a_j := \lnorm{\vec w_j}_2$. We assume access to three objects. (1) A block-encoding of $A:=\text{diag}(a_0, \hdots, a_{N_1})$. (2) An oracle implementing $U_W\ket{0}\ket{j} = \ket{w_j}\ket{j}$. (3) A vector-encoding for $\ket{\psi} = \sum_j \psi_j\ket{j}$. Then, by using our vector-encoding circuit, we can get an encoding of $\ket{\phi} := \sum_{j}\psi_j \ket{j}\ket{w_j} = \begin{pmatrix} \psi_0 \bra{w_0} & \hdots & \psi_{N-1}\bra{w_1}\end{pmatrix}^{\dagger}$. Then, using our block-encoding of $A$, we can efficiently get a block-encoding of $\begin{pmatrix}a_0 \psi_0 I_n & \hdots & a_{N_1} \psi_{N-1} I_n\\   & \vec 0 \end{pmatrix}$ (where $I_n$ is a $2^n$ dimensional identity matrix, and only the first $N$ rows are non-zero). We can then use the product of matrix-encoding with vector encoding result to take the product of $\begin{pmatrix}a_0 \psi_0 I_n & \hdots & a_{N_1} \psi_{N-1} I_n\\   & \vec 0 \end{pmatrix}$ with $\ket{\phi}$ yielding the desired vector-encoding. 

\subsection{QRAM-Free Quantum Encoding of 2D Multi-Filter Convolutions}\label{sec:qm_arithmetic:subsec:conv_block_encoding}

While the matrix-form of a $2D$ convolution has been given many times before in the literature, to the best of our knowledge the following is the first result giving a block-encoding of a QRAM-free 2D multi-filter convolution.  We also stress that the following result can be highly optimized, especially if QRAM is used. We leave such optimizations to future work. The full proof is provided in~\hyperlink{proof:qram_free_2d_conv_block_encoding}{\cref{proof:qram_free_2d_conv_block_encoding}}.
\begin{lemma}[QRAM-Free Block-Encoding of 2D Convolution With Filters]\label{adl:lemma:qram_free_2d_conv_block_encoding}
Let $M = 2^m$, let $n = 2m$, let $N = 2^n$, and let $D = 2^d$. 
Define the matrix form of the 2D multi-filter convolution operation, $\mathcal C \in \mathbb R^{C M^2 \times C M^2 }$, as per~\cref{adl:lemma:matrix_form_of_2d_convolution}. Here, $C$ represents the number of input and output channels, and $D$ represents the dimension of the kernel over rows and columns (i.e., the kernel is a rank$-4$ tensor containing $C$, $C\times D\times D$ filters). 
Then, after performing some one-time classical pre-computation, we can obtain a $(1, 3 + 8D + 2\log(CD), 0)$- block-encoding of $\frac{\mathcal C}{2\lnorm{\mathcal C}_2}$ with $O(m^2 C^3 D^4 \log (C)\log(D))$ circuit depth. 
\end{lemma}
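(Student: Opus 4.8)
The plan is to exploit the \emph{block-circulant} structure of a multi-filter convolution and to diagonalize its spatial part with the quantum Fourier transform (QFT), since this is the only route I see to a sub-normalization that is tight in the spectral norm $\lnorm{\mathcal C}_2$ (rather than in a Frobenius- or $\ell_1$-type quantity) while remaining QRAM-free. Concretely, I would first invoke \cref{adl:lemma:matrix_form_of_2d_convolution} to write the convolution matrix as a sum over the $D^2$ kernel offsets, $\mathcal C = \sum_{a,b=0}^{D-1} W_{a,b}\otimes S_a \otimes S_b$, where $S_a$ is the cyclic-shift (modular-increment) unitary by $a$ on an $M$-dimensional spatial register and $W_{a,b}\in\mathbb C^{C\times C}$ is the channel-mixing matrix collecting the kernel entries at offset $(a,b)$. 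Conjugating both spatial registers by the QFT $F_M$ turns each shift into a diagonal phase, $F_M S_a F_M^\dagger = \mathrm{diag}_k(e^{-2\pi i a k/M})$, so $\mathcal C = (I_C\otimes F_M^\dagger\otimes F_M^\dagger)\,\widehat{\mathcal C}\,(I_C\otimes F_M\otimes F_M)$ with $\widehat{\mathcal C}$ \emph{block-diagonal} over the frequency registers, its $(k_1,k_2)$ block being the $C\times C$ symbol $\widehat W(k_1,k_2)=\sum_{a,b}W_{a,b}\,e^{-2\pi i(ak_1+bk_2)/M}$. The two QFTs contribute the $O(m^2)$ factor in the depth and are fixed circuits requiring no QRAM.

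Next I would block-encode $\widehat{\mathcal C}/(2\lnorm{\mathcal C}_2)$ directly. Because $\widehat{\mathcal C}$ is block-diagonal in frequency, it suffices to build, controlled on the frequency registers $\ket{k_1}\ket{k_2}$, a block-encoding of the single small matrix $\widehat W(k_1,k_2)/(2\lnorm{\mathcal C}_2)$ acting on the $\log C$-qubit channel register. The entries of $\widehat W(k_1,k_2)$ are trigonometric polynomials of degree less than $D$ in the (classically hard-coded, hence QRAM-free) kernel, so each of the $C^2$ entries can be synthesized on the fly by accumulating at most $D^2$ phase contributions into a work register; constructing the resulting $C\times C$ block-encoding from the computed entries is what produces the $O(C^3 D^4\log C\log D)$ gate factor and the ancilla budget ($\sim\log C$ for the channel-block dilation, $\sim\log D$ for the offset indices, and the $O(D)$ workspace for entrywise phase accumulation). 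Conjugating this object back by the two inverse QFTs then yields a block-encoding of $\mathcal C/(2\lnorm{\mathcal C}_2)$, and the total depth and ancilla count follow by adding the QFT and arithmetic costs. The factor of $2$ is deliberate headroom: scaling by $2\lnorm{\mathcal C}_2$ guarantees every block satisfies $\lnorm{\widehat W(k_1,k_2)/(2\lnorm{\mathcal C}_2)}_2\le 1/2$, which leaves room to complete each controlled block into an exact ($\epsilon=0$) unitary dilation.

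The \textbf{one-time classical pre-computation} enters precisely in fixing this normalization: I would classically evaluate $\lnorm{\mathcal C}_2=\max_{k_1,k_2}\lnorm{\widehat W(k_1,k_2)}_2$, which is cheap since the symbol is a trigonometric matrix polynomial of degree less than $D$ determined entirely by the small kernel, and hard-code it into the controlled-rotation angles. I would then verify the block-encoding identity of \cref{def:quantum_block_encoding} by checking that projecting all ancillas onto $\ket{0}$ reproduces exactly $\mathcal C/(2\lnorm{\mathcal C}_2)$, using that the QFT conjugation is exact and that the controlled symbol construction is exact under exact arithmetic.

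The \textbf{main obstacle} I anticipate is obtaining a sub-normalization genuinely proportional to $\lnorm{\mathcal C}_2$ rather than to a looser quantity. A direct linear-combination-of-unitaries over the $D^2$ shifts would give sub-normalization $\sum_{a,b}\lnorm{W_{a,b}}$, which for adversarial kernels exceeds $\lnorm{\mathcal C}_2$ by a factor as large as $\sqrt{D}$ (by Parseval, $\max_k|\widehat K(k)|\ge\lnorm{K}_2\ge\lnorm{K}_1/\sqrt D$), and is therefore insufficient to reach the stated $2\lnorm{\mathcal C}_2$; this is exactly why I route through the QFT. The residual difficulty is then the \emph{multi-channel} case: for $C>1$ each frequency block $\widehat W(k_1,k_2)$ is a dense, non-normal $C\times C$ matrix, and a naive entrywise block-encoding of it loses another factor of $C$ relative to $\lnorm{\widehat W(k_1,k_2)}_2$. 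Matching the spectral norm thus requires the frequency-controlled construction to produce a block-encoding of each symbol that is tight in $\lnorm{\cdot}_2$ — the step where the precomputed $\lnorm{\mathcal C}_2$, the factor-$2$ headroom, and the bulk of the $C^3 D^4$ gate cost are spent, and where I expect the careful bookkeeping of the lemma to concentrate.
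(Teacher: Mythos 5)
There is a genuine gap, and it comes in two parts. First, your starting decomposition is wrong for the matrix the lemma actually refers to: $\mathcal C$ as defined in \cref{adl:lemma:matrix_form_of_2d_convolution} is built from the \emph{unilateral} shift $Q$ (\cref{adl:def:unilateral_shift_operator}), i.e.\ the shift \emph{without} wrap-around, because the convolution uses zero padding. So $\mathcal C$ is block-Toeplitz, not block-circulant, and conjugation by the QFT does not diagonalize it: $F_M Q^a F_M^{\dagger}$ is not a diagonal phase matrix. The circulant/QFT route can only be used after embedding the input into a doubled space with a projector (the paper sketches exactly this as a future optimization after its proof), but then what you naturally obtain is a block-encoding of $\Pi\,\mathcal C_{\mathrm{circ}}\,\Pi$ normalized by $2\lnorm{\mathcal C_{\mathrm{circ}}}_2 \ge 2\lnorm{\mathcal C}_2$, not by the claimed $2\lnorm{\mathcal C}_2$; your identity $\lnorm{\mathcal C}_2 = \max_{k_1,k_2}\lnorm{\widehat W(k_1,k_2)}_2$ also only holds for the circulant extension, not for the zero-padded operator.

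Second, the step you flag as "where the careful bookkeeping concentrates" — producing, coherently controlled on $\ket{k_1}\ket{k_2}$, a block-encoding of each dense $C\times C$ symbol that is tight in spectral norm — is precisely the part you leave unresolved, and it is where the paper's key tool lives. The paper does what you dismiss as insufficient: it takes an LCU over the $C^2D^2$ terms $K_{i,j,k,l}\,(\op{i}{j}_c\otimes Q^l\otimes Q^k)$ (via \cref{adl:lemma:lcu_of_BEs}, with the $\sqrt{K}$-amplitude state prepared brute-force, QRAM-free), which indeed yields only the loose normalization $\lnorm{\ket{K}_{2c+2d}}_1$. It then repairs this with uniform singular value amplification (\cref{lemma:adl:uniform_singular_value_amplification}), boosting the encoded operator by the factor $\gamma = \lnorm{\ket{K}_{2c+2d}}_1/2\lnorm{\mathcal C}_2$, and proves via a row-norm argument that $\gamma \le DC^{3/2}$, so the amplification multiplies the depth by at most $O(DC^{3/2})$ — this is exactly the origin of the $C^3D^4$ factor in the stated complexity. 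In other words, a loose sub-normalization is acceptable provided one can (i) amplify it away and (ii) bound the amplification ratio; your proposal is missing both the amplification mechanism and the bound, and without them neither your route nor the LCU route reaches the $2\lnorm{\mathcal C}_2$ normalization claimed in the lemma.
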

While the degrees on the number of channels and the filter size $D$ seem large, the filter size is usually quite small in practice (e.g., often $3$). Moreover, there are straight-forward optimizations of this result which can substantially reduce the degrees on both $C$ and $D$. 
Convolutional layers are excellent candidates for QRAM-free implementation, since the number of parameters they contain are usually much smaller than the dimension of the vectorized tensors which they act upon. Indeed, we essentially obtain~\cref{adl:lemma:qram_free_2d_conv_block_encoding} by efficiently constructing a block-encoding of the matrix-form of the highly-structured object corresponding to each parameter in the convolutional kernel, and then taking a linear combination of the result.
This explains why the complexity of our procedure is polylogarithmic in the dimension, whilst being polynomial in the number of parameters. This is in contrast to exact classical algorithms which have polynomial dimension-dependence. 
Moreover, our result can be substantially optimized further, potentially by exploiting the fact that circulant convolutions are diagonalized by the Fourier transform.

\section{Architectural Blocks}

In this section we will derive two key architectural blocks, a residual block, and a multi-layer residual block, which allow our subsequent complexity claims. 
We present an additional architectural block building on these in~\cref{adl:appendix:section:general_architecture_blocks}, but do not include it in the main text as it is not essential for understanding the key complexity details of such quantum implementations.

\begin{figure}
    \centering
    \includegraphics[width=0.45\linewidth]{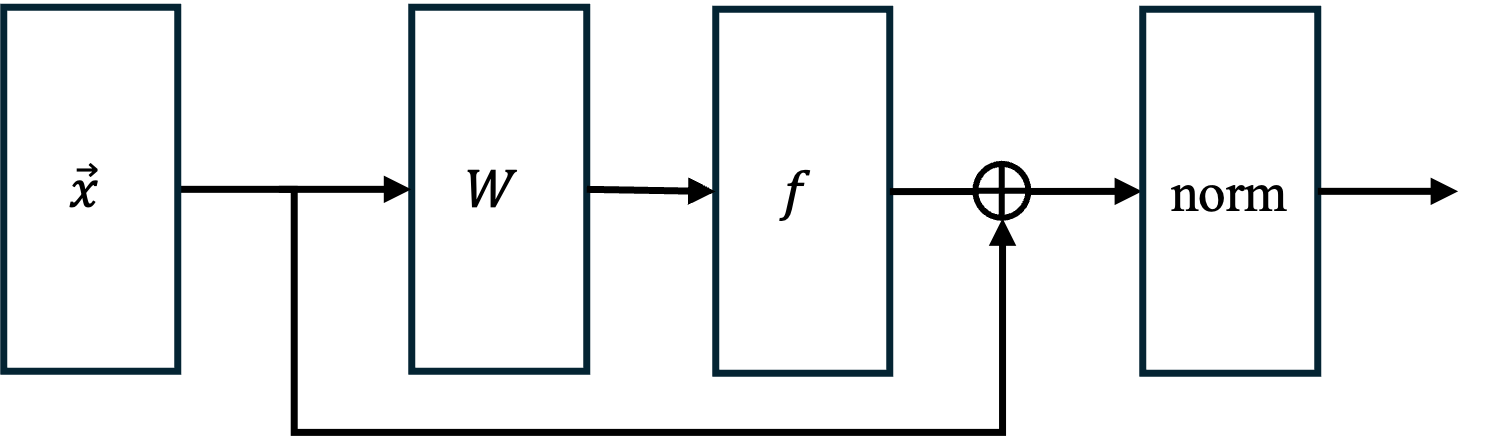}
    \caption{\textbf{Generic Residual Architectural Block.} This diagram illustrates the structure of a typical residual block used in deep neural networks. The input vector $\vec{x}$ is transformed through a sequence of operations: a learnable linear transformation $W$, a non-linear activation function $f$, and a residual (skip) connection that adds the original input to the transformed signal. The output is then passed through a normalization layer (\texttt{norm}).}    \label{fig:adl:skip_block_architecture_simple}
\end{figure}

\begin{lemma}[General Skip Norm Block]\label{adl:lemma:general_skip_norm_block}
Let $\epsilon_1 \in (0, 1]$. 
Let $\kappa \in [1, 2]$. 
Consider the architecture shown in~\cref{fig:adl:skip_block_architecture_simple}.
Let $N = 2^n$. 
We are given the unitary $U_{\psi}$ a $(1, a, \epsilon_0)$-VE for $\ket{\psi}_n$ with circuit complexity $O(T_1)$, and are given the unitary $U_W$ a $(1, b, 0)$-block-encoding for the $n$-qubit operator $W/\kappa$ with circuit complexity $O(T_2)$ such that $\lnorm{W}_2 \le 1$. Define $f(x) := \erf(4x/5)$, $\ket{\psi_f}_n := \ket{\psi}_n + f(W\ket{\psi}_n)$, and $\mathcal N := \lnorm{\ket{\psi_f}_n}_2$. Then, we can obtain a $(1, 2(a+b) + n + 9, 712({\epsilon_0 + \epsilon_1}))$-VE for $\ket{\psi_f}_n/\mathcal N$ with circuit complexity $O(\log(\frac{\sqrt{N}}{\epsilon_1})\log(\frac{1}{\epsilon_1})(a + b + n + T_1 + T_2))$.\footnote{We implicitly assume that $\lnorm{W\ket{\psi}_n} > 0$, which is a reasonable assumption for any input which comes from the same distribution as the training data.}
\end{lemma}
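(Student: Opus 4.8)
The plan is to realize the four stages of the residual block -- the linear map $W$, the element-wise activation $\erf(4\cdot/5)$, the skip connection, and the final normalization -- as a chain of vector-encoding operations, carefully tracking the subnormalization factor and the accumulated error at each stage. I would begin with the linear layer: feeding the $(1,a,\epsilon_0)$-VE $U_\psi$ and the $(1,b,0)$-block-encoding $U_W$ of $W/\kappa$ into \cref{lemma:matrix_vector_product} yields a VE whose normalized encoded vector is $W\ket{\psi}_n/\lnorm{W\ket{\psi}_n}_2$ with scaling factor $\kappa/\lnorm{W\ket{\psi}_n}_2$, so that its raw (sub-normalized) amplitudes are exactly $(W\ket{\psi}_n)_i/\kappa =: v_i/\kappa$, with $|v_i/\kappa|\le 1$. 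Absorbing the known constant $\kappa\in[1,2]$ into the block-encoding is what leaves these raw amplitudes equal to the genuine matrix-vector entries up to a known scalar, which is precisely what lets the next stage activate the true (unnormalized) values rather than the renormalized ones.

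Next I would apply the activation as a nonlinear amplitude transformation in the style of \citet{rattew2023non}. Since $\erf$ is odd and entire, I would take an odd QSP/Chebyshev polynomial $P$ approximating $g(y):=\erf(4\kappa y/5)$ on $[-1,1]$, so that applying it to the raw amplitudes gives $P(v_i/\kappa)\approx \erf(4v_i/5)=f(v_i)$, the $\kappa$ being baked into $g$. The polynomial degree is the origin of the $\log(\sqrt{N}/\epsilon_1)$ factor: to guarantee an $\ell_2$ error of $\epsilon_1$ over all $N$ amplitudes I need per-amplitude error $O(\epsilon_1/\sqrt{N})$, and because $g$ is entire, uniform approximation to this accuracy costs degree only $O(\log(\sqrt{N}/\epsilon_1))$. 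Converting the uniform per-amplitude bound into the $\ell_2$ guarantee the VE definition demands is a technical point I would handle here.

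I would then install the skip connection via \cref{adl:lemma:vector_sum} with $\tau=1/2$, combining $U_\psi$ (scaling $1$) with the activation VE (scaling $1/\lnorm{f(W\ket{\psi}_n)}_2$); this produces a VE of $\ket{\psi_f}_n/\mathcal{N}$ with scaling factor $2/\mathcal{N}$. The crux is then establishing \emph{norm preservation}, $\mathcal{N}=\Omega(1)$, where the skip connection is indispensable. By the reverse triangle inequality $\mathcal{N}=\lnorm{\ket{\psi}_n+f(W\ket{\psi}_n)}_2\ge 1-\lnorm{f(W\ket{\psi}_n)}_2$, and the sublinearity bound $|\erf(4x/5)|\le \tfrac{8}{5\sqrt{\pi}}|x|$ (from $|\erf(t)|\le \tfrac{2}{\sqrt{\pi}}|t|$) gives $\lnorm{f(W\ket{\psi}_n)}_2\le \tfrac{8}{5\sqrt{\pi}}\lnorm{W\ket{\psi}_n}_2\le \tfrac{8}{5\sqrt{\pi}}<1$, so $\mathcal{N}\ge 1-\tfrac{8}{5\sqrt{\pi}}\approx 0.097>0$. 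This is exactly where the constant $4/5$ earns its keep -- it forces the activation's slope at the origin below $1$ so the residual term dominates; without the skip term one would only have $\mathcal{N}=\lnorm{f(W\ket{\psi}_n)}_2$, which can collapse to $0$ and blow up the runtime.

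Finally I would renormalize to scaling factor $1$ with fixed-point amplitude amplification (\cref{lemma:VE_fixed_amplitude_amplification_v2}); since the pre-amplification amplitude is $\Theta(\mathcal{N})=\Theta(1)$, this costs only $O(\log(1/\epsilon_1))$ repetitions of the forward block, and multiplying by the degree gives the stated $O(\log(\sqrt{N}/\epsilon_1)\log(1/\epsilon_1)(a+b+n+T_1+T_2))$. The residue is bookkeeping: propagating $\epsilon_0$ through the Lipschitz (hence non-expanding) activation and through the norm-divisions in \cref{lemma:matrix_vector_product}, \cref{adl:lemma:vector_sum}, and the amplification, then adding the fresh approximation/amplification error $\epsilon_1$. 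Each division by $\mathcal{N}\approx 0.097$ inflates the error by a constant of order $10$, and chaining these factors produces the final $712(\epsilon_0+\epsilon_1)$; the ancilla tally $2(a+b)+n+9$ comes from the QSVT reflections doubling the $(a+b)$ block-encoding ancillas, the $n$-qubit register used to build the diagonal encoding for the amplitude transformation, and a constant number of flag qubits for the sum and the amplification. I expect the genuine obstacle to be the norm-preservation step -- both recognizing that the residual connection is what keeps $\mathcal{N}$ bounded away from $0$ and pinning the activation scaling so that the bound is a true positive constant -- with the secondary difficulty being keeping the activation honest under subnormalization so that $P$ really acts on $v_i/\kappa$ to produce $f(v_i)$.
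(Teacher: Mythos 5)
Your pipeline coincides with the paper's own proof: \cref{lemma:matrix_vector_product} for the linear layer (with raw encoded amplitudes $(W\ket{\psi}_n)_i/\kappa$), an NLAT-style application of $\erf$ with the $\kappa$ absorbed into the polynomial, \cref{adl:lemma:vector_sum} for the skip connection, and \cref{lemma:VE_fixed_amplitude_amplification_v2} for the final normalization, together with the identical norm-preservation insight ($\lnorm{f(W\ket{\psi}_n)}_2 \le \frac{8}{5\sqrt{\pi}} < 1$ by the Lipschitz bound, hence the residual sum has norm bounded below by a constant). There is, however, one step that fails as written.

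You assume the activation stage outputs a VE for $f(W\ket{\psi}_n)/\mathcal{N}_2$ with scaling factor exactly $1/\mathcal{N}_2$, where $\mathcal{N}_2 := \lnorm{f(W\ket{\psi}_n)}_2$, and then combine it with $U_{\psi}$ (scaling $1$) at $\tau = 1/2$. The nonlinear amplitude transformation cannot deliver that scaling: implementing $P \approx \erf(\nu\,\cdot)$ by QSVT on the diagonal block-encoding of amplitudes costs a constant overhead $4\tilde{\gamma}$ with $\tilde{\gamma} \ge \max_{x}|P(x)/x| = \Theta(\nu)$, so what one actually obtains (\cref{lemma:VE_nlat_lemma_importance_weighting_approximated_function_erf_2}) is a $\left(\frac{16\nu}{\sqrt{\pi}\mathcal{N}_2}, \cdot\,, \cdot\right)$-VE with $\nu = 4\kappa/5$ --- a factor $\frac{16\nu}{\sqrt{\pi}} \approx 7$ (for $\kappa = 1$) larger than you assumed. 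Since \cref{adl:lemma:vector_sum} encodes $\frac{\tau}{\alpha}\ket{\psi}_n + \frac{1-\tau}{\beta}\ket{\phi}_n$, i.e.\ weights each branch by the \emph{inverse} of its scaling, your $\tau = 1/2$ sum actually encodes a vector proportional to $\ket{\psi}_n + \frac{\sqrt{\pi}}{16\nu}f(W\ket{\psi}_n)$ rather than $\ket{\psi}_n + f(W\ket{\psi}_n)$: the activation branch is suppressed by roughly $7\times$, so the encoded state is not $\ket{\psi_f}_n/\mathcal{N}$. The paper closes exactly this hole by first applying vector de-amplification (\cref{adl:lemma:vector_de_amplification}) to $U_{\psi}$, turning it into a $\left(\frac{16\nu}{\sqrt{\pi}}, a+2, \epsilon_0\right)$-VE so that both inputs to the sum carry the common factor $\frac{16\nu}{\sqrt{\pi}}$ and $\tau = 1/2$ yields the correct proportions; equivalently, you could keep $U_{\psi}$ at scaling $1$ and choose the $\nu$-dependent weight $\tau = \frac{\sqrt{\pi}}{\sqrt{\pi} + 16\nu}$. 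With either repair, the rest of your argument --- the $\Omega(1)$ lower bound on the resulting norm, the error chaining to a large constant times $\epsilon_0 + \epsilon_1$, and the $O\!\left(\log(\sqrt{N}/\epsilon_1)\log(1/\epsilon_1)(a+b+n+T_1+T_2)\right)$ cost --- goes through essentially as you wrote it, since the common constant only shifts the sub-normalization, and hence the amplification cost, by a constant factor.
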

The rigorous proof of this result is provided in~\hyperlink{proof:general_skip_norm_block}{\cref{proof:general_skip_norm_block}}, but it essentially follows from using our preceding results on matrix-vector multiplication, vector sums, and the extant results on layer normalization and applications of the error-function. The key insight enabling this proof is that in a residual block such as the one we have described, the forward norm of the vector is efficiently lower-bounded prior to every normalization layer. Without such skip connection, and the techniques we developed for working with vector-encodings (which enable effective tracking of the norm of a vector propagating through a network), the norm at the end of such a block could be arbitrarily small, leading to complexities which could be on the order of $\approx N^k$ (or even unbounded) for $k$-layer architectures -- completely intractable even for constant depth networks. As a consequence, we are able to prove the following result for multi-layer residual blocks. 
\begin{lemma}[Sequence of $k$ Residual Blocks]\label{lemma:adl:multiple_skip_norm_blocks}
Let $N = 2^n$.
Suppose we are given a unitary $U_{\psi}$ with circuit complexity $O(T_{1})$ such that it is a $(1, a, 0)$-VE for $\ket{\psi}_n$.
Let $k$ be an asymptotic constant.
Suppose we have a sequence of $k$ residual blocks (as per~\cref{adl:lemma:general_skip_norm_block}), with weights implemented by $k$ unitaries $\{U_{W_i}\}_i$ such that $U_{W_i}$ (with circuit complexity $O(T_2)$) is a $(1, b, 0)$-block-encoding for the $n$-qubit operator $W_i/2$, and $\forall i$, $\lnorm{W_i}_2 \le 1$. Then, we can prepare a $(1, 2^k(a + 2b + n +9), \epsilon)$-VE for the output of the $k$ residual blocks with $O(\log(\sqrt{N}/\epsilon)^{2k} (a + 2b + n + T_1 + T_2))$ circuit depth. 
\end{lemma}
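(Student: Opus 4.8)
The plan is to apply \cref{adl:lemma:general_skip_norm_block} iteratively, once per residual block, feeding the output VE of block $i$ in as the input VE of block $i+1$. Each $U_{W_i}$ is a $(1,b,0)$-block-encoding of $W_i/2$, so the skip-norm hypothesis is met with $\kappa = 2 \in [1,2]$, and the input to each stage is a $(1,\cdot,\delta_i)$-VE of exactly the required form. The whole argument then reduces to solving three coupled recursions --- for the ancilla count, the accumulated error, and the circuit depth --- and choosing a common per-block precision parameter so that the final error is $\epsilon$.

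First I would track the ancillas and the error together. Writing $a_i$ for the ancilla count entering block $i+1$ (so $a_0 = a$), \cref{adl:lemma:general_skip_norm_block} gives the affine recursion $a_{i+1} = 2(a_i + b) + n + 9 = 2a_i + (2b+n+9)$, whose closed form $a_k = 2^k a + (2b+n+9)(2^k-1) \le 2^k(a + 2b + n + 9)$ matches the claimed count. For the error, let $\eta$ be a common precision chosen for every block (the $\epsilon_1$ of \cref{adl:lemma:general_skip_norm_block}) and let $\delta_i$ be the VE error entering block $i+1$, with $\delta_0 = 0$. The lemma yields the geometric recursion $\delta_{i+1} = 712(\delta_i + \eta)$, so $\delta_k = \eta\sum_{j=1}^{k} 712^{j} = O(712^{k}\eta)$. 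Since $k$ is an asymptotic constant, $712^k = O(1)$, and setting $\eta = \Theta(\epsilon/712^k) = \Theta(\epsilon)$ forces $\delta_k \le \epsilon$ while changing $\log(1/\eta)$ and $\log(\sqrt N/\eta)$ by only additive $O(1)$ terms, so that $\log(\sqrt N/\eta) = O(\log(\sqrt N/\epsilon))$.

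Next I would unroll the depth recursion. Set $L := \log(\sqrt N/\eta)\log(1/\eta)$; since $\log(1/\eta) \le \log(\sqrt N/\eta)$ we have $L = O(\log(\sqrt N/\epsilon)^2)$. Writing $T^{(i)}$ for the VE depth after $i$ blocks (with $T^{(0)} = T_1$), \cref{adl:lemma:general_skip_norm_block} gives $T^{(i+1)} = O\!\left(L\,(a_i + b + n + T^{(i)} + T_2)\right)$. Because $a_i \le 2^k(a+2b+n+9) = O(a + 2b + n)$ for constant $k$, each overhead term $a_i + b + n + T_2$ is $O(a + 2b + n + T_2)$, and unrolling the linear recursion over the $k = O(1)$ stages gives $T^{(k)} = O\!\left(L^k(a + 2b + n + T_1 + T_2)\right) = O\!\left(\log(\sqrt N/\epsilon)^{2k}(a + 2b + n + T_1 + T_2)\right)$, as claimed.

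The main obstacle is the constant-factor error amplification of $712$ per block: the error compounds geometrically rather than additively across layers. This is harmless only because $k$ is fixed, so $712^k$ is a constant that can be absorbed into the choice of $\eta$; the point to verify carefully is that this constant enters only inside the logarithm as an additive $O(\log 712^k) = O(1)$ contribution, and hence does not affect the $\mathrm{polylog}(N/\epsilon)$ scaling --- only the geometric unrolling of $L$ across the $k$ stages does, producing exactly the $2k$-th power. One should likewise confirm that the factor $2^k$ appearing in the ancilla count and in every $a_i$ is treated as a constant consistently throughout, which is legitimate precisely under the stated assumption that $k$ is asymptotically constant.
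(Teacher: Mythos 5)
Your proof is correct and follows essentially the same route as the paper's: iteratively invoke \cref{adl:lemma:general_skip_norm_block} with each output fed to the next block, and solve the coupled ancilla, error, and depth recursions, using the constancy of $k$ to absorb the $2^k$ and geometric error-amplification factors. The only (immaterial) difference is your uniform per-block precision $\eta = \Theta(\epsilon/712^k)$, whereas the paper adaptively sets block $i$'s precision to the accumulated error $\delta_{i-1}$ (giving a per-block growth factor of $1424$ and $\epsilon_i = \epsilon/1424^{k-i}$); both choices yield the same $O\!\left(\log(\sqrt{N}/\epsilon)^{2k}(a + 2b + n + T_1 + T_2)\right)$ bound.
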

This result is proven in~\hyperlink{proof:adl:multiple_skip_norm_blocks}{\cref{proof:adl:multiple_skip_norm_blocks}}, and follows by repeatedly invoking~\cref{adl:lemma:general_skip_norm_block} with its output as the next input. It appears that the complexity of this result as a function of the number of layers $k$ is a fundamental limitation of any quantum algorithm. As described in greater detail in~\cref{adl:appendix:section:general_architecture_blocks}, for a unitary matrix (a linear operator) to enact a non-linear transformation on a vector, its definition must in general be input-dependent. Consequently, unless~\cref{adl:lemma:general_skip_norm_block} can be implemented with only a single copy of its input, it seems unlikely that this complexity can be avoided. This suggests that quantum computers are best suited for accelerating the wide and shallow regime, which is a popular regime for classical inference accelerators (since wide networks can be parallelized on classical hardware, but depth cannot be parallelized). Classically, with the aim of accelerating both inference and training, there are a range of techniques for compressing neural networks~\citep{cheng2017survey}. Moreover, classically, deep neural networks are much harder to accelerate than their shallow and wide counterparts (you can parallelize matrix-multiplications, but not consecutive layers). Consequently, there are a number of classical architectures striving for shallow networks (e.g.~\citet{zagoruyko2016wide}) which can serve as sources of inspiration for designing architectures best suited for quantum acceleration. We discuss this in greater detail in~\cref{adl:appendix:section:general_architecture_blocks}.

\section{Architectures}
\label{sec:architectures}
We will now use the architectural blocks derived above to prove the quantum complexity in inference for the architectures shown in~\cref{fig:adl:flagship_architecture} (a), which is then used to prove the complexity of the architecture in panel (b). A corollary is used to prove the complexity of the architecture in panel (c).

In all 3 regimes, the key architectural block shared in common is the sequence of $k$ residual convolutional blocks, which is enacted by combining~\cref{adl:lemma:qram_free_2d_conv_block_encoding} and~\cref{lemma:adl:multiple_skip_norm_blocks}. The architectures then only differ in how the input tensor is transformed, and in how the output of the $k$ residual convolutional blocks is processed. 
Consequently, we will now provide high-level intuition for the important sequence of $k$ residual convolution blocks. First,~\cref{lemma:adl:multiple_skip_norm_blocks} is simply obtained by chaining the result for a single residual block (given by~\cref{adl:lemma:general_skip_norm_block}) $k$ times, using the output of each invocation as the input for the next. \Cref{adl:lemma:general_skip_norm_block} itself is implemented by enacting each of the vector-encoding operations corresponding to the operations shown in~\cref{fig:adl:skip_block_architecture_simple}: matrix-vector multiplication via~\cref{lemma:matrix_vector_product}, non-linear activation via~\cref{lemma:VE_nlat_lemma_importance_weighting_approximated_function_erf_2}, vector sum via~\cref{adl:lemma:vector_sum}, and vector normalization via~\cref{lemma:VE_fixed_amplitude_amplification_v2}. Noting that~\cref{lemma:VE_nlat_lemma_importance_weighting_approximated_function_erf_2} and~\cref{lemma:VE_fixed_amplitude_amplification_v2} are straight-forward improvements over the results from prior work, we delegate them to the appendix. It is also worth noting that our selection of the $\erf$ activation function is not restrictive, and was selected for analytical convenience. This could easily be swapped with other activation functions compatible with~\cref{adl:lemma:nlat_of_VE}, e.g., $\text{GELU}$ or $\tanh$. 
Finally, the last key piece of intuition regards the dimension of the specific vectorized tensor which is input to the sequence of $k$ residual blocks. In Regime 1, this tensor is simply a fixed concatenation of the input tensor, and consequently for an input with vectorized dimension $O(N)$ has dimension $O(N)$. In Regimes 2 and 3, the input tensor is mapped through a tensor product $d$ times, resulting in an input to the residual block sequence of dimension $O(N^2)$ (when $d=2$).

Thus, our results in all 3 data-access regimes all follow from the general result, formally stated below: 
\begin{theorem}[General Multilayer Convolutional Network with Skip Connections]\label{adl:theorem:flagship_result_general}
Let $M=2^m$, $N=2^n = M^2$. Consider the neural network architecture shown in~\cref{fig:adl:flagship_architecture} (a). Let the input $X$ be a rank$-3$ tensor of dimension $4\times M\times M$ (with an R, G, B and null channel, where the null channel has all $0$s). Assume that $\lnorm{\text{vec}(X)}_2 =1$, and that we have access to a unitary $U_{X}$ that is a $(1, 0, 0)$-VE for the input in column-major layout $\ket{X}_{2+2m} = \sum_{i=0}^{4}\sum_{j=0}^{M-1}\sum_{k=0}^{M-1}X_{i, k, j}\ket{i}_2\ket{j}_m\ket{k}_m$. Assume that $U_X$ has $O(T_X)$ circuit complexity. 
As shown in the figure, we have a sequence of $k$ residual convolutional layers, where each convolutional layer has $16$ input channels, $16$ output channels (i.e., $16$ filters) with filter width and height $3$. 
I.e., each convolutional layer has $16\times 16 \times 3\times 3 =2304$ parameters. Assume that there is $0$ padding so the input and outputs always have the same dimension, and that there is a stride of $1$.
Suppose each convolutional layer has been regularized, so that its spectral norm is at most 1. Let $W$ represent the $N\times N$ full-rank linear layer applied in the final output block of the network, and assume that $\lnorm{W}_2 \le 1$. 
Let $C$ represent the number of output classes, and assume that $C=2^c$ (padding can be added otherwise). Let the overall network be represented by the function $h : \mathbb R^{4\times M\times M} \mapsto \mathbb R^C$. 
Let $\vec y = h(X)$ (and note that $\lnorm{\vec y}_1 = 1$, and $\vec y \in \mathbb R^{C}$).
Then, with $O(\log(\sqrt{N}/\epsilon)^{2k+1}( T_{X} + n^2))$ total circuit depth, and with $O(2^k n)$ ancillary qubits, we can draw a sample from an $\ell_1$-normalized $C$-dimensional vector $\tilde{\vec y}$ such that $\lnorm{\vec y - \tilde{\vec y}}_2 \le \epsilon$.
\end{theorem}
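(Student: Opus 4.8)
The plan is to prove the theorem by composing the architectural primitives established above, propagating the vector-encoding parameters and circuit depth through each stage of the network, and finally converting the output vector-encoding into a single computational-basis sample. The key observation reducing the whole problem to state preparation is that measuring an $\ell_2$-normalized state $\ket{v}=\sum_j v_j\ket{j}$ in the computational basis returns index $j$ with probability $|v_j|^2$; these are non-negative and sum to one, so after a pooling map $P$ that sums the squared amplitudes over the basis indices belonging to each of the $C$ classes, the outcome is exactly a draw from a valid $C$-dimensional probability vector. Hence it suffices to prepare a normalized state whose pooled squared amplitudes are $\ell_2$-close to $\vec y = h(X)$, and to measure it once.

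First I would instantiate the weight layers. For each of the $k$ convolutional layers I invoke \cref{adl:lemma:qram_free_2d_conv_block_encoding} to obtain a $(1,b,0)$-block-encoding of the regularized convolution scaled by $1/2$; because the channel count ($C=16$) and filter width ($D=3$) are held fixed, the ancilla count $b = 3 + 8D + 2\log(CD)$ is $O(1)$ and the stated per-layer depth $O(m^2 C^3 D^4 \log C\log D)$ collapses to $O(m^2)=O(n^2)$, which accounts for the $n^2$ term in the final cost. Each residual convolutional block then satisfies the hypotheses of \cref{adl:lemma:general_skip_norm_block} verbatim, with the convolution in the role of $W/\kappa$ and the sigmoid-like activation $f(x)=\erf(4x/5)$; crucially, the skip connection guarantees that the encoded norm is bounded below just before every layer-normalization, which is what keeps the block's complexity polylogarithmic.

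Next I would chain the blocks. Feeding the input encoding $U_X$ (a $(1,0,0)$-VE of complexity $O(T_X)$) into \cref{lemma:adl:multiple_skip_norm_blocks} yields a $(1,\,2^k(a+2b+n+9),\,\epsilon')$-VE for the output of the $k$ residual blocks at depth $O(\log(\sqrt N/\epsilon')^{2k}(T_X+n^2))$ using $O(2^k n)$ ancillae, since $a=0$ and $b=O(1)$ here. Because $k$ is an asymptotic constant, the per-block error inflation (the factor $712$ in \cref{adl:lemma:general_skip_norm_block}) accumulates to only $712^k=O(1)$, so setting each internal tolerance to $\Theta(\epsilon/712^k)$ keeps every logarithm equal to $O(\log(\sqrt N/\epsilon))$ and preserves the polylogarithmic scaling. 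For the terminal residual-linear-pooling block I would apply the dense full-rank layer $W$ through \cref{adl:theorem:matrix_vector_squared_product_informal} --- precisely the step that must avoid a Frobenius-norm penalty on $W$ --- at an extra depth of $O(T_\psi+n^2)$, combine it with the skip connection using the vector-sum primitive (\cref{adl:lemma:vector_sum}), and finally apply one round of vector normalization (\cref{lemma:VE_fixed_amplitude_amplification_v2}) to convert the resulting VE into the state to be measured. Provided the residual structure again lower-bounds the output norm by a constant, this single amplification costs only $O(\log(\sqrt N/\epsilon))$, contributing the one extra factor responsible for the exponent $2k+1$.

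The hard part will be the two coupled bookkeeping arguments on which the entire complexity rests. The first is showing that at each normalization the encoded norm $\mathcal N$ stays $\Omega(1)$, so that the normalization factors entering every amplitude-amplification step remain $O(\mathrm{polylog})$ rather than drifting toward $\sqrt N$; this is exactly where the residual skip connections are indispensable, since without them $\mathcal N$ could decay geometrically across layers and force the runtime up to $\approx N^k$. The second is translating the $\ell_2$ state-preparation error $\epsilon'$ of the final VE into the sampling error of \cref{def:approximate_sampling_based_classification}: writing $\ket{v}$ and $\ket{\tilde v}$ for the exact and prepared normalized states and $P$ for the $\ell_1$-contractive pooling, one has $\lnorm{\vec y-\tilde{\vec y}}_2 \le \lnorm{\vec y-\tilde{\vec y}}_1 \le \sum_j\big||v_j|^2-|\tilde v_j|^2\big| \le \lnorm{v-\tilde v}_2\,(\lnorm{v}_2+\lnorm{\tilde v}_2)=O(\epsilon')$ by Cauchy--Schwarz, so choosing $\epsilon'=\Theta(\epsilon)$ meets the target. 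Everything else is routine propagation of the VE parameters through the lemmas cited above.
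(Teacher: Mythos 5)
Your proposal follows essentially the same route as the paper's proof: block-encode each convolution via \cref{adl:lemma:qram_free_2d_conv_block_encoding} (constant ancillas and $O(n^2)$ depth once the channel count and filter width are fixed), chain the $k$ residual blocks via \cref{lemma:adl:multiple_skip_norm_blocks} to get a $(1,2^k(\cdot),\epsilon')$-VE at depth $O(\log(\sqrt N/\epsilon')^{2k}(T_X+n^2))$, then apply the full-rank layer through the matrix--vector-squared primitive, add the skip connection, normalize once, and sample with binning; the paper simply packages this last stage as \cref{adl:lemma:full_rank_linear_pooling_output_block} and cites it. The one place you genuinely diverge is the error translation from state preparation to sampling. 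The paper uses \cref{lemma:error_propogated_through_pooling}, whose bound $\lnorm{\pool_C(\ket{\phi})-\pool_C(\ket{\tilde\phi})}_2\le 2N\epsilon'/\sqrt{C}$ is dimension-dependent, forcing the internal tolerances of the final block to be set proportional to $\epsilon\sqrt{C}/N$ (harmless, since they enter only inside logarithms). Your Cauchy--Schwarz argument, $\lnorm{\vec y-\tilde{\vec y}}_1\le\sum_j\bigl||v_j|^2-|\tilde v_j|^2\bigr|\le\lnorm{v-\tilde v}_2\,(\lnorm{v}_2+\lnorm{\tilde v}_2)=2\epsilon'$, is dimension-free and tighter, letting you take $\epsilon'=\Theta(\epsilon)$ directly; both yield the stated $O(\log(\sqrt N/\epsilon)^{2k+1}(T_X+n^2))$ complexity.

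Two details you gloss over, one of which is a real gap. Minor: the input has $4$ channels but the convolutions expect $16$, so the paper first concatenates $\ket{X}$ with itself four times via \cref{adl:lemma:concatenation_of_vector_encodings}. Substantive: you write ``provided the residual structure again lower-bounds the output norm by a constant'' for the final block, but this is precisely the step that does not follow from the argument used inside the convolutional residual blocks. There, the lower bound comes from contraction: $\erf(4x/5)$ has Lipschitz constant $8/(5\sqrt\pi)<1$, so the activation branch has norm at most $0.91$ while the skip branch has norm $1$, and the sum cannot vanish. For the final block the nonlinearity is $g(x)=x^2$ and $\lnorm{Wg(\ket{\psi}_n)}_2$ can equal $1$ (e.g.\ $\ket{\psi}_n$ a basis vector and $W=-I$), so with a symmetric skip weight $\tau=1/2$ exact cancellation is possible and the normalization cost is unbounded. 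The paper's fix is the asymmetric weight $\tau=0.51$: writing $\tau=(1+\delta)/2$, one gets $\mathcal N_\gamma^2\ge(\tau-(1-\tau)\mathcal N_1)^2\ge\delta^2$ since $\mathcal N_1\le 1$. This asymmetric-weighting idea is the one ingredient your proposal would actually need to supply to be complete; everything else is correct.
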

\begin{proof}
We have a $4$ channel input and we want to map this to a $16$ channel input (by concatenating $\ket{X}_{2+2m}$ vector with itself 4 times). Let $\ket{X}_{4+2m} := \frac{1}{\sqrt{4}}\begin{pmatrix} \bra{X}_{2 + 2m} & \bra{X}_{2 + 2m}& \bra{X}_{2 + 2m} & \bra{X}_{2 + 2m}\end{pmatrix}^T$.
We can invoke~\cref{adl:lemma:concatenation_of_vector_encodings} with $U_X$ four times, obtaining a $(1, 0, 0)$-VE for $\ket{X}_{4+2m}$ with $O(T_{X})$ circuit complexity. Using 
\cref{adl:lemma:qram_free_2d_conv_block_encoding}, for each of the $i = 0, ..., k-1$ convolutions, we can obtain a $(1, 27, 0)$-block-encoding for $\mathcal C_i/2\lnorm{\mathcal C_i}_2$ (the matrix form of the corresponding convolution) with $O(m^2)$ circuit depth. Consequently, we can invoke~\cref{lemma:adl:multiple_skip_norm_blocks} to obtain $U_{\text{conv}}$ a $(1, 2^k(63 + n), \epsilon)$-VE for the $\ell_2$-normalized output of the sequence of $k$ residual blocks. Moreover, $U_{\text{conv}}$ has $O(\log(\sqrt{N}/\epsilon)^{2k}(n + T_{X}+ m^2))$ circuit depth. Then, we can invoke~\cref{adl:lemma:full_rank_linear_pooling_output_block} with $U_{\text{conv}}$ to draw a sample from some probability vector $\tilde{\vec y} \in \mathbb R^{C}$ such that $\lnorm{\tilde{\vec y} - \vec y}_2 \le \epsilon$ with $O(\log(\sqrt{N}/\epsilon)^{2k+1}( T_{X} + n^2))$ circuit depth and with $O(2^k n)$ ancilla qubits.
\end{proof}
An important point to consider is that in order for a unitary matrix (or more generally, any linear operator) to enact a non-linear transformation, its definition must depend on the vector it is being applied to. For instance, consider the simple example where we are given a vector $\vec x$, and we define $A := \text{diag}(\vec x)$. Then, $A \vec x = (\vec x)^2$ (with the square applied element-wise) which is clearly a non-linear transformation. Consequently, our algorithm for~\cref{adl:theorem:flagship_result_general} adaptively (and efficiently) constructs a new circuit on the fly for each new input -- this is accounted for in the result statement.

\subsection{Key Results under Differing Quantum Data Access Assumptions}
The feasibility of quantum random access memory, the primary method assumed in the literature for accessing classical data in quantum algorithms, is widely debated in the literature~\citep{jaques2023qram}. However, recent work~\citep{dalzell2025distillation} provides a promising path forward, addressing many of the limitations raised in~\citet{jaques2023qram}. Regardless, algorithms papers often fail to meaningfully address the memory assumptions they make, and so we include a comprehensive discussion of it in~\cref{adl:section:feasibility_of_QRAM_assumptions} highlighting the feasibility of the technology, and that importantly our QRAM assumptions are no stronger than the usual made in such algorithms papers. The key concept discussed in~\cref{adl:section:feasibility_of_QRAM_assumptions} is that any algorithm utilizing a QRAM device must consider the classical opportunity cost of using that device, which dictates the constraints placed on realizing a useful QRAM (e.g., for such purposes the physical QRAM device cannot simply be implemented in the circuit model).

\paragraph{Regime 1: Input and Network Use QRAM.}\label{adl:discussion:regime_1}

The primary purpose of the architecture we presented in Regime 1 is to show that quantum computers can implement multi-layer neural networks based on real architectures coherently, with reasonable input assumptions, and with cost polylogarithmic in the dimension of the network. As per the main-text, in this regime we assume that the matrix weights (in particular for the final full-rank linear layer) and vectorized input are provided via QRAM. The architecture for this regime is shown in~\cref{fig:adl:flagship_architecture} (a). Let the dimension of the vectorized input be $O(N)$. Since the input is provided via QRAM, $T_X$ as defined in~\cref{adl:theorem:flagship_result_general} is $T_X \in O(\text{polylog}(N))$ (see,~\cref{adl:subsection_feasibility_of_qraqm}). \textbf{Thus, for a constant number of layers $k$, the cost to perform inference (in accordance with~\cref{def:approximate_sampling_based_classification}) becomes} $O(\text{polylog}(\sqrt{N}/\epsilon)^{k})$. Please see~\cref{adl:appendix:discussion:regime_1} for a detailed discussion outlining important application areas where such input assumptions are practical (namely, where the input can be constructed in an amortized fashion online). Moreover, in~\cref{adl:appendix:discussion:regime_1} we also discuss considerations relating the receptive field of such architectures, and argue that existing techniques are insufficient to dequantize this result.

\paragraph{Regime 2: Network Stored in QRAM, Input Loaded Without QRAM.}\label{adl:discussion:regime_2}

The architecture in this regime is shown in~\cref{fig:adl:flagship_architecture} (b). The architecture contains $d$ paths of purely classical neural networks, which each operate on $O(N)$ dimensional (vectorized) inputs. These classical architectures are assumed to have $\tilde{O}(N)$ time complexity in terms of the input. These separate paths are then normalized, converted to quantum states, and then the Kronecker product of the result is taken. The result is fed into exactly the same architecture as in Regime 1. This architecture is inspired by bilinear neural networks~\citep{lin2015bilinear}. Consequently, to determine the cost of this architecture, we can again invoke~\cref{adl:theorem:flagship_result_general}. Here, we need to pay an $\tilde O(N)$ cost to load each of the input paths in as a quantum state (via brute-force~\citep{plesch2011quantum}),$T_X \in O(N)$. Consequently, we obtain an overall algorithmic complexity of $O(N\log(\frac{N^{d/2}}{\epsilon})^{2k})$, which for constant $k$ and $d$, simplifies to $\tilde O(N \log(1/\epsilon)^{2k})$.
When $d = 2$, the dimension after the tensor product is $N^2$. Consequently, the final linear layer contains a matrix multiplication of an $N^2 \times N^2$ matrix with an $N^2$ dimensional vector, which takes $\Omega(N^4)$ time. \textbf{Consequently, for a constant $k$, this architecture produces a quartic speedup for the inference problem defined in~\cref{def:approximate_sampling_based_classification} over exact classical computation}. When $d=1$, the speedup due to the final layer is instead quadratic. This speedup can be increased by setting $d$ to larger values.

\paragraph{Regime 3: No QRAM.}\label{adl:discussion:regime_3}
This architecture is identical to the one presented in Regime 2, only dropping the final full-rank linear block. 
In~\cref{adl:appendix:discussion:regime_3} we show that the architecture in~\cref{fig:adl:flagship_architecture} (c) can perform inference with a total $O(N \log(1/\epsilon)^{2k})$ circuit complexity. Since the dimension of the vector acted on by the 2D convolution is $O(N^2)$ (when d=2), the classical cost to compute this is $\Omega(N^2)$: showing \textbf{a quadratic speedup over an exact classical implementation}. The speedup can be made asymptotically larger by increasing $d$. We have a more detailed discussion of this regime in~\cref{adl:appendix:discussion:regime_3}.

\section{Conclusion}\label{sec:conclusion}
This work proposes a modular framework for accelerating classical deep learning inference using fault-tolerant quantum subroutines. 
Our approach offers direct quantum implementations of important neural network architectural blocks (such as convolutions, activation functions, normalization layers, and residual connections), and uses structured primitives such as quantum block-encodings.

In summary, we provide a number of novel theoretical contributions. We further develop the VE framework for quantum vector encodings. We derive a novel quantum algorithm for the multiplication of an arbitrary dense and full-rank matrix with the element-wise square of a given vector, which to the best of our knowledge, is the first such result which does not incur a Frobenius norm (and thus rank) complexity dependence. We provide a novel QRAM-free block-encoding of multi-filter 2D convolutions. We then prove the first end-to-end complexity guarantees for the coherent quantum acceleration of multi-layer neural network inference, under three QRAM regimes. In the first regime, we give complexity which is polylogarithmic in both the dimension of the input, and the number of parameters in the network. In the second, we show a quartic speedup over exact classical computation. In the third, we show a quadratic speedup.

\section{Future Work}
To the best of our knowledge, this is the first paper to implement multi-layer neural networks coherently on a quantum computer, and as such, many important open directions of research remain. Moreover, progress towards achieving a practically passive QRAM is important for realizing the speedups in the first two regimes. 
Moreover, exploring the connection between this work and the techniques utilized in scientific computing (e.g., quantum differential equation solvers, finite difference methods, etc~\citep{cao2013quantum, montanaro_quantum_2016, childs_highprecision_2020, berry2022quantum, jennings2024costsolvinglineardifferential, an2024fastforwardingquantumalgorithms, shang2024design, liu2021efficient, liu2023efficient, krovi2023improved, costa2023further,wu2025quantum}) would be interesting.
Most importantly, we wonder if it is possible to coherently enact sequences of non-linear transformations without an exponentially increasing circuit depth (and with polylogarithmic error-dependence), thereby allowing very deep multi-layer architectures to be quantized, but we suspect that this may be provably impossible (at least in general). Furthermore, it is conceivable that an approach enacting the non-linear transformations coherently with techniques based on QPE~\citep{mitarai2019quantumanalogdigitalconversion} might be able to enact a sequence of non-linearities without exponentially increasing circuit depth (albeit at the cost of an exponentially worse and exponentially decaying error-dependency). 
Combining such approaches may let quantum computers coherently accelerate architectures with depths of e.g., up to 25. Alternatively, one could combine sequences of coherent multi-layer architectural blocks with intermittent tomography to reset the depth cost, in essence fusing the techniques presented in our paper with those used in the prior work. It would also be worthwhile to explore accelerating UNet based architectures, as many of our techniques directly apply, and a distilled UNet-based diffusion model could potentially be quite shallow. Finally, while this work assumes our networks are trained classically, it would be interesting to explore how the techniques we develop could also be used to help accelerate training.

\subsection*{Acknowledgments}
AGR would like to thank Simon Benjamin and Sam Jaques for helpful discussions about QRAM. AGR acknowledges support from the Engineering and Physical Sciences Research Council (EPSRC) project Software Enabling Early Quantum Advantage (SEEQA) under grant EP/Y004655/1, and additionally acknowledges support from Quantum Motion. PWH acknowledges support from the EPSRC Doctoral Training Partnership (DTP) under grant EP/W524311/1, with a CASE Conversion Studentship in collaboration with Quantum Motion. PWH further acknowledges support from the Ministry of Education, Taiwan, for a Government Scholarship to Study Abroad (GSSA) and St. Catherine’s College, University of Oxford, for an Alan Tayler Scholarship. NG, LP, and PR are supported by the National Research Foundation, Singapore, and A*STAR under its CQT Bridging Grant and its Quantum Engineering Programme under grant NRF2021-QEP2-02-P05. NG also acknowledges support through the Research Excellence Scholarship from SandboxAQ. LP additionally acknowledges support from the Alice Postdoctoral Fellowship, awarded by the Centre for Quantum Technologies, National University of Singapore. 

\subsection*{Author Contributions}
AGR conceived of the project, led it, and developed the theory. PWH contributed to the design of the architectures considered, and produced the key figures. AGR and PWH drafted the original manuscript. PR supervised the project, and provided input on the theory. NG verified the proofs. LP helped position the paper over prior work. All authors contributed to the final manuscript.

\bibliography{bib}
\bibliographystyle{iclr2026_conference}

\newpage 

\appendix
\crefalias{section}{appendix}
\counterwithin{theorem}{section}
\counterwithin{lemma}{section}
\counterwithin{equation}{section}
\counterwithin{definition}{section}
\section*{Technical Appendices and Supplementary Material}

In~\cref{adl:section:qram} we present a summary of Quantum Random Access Memory (QRAM), which we subsequently use. In~\cref{adl:appendix:section:quantum_matrix_vector_arithmetic} we present a number of existing techniques which we require to manipulate vectors and matrices with quantum computers, and then use them to develop a number of new useful results for quantum matrix-vector arithmetic. In~\cref{adl:appendix:section:general_architecture_blocks}, we use the techniques developed in~\cref{adl:appendix:section:quantum_matrix_vector_arithmetic} to construct quantum-implementations of key architectural blocks. In~\cref{adl:section:feasibility_of_QRAM_assumptions}, we discuss the feasibility of QRAM. In~\cref{adl:appendix:section:architectures} we use the architectural blocks obtained in~\cref{adl:appendix:section:general_architecture_blocks} to derive end-to-end complexities for a number of architectures under different QRAM assumptions.

\section{Quantum Random Access Memory (QRAM)}\label{adl:section:qram}
Quantum Random Access Memory~\citep{giovannetti2008qram} is a widely assumed mechanism in the quantum computing literature for accessing data in a quantum computer. In this paper, we make a range of QRAM assumptions under different regimes of assumed feasibility. With the aim of enabling practical end-to-end speed-ups, it is important to explicitly state the different assumptions and consider the feasibility of each of these regimes. 

In this section, we will formally define QRAM, and state the assumed complexities. In~\cref{adl:section:feasibility_of_QRAM_assumptions}, we dive into a deeper discussion of the feasibility of our various QRAM assumptions, with the aim of providing a clear understanding of what sorts of end-to-end speed-ups our results can offer in practice.

\begin{definition}[QRAM for Classical Data]\label{adl:def:qram}
Let $N = 2^n$ and $D = 2^d$. Let $\ket{i}_n$ be any $n$-qubit standard basis vector, and let $x_i \in [D]$.
Then, a QRAM with $O(d N \log N)$ total qubits can implement the mapping, 
\begin{align}
    U\ket{i}_n\ket{0}_d = \ket{i}_n\ket{x_i}_d
\end{align}
with $O(d \log N)$ circuit depth. 
\end{definition}
As mentioned in a number of sources, e.g.,~\citet{hann2021resilience, giovannetti2008architectures} an $N$ qubit QRAM can be implemented with $O(\log N)$ depth complexity. Consequently, performing a sequence of $d$ of these (to implement each of the $d$-bits in each memory register), a circuit depth complexity of $O(d\log N)$ trivially follows.

\begin{definition}[QRAM for Quantum Data ~\citep{prakash2014quantum, kerenidis2017quantumrecommendationsystems,kerenidis2020qcnn}]\label{adl:def:qraqm}
Let $N = 2^n$, $M = 2^m$. Let $\ket{i}_n$ be any $n$-qubit standard basis vector. 
Allow $\ket{\psi_i}_m$ to be an arbitrary $m$-qubit normalized quantum states.
Then, a QRAM with $\tilde{O}(M N)$ total qubits, and $\tilde{O}(M N)$ classical pre-processing to construct the data-structure, can implement the mapping, 
\begin{align}
    U\ket{i}_n\ket{0}_m = \ket{i}_n\ket{\psi_i}_m
\end{align}
with $O(\log^2(N M))$ circuit depth.
\end{definition}
Importantly, as per~\citet{prakash2014quantum, kerenidis2017quantumrecommendationsystems,kerenidis2020qcnn} QRAM for quantum data can be implemented by a circuit (based on~\citet{grover2002creating}) with depth and width $O(\text{polylog}(MN))$ with access to a QRAM data structure (as per~\cref{adl:def:qram}) containing all the entries of each state in the quantum data (along with $O(\log M)$ copies for each of the sets of partial norms). Thus, if QRAM for classical data is feasible (as discussed in~\cref{adl:section:feasibility_of_QRAM_assumptions}), QRAM for quantum data is as well (with pre-processing to construct the appropriate data-structures).

In this work, we will use QRAM to describe QRAM for both quantum and classical data, and will make the distinction clear when it is relevant.

\section{Quantum Matrix-Vector Arithmetic}\label{adl:appendix:section:quantum_matrix_vector_arithmetic}

In this section, we formally derive a number of tools for quantum matrix-vector arithmetic.

\begin{lemma}[Product of block encodings \citep{gilyen2019quantum}]\label{lemma:product_of_block_encodings}
If $U$ is an $(\alpha,a,\delta)$-block-encoding of an $s$-qubit operator $A$, and $V$ is an $(\beta,b,\epsilon)$-block-encoding of an $s$-qubit operator $B$ then $(I_b\otimes U)(I_a\otimes V)$ is an $(\alpha\beta,a+b,\alpha\epsilon+\beta\delta)$-block-encoding of $AB$. 
\end{lemma}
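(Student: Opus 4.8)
The plan is to reduce the statement to two independent pieces: an \emph{exact} identity showing that the zero-ancilla block of the product unitary is precisely the product of the individual blocks, and a routine error-propagation bound. First I would introduce the exact (scaled) blocks $\tilde A := \alpha(\bra{0}^{\otimes a}\otimes I)U(\ket{0}^{\otimes a}\otimes I)$ and $\tilde B := \beta(\bra{0}^{\otimes b}\otimes I)V(\ket{0}^{\otimes b}\otimes I)$. By \cref{def:quantum_block_encoding} these satisfy $\lnorm{A - \tilde A} \le \delta$ and $\lnorm{B - \tilde B}\le\epsilon$, and since each is a $2^s\times 2^s$ sub-block of a unitary we have $\lnorm{\tilde A}\le\alpha$ and $\lnorm{\tilde B}\le\beta$; the subnormalization convention of the definition also gives $\lnorm{B}\le\beta$.

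The crux is the exact block-multiplication identity
\begin{equation}
(\bra{0}^{\otimes(a+b)}\otimes I)(I_b\otimes U)(I_a\otimes V)(\ket{0}^{\otimes(a+b)}\otimes I) = \frac{\tilde A\tilde B}{\alpha\beta}.
\end{equation}
To establish it I would fix a single consistent register ordering in which the $b$ ancillas of $V$, the $a$ ancillas of $U$, and the shared $s$-qubit system register are disjoint, so that $I_b\otimes U$ acts trivially on $V$'s ancillas and $I_a\otimes V$ acts trivially on $U$'s ancillas. I would then push the $\bra{0}^{\otimes b}$ and $\ket{0}^{\otimes b}$ projectors through the $U$ factor (which commutes with any operator on the $b$-register) and the $\bra{0}^{\otimes a}$, $\ket{0}^{\otimes a}$ projectors through the $V$ factor. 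Concretely, acting on an arbitrary system vector, $(I_a\otimes V)\ket{0}^{\otimes(a+b)}\otimes I$ leaves the $a$-register in $\ket{0}^{\otimes a}$ and sends the $b$-register and system through $V$; writing $V\ket{0}^{\otimes b}\otimes I = \ket{0}^{\otimes b}\otimes(\tilde B/\beta) + (\text{component orthogonal to }\ket{0}^{\otimes b})$, applying $I_b\otimes U$, and projecting with $\bra{0}^{\otimes(a+b)}$ annihilates the orthogonal component and collapses the rest to $(\tilde A/\alpha)(\tilde B/\beta)$.

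With the exact identity in hand the error bound is routine: writing $AB - \tilde A\tilde B = \tilde A(B-\tilde B) + (A-\tilde A)B$ and applying submultiplicativity of the operator norm gives
\begin{equation}
\lnorm{AB - \tilde A\tilde B} \le \lnorm{\tilde A}\,\lnorm{B-\tilde B} + \lnorm{A - \tilde A}\,\lnorm{B} \le \alpha\epsilon + \delta\beta,
\end{equation}
which, combined with the exact identity, shows that $\alpha\beta$ times the zero-ancilla block of $(I_b\otimes U)(I_a\otimes V)$ approximates $AB$ to error $\alpha\epsilon+\beta\delta$ using $a+b$ ancilla qubits, as claimed.

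I expect the main obstacle to be the register bookkeeping in the exact identity: the notation $I_b\otimes U$ and $I_a\otimes V$ implicitly places $U$'s and $V$'s ancillas in different tensor slots, so care is needed to pin down one consistent ordering of the three registers and to verify rigorously that the projectors commute past the identity factors and that the cross terms vanish by disjointness of the ancillas. Once the ordering is fixed, both the cancellation and the final triangle-inequality estimate are mechanical.
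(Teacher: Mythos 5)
Your proof is correct and takes essentially the standard route: the paper itself states this lemma without proof, importing it from \citet{gilyen2019quantum} and noting only the register convention that $U$ acts trivially on $V$'s ancillas and vice versa (which you correctly identify and handle), and your decomposition $AB - \tilde A\tilde B = \tilde A(B-\tilde B) + (A-\tilde A)B$ is exactly the one used in the cited proof and mirrored in the paper's own proof of the matrix-vector special case (\cref{lemma:matrix_vector_product}). The only caveat is that $\lnorm{B}\le\beta$ does not follow strictly from \cref{def:quantum_block_encoding} (which only yields $\lnorm{B}\le\beta+\epsilon$, leaving a harmless $\delta\epsilon$ cross term); this minor imprecision is inherited from the original reference and is standard.
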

In~\cref{lemma:product_of_block_encodings} we adopt the tensor product notation used in~\citet{gilyen2019quantum}; the tensor product in this lemma is used differently than it is used anywhere else in this paper.

We now present a standard result (see Lemma 1 of \citet{camps2020approximate} or Lemma 21 of \citet{chakraborty2023quantum}), and we include the proof for completeness, as it is the basis of a subsequent proof~\cref{adl:lemma:tensor_product_of_VEs}. In particular, our derivation closely follows that of Lemma 1 of \citet{camps2020approximate}.
\begin{lemma}[Tensor Product of Block-Encoded Operators]\label{adl:lemma:tensor_product_of_BEs}
Given a unitary $U_{A}$ which is an $(\alpha, a, \epsilon_0)$-block-encoding for $n$-qubit operator $A$ with $O(T_A)$ circuit complexity, and a unitary $U_{B}$ which is a $(\beta, b, \epsilon_1)$-block-encoding for $m$-qubit operator $B$ with $O(T_B)$ circuit complexity, we can obtain an $(\alpha\beta, a + b, \epsilon_0\beta+\epsilon_1\alpha + \epsilon_0\epsilon_1)$-block-encoding for $A\otimes B$ with $O(\max(T_A, T_B) + \max(n, b))$ circuit complexity. 
\end{lemma}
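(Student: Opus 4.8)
The plan is to assemble the tensor-product block-encoding from $U_A \otimes U_B$, whose only discrepancy relative to~\cref{def:quantum_block_encoding} is that its ancilla qubits are not grouped at the front, and then to bound the error by a direct perturbative expansion of the tensor product. First I would fix notation for the \emph{exact} matrices carried in the top-left blocks,
\begin{align}
\tilde A := \alpha(\bra{0}^{\otimes a}\otimes I_n)U_A(\ket{0}^{\otimes a}\otimes I_n), \qquad \tilde B := \beta(\bra{0}^{\otimes b}\otimes I_m)U_B(\ket{0}^{\otimes b}\otimes I_m),
\end{align}
so that by hypothesis $\lnorm{A - \tilde A}_2 \le \epsilon_0$ and $\lnorm{B - \tilde B}_2 \le \epsilon_1$, and, since each is a scaled sub-block of a unitary, $\lnorm{\tilde A}_2 \le \alpha$ and $\lnorm{\tilde B}_2 \le \beta$.

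For the construction, I note that $U_A \otimes U_B$ acts on the register order (ancilla$_A$, system$_A$, ancilla$_B$, system$_B$) of sizes $(a, n, b, m)$, and because $U_A$ and $U_B$ act on disjoint qubits it can be realized in depth $\max(T_A, T_B)$ by running the two circuits in parallel. To meet the block-encoding convention I would conjugate by the register permutation $S$ that swaps the $n$-qubit system$_A$ register with the $b$-qubit ancilla$_B$ register, producing $V := S(U_A\otimes U_B)S^{\dagger}$ on register order (ancilla$_A$, ancilla$_B$, system$_A$, system$_B$) of sizes $(a, b, n, m)$. Since $S$ fixes the ancilla$_A$ and system$_B$ registers, pushing the input projector through $S^{\dagger}$ turns $\ket{0}^{\otimes(a+b)}$ (grouped) into the interleaved $\ket{0}$ strings on ancilla$_A$ and ancilla$_B$ expected by $U_A\otimes U_B$, and the projected block factors as
\begin{align}
(\bra{0}^{\otimes(a+b)}\otimes I_{n+m})V(\ket{0}^{\otimes(a+b)}\otimes I_{n+m}) = \frac{\tilde A}{\alpha}\otimes\frac{\tilde B}{\beta} = \frac{\tilde A \otimes \tilde B}{\alpha\beta},
\end{align}
so $V$ is exactly an $(\alpha\beta, a+b, 0)$-block-encoding of $\tilde A \otimes \tilde B$.

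It then remains to bound $\lnorm{A\otimes B - \tilde A\otimes \tilde B}_2$. Writing $A = \tilde A + \Delta_A$ and $B = \tilde B + \Delta_B$ with $\lnorm{\Delta_A}_2 \le \epsilon_0$, $\lnorm{\Delta_B}_2 \le \epsilon_1$, and expanding,
\begin{align}
A\otimes B - \tilde A\otimes \tilde B = \tilde A\otimes \Delta_B + \Delta_A\otimes \tilde B + \Delta_A\otimes \Delta_B .
\end{align}
Applying the triangle inequality together with multiplicativity of the spectral norm under tensor products, $\lnorm{X\otimes Y}_2 = \lnorm{X}_2\lnorm{Y}_2$, and the four norm bounds above gives $\lnorm{A\otimes B - \tilde A\otimes \tilde B}_2 \le \alpha\epsilon_1 + \epsilon_0\beta + \epsilon_0\epsilon_1$, precisely the claimed error. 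Combined with the exact identity for $V$, this shows $V$ is an $(\alpha\beta, a+b, \epsilon_0\beta + \epsilon_1\alpha + \epsilon_0\epsilon_1)$-block-encoding of $A\otimes B$.

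For the complexity, the parallel application of $U_A$ and $U_B$ contributes depth $\max(T_A, T_B)$, while $S$ and $S^{\dagger}$ amount to sliding the $b$-qubit block past the $n$-qubit block, which an odd-even transposition (SWAP) network realizes in depth $O(\max(n, b))$; together these give the stated $O(\max(T_A, T_B) + \max(n,b))$. The error estimate is routine once the encoded block is identified, so I expect the main care to go into the qubit bookkeeping — verifying that conjugation by $S$ genuinely regroups the ancillas and that the projected operator is exactly $(\tilde A \otimes \tilde B)/(\alpha\beta)$ in the $(n,m)$ system ordering, rather than a permuted variant. The clean way to discharge this is exactly the factorization above: because $S$ acts only by relabeling system/ancilla registers and the grouped ancilla projectors are invariant under this relabeling, the conjugation leaves the encoded operator unchanged.
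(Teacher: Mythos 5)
Your proposal is correct and follows essentially the same route as the paper's proof: form $U_A\otimes U_B$, conjugate by the register permutation that moves the $b$-qubit ancilla block past the $n$-qubit system block (depth $O(\max(n,b))$), identify the projected block as $\tilde A\otimes\tilde B$ (up to the $\alpha\beta$ scaling convention), and bound the error by expanding $A\otimes B$ in terms of the encoded blocks plus perturbations. The only differences are cosmetic — you fold the normalization factors into $\tilde A,\tilde B$ whereas the paper keeps them separate — and both yield the same bound $\alpha\epsilon_1+\beta\epsilon_0+\epsilon_0\epsilon_1$.
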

\begin{proof}
The main idea is that $U_A\otimes U_B$ almost directly implements a block-encoding of $A\otimes B$, but the ancillas and the main computation registers are in the wrong order. To correct this, we need to swap the ancilla register of $U_B$ with the main register of $U_A$.

Consequently, define the operator $\Pi$ such that it swaps the $n$-qubit register with the $b$-qubit register (and leaves the other registers unchanged), so that all the ancilla registers precede the main registers. If $n \ge b$, $\Pi$ can be implemented by a sequence of $O(n/b)$ swaps, with each swap swapping $O(b)$ qubits in parallel. If $n < b$, then it can be implemented with $O(b/n)$ swaps. Thus, $\Pi$ has a circuit depth bounded by $O(\max(n /b, b/n)) \in O(\max(n, b))$. Then, $\Pi(\ket{0}_{a+b}\otimes I_{n+m}) = (\ket{0}_a\otimes I_n)\otimes(\ket{0}_b\otimes I_m)$, and $(\bra{0}_{a+b}\otimes I_{n+m})\Pi^{\dagger} = (\bra{0}_a\otimes I_n)\otimes (\bra{0}_b\otimes I_m)$.

Following \citet{camps2020approximate},  
define $\tilde A := (\bra{0}_a\otimes I_n)U_A(\ket{0}_a\otimes I_n)$, and   $\tilde B := (\bra{0}_b\otimes I_m)U_B(\ket{0}_b\otimes I_m)$. Let $E_A := A - \alpha \tilde A$, and let $E_B := B - \beta\tilde B$.
Define $V:=\Pi^{\dagger} (U_A\otimes U_B) \Pi$.
Then, $A\otimes B= (\alpha\tilde A + E_A)\otimes (\beta\tilde B \otimes E_B)$, and $(\bra{0}_{a+b}\otimes I_{n+m}) V(\ket{0}_{a+b}\otimes I_{n+m}) = \tilde{A}\otimes \tilde{B}$, so
\begin{align}
&\lnorm{A\otimes B - \alpha\beta (\bra{0}_{a+b}\otimes I_{n+m}) V(\ket{0}_{a+b}\otimes I_{n+m})}_2\\
    &=
\lnorm{
    (\alpha\tilde A + E_A)\otimes (\beta\tilde B \otimes E_B)
    -
    \alpha\beta \tilde{A}\otimes \tilde{B}
}_2\\
&\le 
\epsilon_0 \beta + \epsilon_1 \alpha + \epsilon_0\epsilon_1. 
\end{align}
\end{proof}

We now present a result from the literature allowing a block-encoding to have all of its singular values scaled by a constant value. We present the result nearly verbatim from Lemma 5 of \citet{wada2025heisenberg} (with trivial modifications to make it easier to invoke in our context), which presents the results of \citet{low2017hamiltonian,gilyen2019quantum} cleanly in the language of block-encodings. 
\begin{lemma}[Uniform Singular Value Amplification~\citep{wada2025heisenberg,low2017hamiltonian,gilyen2019quantum}]\label{lemma:adl:uniform_singular_value_amplification}
Let $\epsilon, \delta \in (0, 1/2)$, and let $\gamma > 1$. Let $U_{A}$ be an $(1, a, 0)$-block-encoding of the $n$-qubit operator $A$ with $O(T)$ circuit depth. Suppose $\lnorm{A}_2 \le (1-\delta)/\gamma$. Then, we can obtain a quantum circuit $V$ which is a $(1, a + 1, \epsilon)$-block-encoding for $\gamma A$ with $O(\frac{\gamma}{\delta}\log(\gamma/\epsilon)(T + a))$ circuit depth, and with $O(\text{poly}(\frac{\gamma}{\delta}\log(\gamma/\epsilon)))$ classical computation to determine the QSVT rotation angles.
\end{lemma}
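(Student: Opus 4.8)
The plan is to realize $V$ via the quantum singular value transformation (QSVT) of \citet{gilyen2019quantum}, applying to the singular values of $A$ a fixed odd polynomial $p$ that behaves like the linear amplification map $x \mapsto \gamma x$ on the relevant spectral window while remaining bounded on all of $[-1,1]$. Writing the singular value decomposition $A = \sum_i \sigma_i \ket{u_i}\bra{v_i}$, the hypothesis $\lnorm{A}_2 \le (1-\delta)/\gamma$ guarantees every $\sigma_i \in [0, (1-\delta)/\gamma]$, so it suffices to drive each $\sigma_i$ to $\gamma\sigma_i$ while leaving the singular vectors untouched.

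First I would establish the analytic ingredient: the existence of an odd polynomial $p$ of degree $d = O(\tfrac{\gamma}{\delta}\log(\gamma/\epsilon))$ satisfying (i) $|p(x)| \le 1$ for all $x \in [-1,1]$, and (ii) $|p(x) - \gamma x| \le \epsilon$ for all $|x| \le (1-\delta)/\gamma$. The key point is that on the target interval $\gamma x$ ranges only over $[-(1-\delta), 1-\delta]$, leaving a margin of width $\delta$ below the ceiling $\pm 1$; this slack is exactly what permits a uniformly bounded approximating polynomial, and standard Chebyshev-based constructions (as in \citet{low2017hamiltonian, gilyen2019quantum}) yield the claimed degree, with the $\gamma/\delta$ dependence controlling the amplification-versus-margin tradeoff and the $\log(\gamma/\epsilon)$ factor controlling the approximation precision.

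Given such a $p$, I would then invoke the QSVT theorem: since $p$ has odd parity, degree $d$, and $|p| \le 1$ on $[-1,1]$, QSVT applied to the $(1, a, 0)$-block-encoding $U_A$ produces a circuit $V$ encoding $p^{(SV)}(A) = \sum_i p(\sigma_i)\ket{u_i}\bra{v_i}$ in its top-left block. This circuit interleaves $d$ alternating applications of $U_A$ and $U_A^{\dagger}$ with $d$ projector-controlled phase rotations; each rotation checks the $a$-qubit ancilla flag (costing $O(a)$ depth) and uses a single extra qubit, accounting for the $a+1$ ancillas. The error bound follows immediately from (ii): $\lnorm{\gamma A - p^{(SV)}(A)}_2 = \max_i |\gamma\sigma_i - p(\sigma_i)| \le \epsilon$, so $V$ is the desired $(1, a+1, \epsilon)$-block-encoding of $\gamma A$. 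The total depth is $O(d(T + a)) = O(\tfrac{\gamma}{\delta}\log(\gamma/\epsilon)(T+a))$, and the phase angles are computed classically in $O(\mathrm{poly}(d))$ time by a standard angle-finding routine.

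The hard part lies entirely in the first step: certifying the degree of the amplification polynomial. The factors $\gamma$ and $\delta$ enter multiplicatively as $\gamma/\delta$ because pushing singular values up by $\gamma$ while respecting the unit-norm ceiling forces $p$ to have a steep but controlled slope over a window of relative width $\sim\delta$. Making this quantitative is where the real work resides, in contrast to the routine QSVT plumbing and the trivial spectral error bound; fortunately, it is supplied directly by the cited approximation-theoretic results.
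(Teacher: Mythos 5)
Your proposal is correct and follows essentially the same route as the paper: the paper's own proof is simply a citation of \citet{wada2025heisenberg,low2017hamiltonian,gilyen2019quantum} plus the remark that the $a$-controlled gate in each QSVT phase step decomposes into $O(a)$ elementary gates, and your write-up accurately reconstructs the standard argument inside those references (odd bounded amplification polynomial of degree $O(\tfrac{\gamma}{\delta}\log(\gamma/\epsilon))$, applied via QSVT with one extra ancilla, with the same depth accounting $O(d(T+a))$ and polynomial-time classical angle finding).
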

\begin{proof}
This is taken directly from~\citet{wada2025heisenberg,low2017hamiltonian,gilyen2019quantum}, simply noting that an $a$-controlled $X$ gate can be implemented by a sequence of $O(a)$ single and two-qubit gates.
\end{proof}

We now present a simple result which is just a special case of uniform singular value amplification~\citep{wada2025heisenberg,low2017hamiltonian,gilyen2019quantum} in the case where all the singular values of an encoded operator are either $0$ or $1/2$. This is done following the ideas of oblivious amplitude amplification (see~\citet{gilyen2019quantum}). 
\begin{lemma}[$\frac{1}{2}$ Oblivious Amplitude Amplification]\label{adl:naixu:lemma:oblivious_aa_one_half}
We are given a matrix $A \in \mathbb{C}^{N\times N}$, with singular values either $1$ or $0$.
Assume we have access to $U_A$ a $(2,a,0)$-BE of $A$ with $O(T)$ circuit depth.
One can construct $(1, a + 1, 0)$-BE of $A$ with $O(T)$ circuit depth, and with 3 calls to a controlled-$U$ circuit. 
\end{lemma}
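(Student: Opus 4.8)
The plan is to realize the amplification as an exact quantum singular value transformation (QSVT) with an explicitly chosen low-degree polynomial, exploiting the fact that the spectrum of the encoded matrix takes only two values. Since $U_A$ is a $(2, a, 0)$-block-encoding, the matrix it exactly encodes is $\tilde A := (\bra{0}_a \otimes I)U_A(\ket{0}_a\otimes I) = A/2$. Writing the singular value decomposition $A = \sum_i \sigma_i \ket{u_i}\bra{v_i}$ with $\sigma_i \in \{0,1\}$, the encoded matrix $\tilde A = \sum_i (\sigma_i/2)\ket{u_i}\bra{v_i}$ has singular values in $\{0, 1/2\}$. The goal is therefore to map the singular value $1/2 \mapsto 1$ and $0\mapsto 0$ while leaving the singular vectors untouched, which turns a block-encoding of $\tilde A$ into one of $A$ with normalization constant $1$.

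First I would choose the odd polynomial $p(x) = 3x - 4x^3 = \sin(3\arcsin x)$. Being odd it satisfies $p(0) = 0$; it satisfies $p(1/2) = 3/2 - 1/2 = 1$; and since $p(x) = \sin(3\arcsin x)$ for $x\in[-1,1]$ we have $|p(x)|\le 1$ on the whole interval, so it is admissible for QSVT. Because the singular values are exactly $0$ or $1/2$, these two exact interpolation conditions are all that is needed -- there is no approximation, so the resulting block-encoding has error exactly $0$, which is why a bespoke polynomial is required rather than a direct appeal to \cref{lemma:adl:uniform_singular_value_amplification} (that lemma needs $\lnorm{\tilde A}_2 \le (1-\delta)/\gamma$ strictly, which fails here since $\lnorm{\tilde A}_2 = 1/2$ sits exactly at the boundary for $\gamma = 2$).

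Next I would invoke the QSVT construction of \citet{gilyen2019quantum} for the odd degree-$3$ polynomial $p$ applied in the singular-value sense to $\tilde A$. This produces a unitary whose top-left block is $p^{(\mathrm{SV})}(\tilde A) = \sum_i p(\sigma_i/2)\ket{u_i}\bra{v_i} = \sum_{i:\sigma_i=1}\ket{u_i}\bra{v_i} = A$, i.e.\ a $(1, a+1, 0)$-block-encoding of $A$. The QSVT circuit for a degree-$3$ polynomial interleaves $3$ applications of the block-encoding ($U_A, U_A^\dagger, U_A$) with projector-controlled phase rotations; the latter are implemented with a single additional ancilla qubit, accounting for the $a+1$ ancillas and the three (controlled-)$U_A$ calls, each of $O(T)$ depth, for $O(T)$ total. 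Equivalently, this is exactly one round of oblivious amplitude amplification: composing the reflection $2\Pi - I$ (with $\Pi := \ket{0}_a\bra{0}_a \otimes I$) with $U_A$ drives the recursion $\sin\theta \mapsto \sin 3\theta$ on the encoded amplitude within each invariant two-dimensional block, and $\sin\theta = 1/2$ gives $\theta = \pi/6$, hence $\sin 3\theta = 1$.

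The main obstacle is ensuring the transformation is exact and correctly normalized in the non-Hermitian (partial-isometry) setting: one must use the singular-value (odd-parity) version of QSVT so that the left and right singular vectors are handled consistently and $p$ acts on singular values rather than eigenvalues, and one must verify that the chosen phase factors realize $p$ exactly so that the output error is genuinely $0$ rather than merely small. Confirming that the three QSVT phase angles for $p(x) = 3x-4x^3$ reproduce $p$ precisely, and that the flagged-subspace block is exactly $A$ with the garbage terms vanishing, is the only place where care is needed; everything else is routine bookkeeping of ancilla and gate counts.
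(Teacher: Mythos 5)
Your proposal is correct and is essentially the paper's own argument: your polynomial $p(x)=3x-4x^3$ is exactly $-T_3(x)$, the negated degree-$3$ Chebyshev polynomial the paper applies via QSVT to map the singular value $1/2\mapsto 1$ and $0\mapsto 0$ exactly, with the same ancilla and query counts. The only cosmetic difference is that the paper notes Chebyshev polynomials of the first kind are realized by QSVT with no classical angle-finding (hence zero error), whereas you phrase the same exactness via the $\sin(3\arcsin x)$ identity and one round of oblivious amplitude amplification.
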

\begin{proof}
Note that $T_3(x)=4x^3-3x$ satisfies the condition that $|T_3(x)|\leq 1$ for $x\in [-1,1]$ and $T_3(\frac{1}{2})=-1$. Therefore, one can achieve the task by implementing the function $-T_3(x)$ via QSVT and the block encoding. The first kind of the Chebyshev polynomial can be directly achieved without any classical processing to determine angle rotations, so one can construct the block encoding with no error.
\end{proof}

For completeness, we now re-derive an existing result on the linear combination of block-encoded matrices, directly following~\citet{gilyen2019quantum} (which presents the result of \citet{childs2012hamiltonian} in the context of block-encodings). 
\begin{lemma}[Linear Combination of Block-Encodings~\citep{childs2012hamiltonian,gilyen2019quantum}]\label{adl:lemma:lcu_of_BEs}
Suppose we are given a set of $D = 2^d$ unitaries $\{U_i\}_i$ such that each $U_i$ is an $(\alpha, a, \epsilon)$-block-encoding for $n$ qubit operator $A_i$, and each $U_i$ has a total of $O(T_0)$ single and two qubit gates. 
Define the vector $\vec b \in \mathbb C^{D}$ such that $\vec b =\begin{pmatrix}b_0 & b_1 & \hdots & b_{D-1} \end{pmatrix}^T$.
Define $\ket{b}_d = \sum_{j=0}^D \sqrt{b_j}\ket{j}_d$ and $\beta := \lnorm{\ket{b}_d}_2^2 = \lnorm{\vec b}_1$. We are given the $d$-qubit unitary $U_b$, with $O(T_1)$ single and two qubit gates, such that $U_b\ket{0}_d = \ket{b}_d/\lnorm{\ket{b}_d}_2$. Define $A := \sum_{j=0}^{D-1}b_j A_j$. 
Then, we can obtain a unitary $V$ with $O(d D T_0 + T_1)$ circuit depth which is an $(\alpha\beta, a + d, \alpha\beta\epsilon)$-block-encoding for $A$. 
\end{lemma}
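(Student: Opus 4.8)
The plan is to give the standard \emph{linear-combination-of-unitaries} (LCU) construction, recast in the block-encoding language, following \citet{childs2012hamiltonian,gilyen2019quantum}. The two ingredients are a \emph{prepare} operator acting on the $d$-qubit index register and a \emph{select} operator that applies the appropriate $U_j$ conditioned on that register. Concretely, I would take $U_b$ as the (right) prepare operator, so that $U_b\ket{0}_d = \frac{1}{\sqrt{\beta}}\sum_j \sqrt{b_j}\ket{j}_d$, and define the select operator $\mathrm{SEL} := \sum_{j=0}^{D-1}\op{j}{j}_d\otimes U_j$ on the combined $d+a+n$ qubit space, where the $a$-qubit ancilla register is shared across all the $U_j$. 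Writing $U_b^{*}$ for the entrywise complex conjugate of $U_b$ (which has identical gate count, and coincides with $U_b$ when the $b_j$ are real and nonnegative), the candidate block-encoding is the sandwich $V := \bigl((U_b^{*})^{\dagger}\otimes I_{a+n}\bigr)\,\mathrm{SEL}\,(U_b\otimes I_{a+n})$, treating the $d$ index qubits together with the $a$ block-encoding ancillas as the $a+d$ ancilla register of $V$, and the remaining $n$ qubits as the system.

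First I would account for the circuit depth: I implement $\mathrm{SEL}$ as a sequence of $D$ multiply-controlled applications of the $U_j$, where controlling each of the $O(T_0)$ gates of a given $U_j$ on the $d$-qubit index register costs a factor $O(d)$, so each controlled block has depth $O(dT_0)$ and the full select has depth $O(dDT_0)$; the two prepare circuits contribute $O(T_1)$, giving the claimed $O(dDT_0+T_1)$, and the ancilla count is $a+d$ as stated. Next I would compute the exact top-left block of $V$. Propagating $\ket{0}_d\ket{0}_a\ket{\phi}_n$ through $U_b$, then $\mathrm{SEL}$, then $(U_b^{*})^{\dagger}$, and projecting the index and ancilla registers back onto $\ket{0}$, the surviving amplitude on each branch $j$ is $\tfrac{1}{\beta}\sqrt{b_j}\cdot\sqrt{b_j}=b_j/\beta$; conjugating the left prepare is precisely what makes $(U_b^{*})^{\dagger}$ contribute the factor $\sqrt{b_j}$ rather than $\overline{\sqrt{b_j}}$, so the coefficient comes out as $b_j$ and not $|b_j|$. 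This yields the exact identity $(\bra{0}_{a+d}\otimes I_n)V(\ket{0}_{a+d}\otimes I_n) = \tfrac{1}{\beta}\sum_j b_j \tilde A_j$, where $\tilde A_j := (\bra{0}_a\otimes I_n)U_j(\ket{0}_a\otimes I_n)$ is the exact sub-block of $U_j$.

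Finally I would bound the error. Because the prepare and select operators are exact unitaries, the only error is inherited from the individual block-encodings. Using the hypothesis $\lnorm{A_j-\alpha\tilde A_j}_2\le\epsilon$ together with the triangle inequality,
\begin{align}
\lnorm{A - \alpha\beta\,(\bra{0}_{a+d}\otimes I_n)V(\ket{0}_{a+d}\otimes I_n)}_2
&= \lnorm{\sum_j b_j\bigl(A_j - \alpha\tilde A_j\bigr)}_2 \\
&\le \sum_j |b_j|\,\epsilon = \beta\epsilon \le \alpha\beta\epsilon,
\end{align}
where the final step uses $\alpha\ge 1$. This shows $V$ is an $(\alpha\beta, a+d, \alpha\beta\epsilon)$-block-encoding of $A$, as claimed.

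The construction is entirely standard, so I do not expect a genuine obstacle; the only subtlety I would flag is the phase bookkeeping for complex coefficients, namely using $U_b$ on one side and $U_b^{*}$ on the other so that the weights combine as $b_j$ rather than $|b_j|$. Everything else reduces to a direct unitary-algebra computation of the top-left block plus a one-line triangle-inequality bound, and the depth accounting for the multiply-controlled select.
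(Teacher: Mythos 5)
Your proof is correct and follows essentially the same route as the paper's: the same prepare--select--unprepare sandwich $V = (\text{prepare}^{\dagger}\otimes I_{a+n})\,S\,(\text{prepare}\otimes I_{a+n})$ with $S = \sum_{j}\op{j}{j}_d\otimes U_j$, the same $O(dDT_0+T_1)$ depth accounting via $d$-controlled decompositions of each $U_j$, and the same triangle-inequality error bound on the top-left block. The one place you are more careful than the paper is your use of $U_b^{*}$ on the left, so that complex coefficients combine as $\sqrt{b_j}\cdot\sqrt{b_j}=b_j$ rather than $\overline{\sqrt{b_j}}\sqrt{b_j}=|b_j|$; the paper sandwiches with $U_b^{\dagger}$ directly, which is only exact for real non-negative $b_j$, and its final bound of $\alpha\beta\epsilon$ also quietly relies on $\alpha\ge 1$ (the raw bound is $\beta\epsilon$), a dependence you at least state explicitly.
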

\begin{proof}
For each $j \in [D]$, let $\tilde A_j := (\bra{0}_a\otimes I_n)U_j(\ket{0}_a\otimes I_n)$, and let $E_j := A_j - \alpha \tilde A_j$.
Define $S := \sum_{j=0}^{D-1}\op{j}{j}_d\otimes U_j$. Note that $S$ can be implemented by a sequence of $D$ multi-controlled $U_j$ operators. Note that by using~\citet{saeedi2013linear}, a $d$ controlled gate targeting $1$ or $2$ qubits can be decomposed into a sequence of $O(d)$ single and two qubit gates. Consequently, each $d$-controlled $U_j$ has $O(dT_0)$ circuit depth in terms of single and two qubit gates. Thus, $S$ consists of a total of $O(d D T_0)$ single and two qubit gates. 
Then, define $V:= (U_b^{\dagger}\otimes I_{a + n}) S (U_b \otimes I_{a + n})$. 

Noting that $(\bra{0}_d \otimes I_{a+n}) V(\ket{0}_d \otimes I_{a+n}) = \frac{1}{\beta}\sum_{j=0}^{D-1}b_j U_j$. Using the fact that $\ket{0}_{a + d}\otimes I_n = (\ket{0}_d\otimes I_{a+n})(\ket{0}_a \otimes I_n)$, we then obtain
\begin{align}
    (\bra{0}_{a + d}\otimes I_n)V(\ket{0}_{a + d}\otimes I_n) 
    =
    \frac{1}{\beta}
    (\bra{0}_a \otimes I_n)
    (\sum_{j=0}^{D-1}b_j U_j)
    (\ket{0}_a \otimes I_n)
    =
    \frac{1}{\beta}
    \sum_{j=0}^{D-1}b_j \tilde A_j.
\end{align}
Consequently, 
\begin{align}
    \lnorm{
    A
    -
    \alpha \beta (\bra{0}_{a + d}\otimes I_n)V(\ket{0}_{a + d}\otimes I_n) 
    }_2
    &=
    \lnorm{
    \sum_{j=0}^{D-1}b_j (\alpha \tilde A_j + E_j)
    -
    \sum_{j=0}^{D-1} \alpha  b_j \tilde A_j
    }_2\\
    &=
    \lnorm{\sum_{j=0}^{D-1}b_j \alpha E_j}_2
    \le \alpha \sum_{j=0}^{D-1}|b_j|\lnorm{E_j}_2\\
    &\le \alpha\beta\epsilon.  
\end{align}
Thus, $V$ gives a $(\alpha\beta, a, \alpha\beta\epsilon)$-block-encoding for $A$, and has $O(d D T_0 + T_1)$ circuit depth. 
\end{proof}

The following is a standard result which has been used in various contexts, and is included for completeness.
\begin{lemma}[Block Encoding of Rank 1 Projector of Basis Vectors]\label{adl:lemma:block_encoding_of_basis_projector}
Let $n \in \mathbb N_{\ge 0}$, and let $N = 2^n$. Define $i \in [N]$ and $j \in [N]$. Then, we can get a unitary $U$ which is a $(1, 2, 0)$-block-encoding of the $n$ qubit operator $\op{i}{j}$. Moreover, $U$ has $O(n)$ circuit depth. 
\end{lemma}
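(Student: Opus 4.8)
The goal is to construct a $(1, 2, 0)$-block-encoding of the rank-one operator $\op{i}{j}$ for fixed basis indices $i, j \in [N]$, with $O(n)$ circuit depth. The plan is to decompose $\op{i}{j}$ into a product of simpler pieces, each of which either maps a computational basis state to another, or reflects about a single basis state, since these are the natural operations a short quantum circuit can realize. The central observation is that $\op{i}{j} = X_i \op{0}{0} X_j$, where $X_i$ denotes the tensor product of Pauli-$X$ gates that flips exactly the bits set in the binary representation of $i$ (so $X_i\ket{0}_n = \ket{i}_n$ and $X_i$ is its own inverse), and similarly for $X_j$. Both $X_i$ and $X_j$ are single-layer circuits of at most $n$ parallel single-qubit $X$ gates, hence $O(n)$ depth (in fact $O(1)$ depth, $O(n)$ gate count), and importantly they are unitary on the main register alone, requiring no ancillas.

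The remaining task is therefore to block-encode the rank-one projector $\op{0}{0}_n$, after which conjugation by the genuinely unitary maps $X_i$ and $X_j$ preserves the block-encoding structure exactly and costs only $O(n)$ additional depth. To block-encode $\op{0}{0}_n$ I would introduce two ancilla qubits and build a unitary whose top-left $N \times N$ block is exactly $\op{0}{0}_n$. A clean way is to note that $\op{0}{0}_n$ is a projector with singular values in $\{0,1\}$, so one can write a unitary $U$ acting on the ancillas and the register such that $(\bra{00}\otimes I_n)\, U\, (\ket{00}\otimes I_n) = \op{0}{0}_n$ with zero error; concretely, a circuit that, controlled on the register being in the all-zeros state (detected by a multi-controlled gate on the ancilla), rotates the ancillas appropriately, so that the amplitude surviving projection onto $\ket{00}$ on both sides is $1$ exactly when the register is $\ket{0}_n$ and $0$ otherwise. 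The two ancillas give the requisite $a=2$; the all-zeros detection is a multi-controlled-NOT, which by the same $O(n)$-gate decomposition of multi-controlled gates used elsewhere in the excerpt (e.g.\ via~\citet{saeedi2013linear}) has $O(n)$ depth.

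Assembling the pieces, I would set $U := (I_2 \otimes X_i)\, U_0\, (I_2 \otimes X_j)$, where $U_0$ is the two-ancilla block-encoding of $\op{0}{0}_n$ just described. Because $X_i$ and $X_j$ act only on the main register and are unitary there, we have $(\bra{00}\otimes I_n)\, U\, (\ket{00}\otimes I_n) = X_i\,(\bra{00}\otimes I_n) U_0 (\ket{00}\otimes I_n)\, X_j = X_i \op{0}{0}_n X_j = \op{i}{j}$, giving exactly a $(1, 2, 0)$-block-encoding. The total depth is $O(n)$ from the multi-controlled gate inside $U_0$ plus $O(n)$ from the two $X$-layers, which is $O(n)$ overall.

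The main obstacle I anticipate is not any single step but getting the block-encoding of $\op{0}{0}_n$ to have \emph{exactly} the right normalization $\alpha=1$ and \emph{zero} error using only two ancilla qubits, since a naive construction can leave the surviving amplitude equal to $1/\alpha$ for some $\alpha > 1$ or introduce a nonzero off-diagonal contribution from basis states other than $\ket{0}_n$. The care is in designing $U_0$ so that the projected top-left block is precisely the idempotent $\op{0}{0}_n$ rather than a scaled-down version; this is where I would spend the bulk of the verification, checking that the multi-controlled construction produces amplitude exactly $1$ on $\ket{0}_n$ and exactly $0$ on every other basis state after projecting the ancillas onto $\ket{00}$ on both the left and the right.
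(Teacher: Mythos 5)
Your proposal is correct, and its skeleton is identical to the paper's: both proofs hinge on the observation that $\op{i}{j} = M_i \op{0}{0} M_j$ where $M_i, M_j$ are layers of Pauli-$X$ gates determined by the bits of $i$ and $j$, and both reduce the problem to block-encoding $\op{0}{0}_n$ and conjugating. The only place you diverge is in how $\op{0}{0}_n$ itself gets block-encoded. The paper takes a $(1,0,0)$-block-encoding of the Grover reflection $I - 2\op{0}{0}$ (an $(n-1)$-controlled $XZX$ gate, decomposed into $O(n)$ two-qubit gates via~\citet{saeedi2013linear}) and then forms $\op{0}{0} = \tfrac{1}{2}\bigl(I - (I - 2\op{0}{0})\bigr)$ by the LCU / sum-of-block-encodings result of~\citet{gilyen2019quantum}; the convenient accident is that the LCU coefficients sum to $1$, so no subnormalization appears. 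You instead build the block-encoding directly: a multi-controlled gate marks in the ancilla whether the register is in $\ket{0}_n$, so that projecting the ancillas onto $\ket{0}$ on both sides passes amplitude $1$ exactly on $\ket{0}_n$ and $0$ elsewhere. Your construction is more elementary and self-contained, and the exactness issue you flag at the end resolves itself automatically: because the condition being tested is a classical property of basis states, the surviving amplitudes are exactly $0$ or $1$ (for instance, apply $X$ to an ancilla and then an $n$-fold open-controlled $X$ that flips it back only when the register is all zeros), so $\alpha = 1$ and $\epsilon = 0$ with no tuning required. The paper's route buys reuse of standard machinery (reflection plus LCU) already cited elsewhere in the paper; yours avoids invoking LCU at all. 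Both need the same $O(n)$ multi-controlled-gate decomposition, so the complexities agree.
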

\begin{proof}
Following~\citet{jaques2023qram}, a $(1, 2, 0)$ block-encoding of the matrix $\op{0}{0}$, call it $V$, can be obtained with $O(n)$ circuit complexity. This follows by constructing a $(1, 0, 0)$ block-encoding of the Grover reflection operator, $I - 2\op{0}{0}$, and taking a linear combination with $I$ via the sum of block-encoding result of \citet{gilyen2019quantum}. The circuit complexity is dominated by reflection operator, which can be implemented by applying a $n-1$ controlled $XZX$ gate on the most significant qubit, controlled on the $0$ state of the other $n-1$ qubits. Using~\citet{saeedi2013linear} this can be decomposed into a sequence of $O(n)$ two-qubit gates. 
Decompose $i$ and $j$ into bits as, $i = i_0i_1\hdots i_{n-1}$, and $j = j_0j_1\hdots j_{n-1}$.
We now define two operators, $M_i := X^{i_0}\otimes X^{i_1}\otimes \hdots \otimes X^{i_{n-1}}$ and $M_j := X^{j_0}\otimes X^{j_1}\otimes \hdots \otimes X^{j_{n-1}}$. Clearly, $M_i \op{0}{0} M_j = \op{i}{j}$.
Then, since $(I_2 \otimes M_i) V (I_2 \otimes M_j) = \begin{pmatrix}\op{i}{j} & \cdot\\ \cdot & \cdot \end{pmatrix}$. Thus, $(I_2 \otimes M_i) V (I_2 \otimes M_j)$ is a $(1, 2, 0)$ block-encoding for $\op{i}{j}$. 
\end{proof}

We now present a simple result which helps intuitively visualize VEs as encoding vectors in a subspace. 
\begin{lemma}[Intuitive Picture of VE as a Vector Subspace Encoding]\label{adl:VE_as_vector_subspace_encoding}
Let $U_{\psi}$ be an $(\alpha, a, \epsilon)$-VE for $\ket{\psi}_n$. Define $\ket{E_{\psi}}_n := \ket{\psi}_n - \alpha (\bra{0}_a\otimes I_n)U_{\psi}\ket{0}_{a + n}$. Define the $a$-qubit operator $p_j^a := \op{j}{j}$. 
Then, 
\begin{align}
    U_{\psi}\ket{0}_{a+n} = \frac{\ket{0}_a\ket{\psi}_n - \ket{0}_a\ket{E_{\psi}}_n}{\alpha} + \sum_{j=1}^{2^a - 1}(p_j^a\otimes I_n)U_{\psi}\ket{0}_{a+n}
    =
    \begin{pmatrix}
        \frac{\ket{\psi}_n - \ket{E_{\psi}}_n}{\alpha}\\
        \vdots 
    \end{pmatrix}.
\end{align}
\end{lemma}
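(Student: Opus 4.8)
The plan is to prove the decomposition by resolving the identity on the $a$-qubit ancilla register and then isolating the all-zeros branch. First I would write $I_a = \sum_{j=0}^{2^a-1} p_j^a$, since the projectors $p_j^a = \op{j}{j}$ form a complete orthonormal resolution of the identity on the ancilla register. Tensoring with $I_n$ gives $I_{a+n} = \sum_{j=0}^{2^a-1}(p_j^a\otimes I_n)$, so inserting this operator in front of $U_{\psi}\ket{0}_{a+n}$ immediately yields $U_{\psi}\ket{0}_{a+n} = \sum_{j=0}^{2^a-1}(p_j^a\otimes I_n)U_{\psi}\ket{0}_{a+n}$. This already supplies the summation term appearing in the claimed identity, once the $j=0$ contribution is separated out from the rest.

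Next I would evaluate the $j=0$ term explicitly. Using the factorization $p_0^a\otimes I_n = (\ket{0}_a\otimes I_n)(\bra{0}_a\otimes I_n)$, we obtain $(p_0^a\otimes I_n)U_{\psi}\ket{0}_{a+n} = (\ket{0}_a\otimes I_n)\big[(\bra{0}_a\otimes I_n)U_{\psi}\ket{0}_{a+n}\big]$. By the definition $\ket{E_{\psi}}_n := \ket{\psi}_n - \alpha(\bra{0}_a\otimes I_n)U_{\psi}\ket{0}_{a+n}$, rearranging gives $(\bra{0}_a\otimes I_n)U_{\psi}\ket{0}_{a+n} = (\ket{\psi}_n - \ket{E_{\psi}}_n)/\alpha$. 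Substituting this back and using $(\ket{0}_a\otimes I_n)\ket{v}_n = \ket{0}_a\ket{v}_n$ produces the first summand $(\ket{0}_a\ket{\psi}_n - \ket{0}_a\ket{E_{\psi}}_n)/\alpha$, which establishes the first equality.

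Finally, for the second (column-vector) equality I would simply observe that in the standard-basis ordering the $j=0$ branch occupies the first $2^n$ rows of the vector $U_{\psi}\ket{0}_{a+n}$, and these rows are exactly $(\ket{\psi}_n - \ket{E_{\psi}}_n)/\alpha$, while the remaining branches $j\ge 1$ populate the lower rows, abbreviated by the vertical dots. This is purely a matter of reading off the block structure induced by the ancilla register and requires no further computation.

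There is no substantial obstacle here: the statement is an immediate consequence of the completeness relation on the ancilla together with the definition of the error vector $\ket{E_{\psi}}_n$. The only point demanding mild care is the bookkeeping of tensor-product ordering when factoring the $j=0$ projector as $(\ket{0}_a\otimes I_n)(\bra{0}_a\otimes I_n)$, ensuring the encoded vector lands in the first $2^n$ coordinates exactly as required by \cref{def:VE}.
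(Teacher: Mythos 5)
Your proof is correct and follows essentially the same route as the paper's: rearranging the definition of $\ket{E_{\psi}}_n$ to identify the $j=0$ ancilla branch, and combining it with the resolution of the identity $U_{\psi}\ket{0}_{a+n} = \sum_{j=0}^{2^a-1}(p_j^a\otimes I_n)U_{\psi}\ket{0}_{a+n}$. Your write-up merely makes explicit the factorization $p_0^a\otimes I_n = (\ket{0}_a\otimes I_n)(\bra{0}_a\otimes I_n)$ and the block-structure reading, which the paper leaves as ``trivial algebraic manipulation.''
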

\begin{proof}
$\ket{E_{\psi}}_n = \ket{\psi}_n - \alpha (\bra{0}_a\otimes I_n)U_{\psi}\ket{0}_{a + n}$ implies that $\ket{0}_a\ket{E_{\psi}}_n = \ket{0}_a\ket{\psi}_n - \alpha (p_0^j\otimes I_n)U_{\psi}\ket{0}_{a + n}$. The result follows trivially by algebraic maniuplation of $U_{\psi}\ket{0}_{a+n} = (\sum_{j=0}^{2^a-1}p_j^a \otimes I_n)U_{\psi}\ket{0}_{a+n}$.
\end{proof}
Intuitively, in the absence of error, the first $2^n$ entries of $U_{\psi}\ket{0}_{a+n}$ will contain the sub-normalized vector $\ket{\psi}_n/\alpha$.

We now state the following result from~\citet{rattew2023non} nearly verbatim, slightly improving the complexity. 
The following result is a tool essentially implementing $\ell_2$ layer normalization, follows directly from oblivious amplitude amplification (see e.g.,~\citet{gilyen2019quantum}), and is taken nearly verbatim from ~\citet{rattew2023non}.
\begin{lemma}[Vector Normalization, Lemma 18 of \citet{rattew2023non}]\label{lemma:VE_fixed_amplitude_amplification_v2}
Let $\epsilon_0 \in [0, 1/2]$, $\alpha \ge 1$, $a \in \mathbb N$, $\epsilon_1 > 0$. Let $\alpha'$ be a known bound such that $\alpha' \ge \alpha$. 
Given a unitary $U_{\psi}$, a $(\alpha, a, \epsilon_0)$-VE for the $\ell_2$-normalized quantum state $\ket{\psi}_n$ with circuit complexity $O(T_{\psi})$, we can construct a $(1, a + 4, 2(\epsilon_0 + \epsilon_1))$-VE for $\ket{\psi}_n$ with circuit complexity $O((T_{\psi} + a + n)\alpha' \log(1/\epsilon_1))$ and with $O(\alpha'\log(1/\epsilon_1))$ queries to a $U_{\psi}$ and $U_{\psi}^{\dagger}$ circuit. 
\end{lemma}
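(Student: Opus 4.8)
The plan is to recognize the statement as \emph{amplitude amplification of the good-flag amplitude}, where the twist is that the amplitude is only known to lie in a controlled range rather than being known exactly. Write $\ket{v}_n := (\bra{0}_a \otimes I_n)U_{\psi}\ket{0}_{a+n}$ for the exact vector encoded by $U_\psi$, set $p := \lnorm{\ket{v}_n}_2$, and let $\ket{\hat v}_n := \ket{v}_n/p$ be its normalization. The VE hypothesis gives $\lnorm{\ket{\psi}_n - \alpha\ket{v}_n}_2 \le \epsilon_0$, and since $\ket{\psi}_n$ is $\ell_2$-normalized the reverse triangle inequality yields $|1 - \alpha p| \le \epsilon_0$, hence $\alpha p \ge 1 - \epsilon_0 \ge 1/2$ using $\epsilon_0 \le 1/2$. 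Because $U_\psi$ is unitary we also have $p \le 1$. Together with $\alpha \le \alpha'$ this places the relevant amplitude in the interval $p \in [1/(2\alpha'),\, 1]$, a range bounded away from $0$ by a quantity we \emph{do} know. Writing $U_\psi\ket{0}_{a+n} = \ket{0}_a\ket{v}_n + \ket{\perp}$ with $\ket{\perp}$ supported on the orthogonal complement of the ``good'' subspace $\op{0}{0}_a \otimes I_n$, the goal is to rotate the good-flag amplitude from $p$ up to $\approx 1$ while leaving the direction $\ket{\hat v}_n$ inside the good subspace untouched.

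First I would apply amplitude amplification realized through the QSVT framework of \citet{gilyen2019quantum} underlying \cref{lemma:adl:uniform_singular_value_amplification,adl:naixu:lemma:oblivious_aa_one_half}. The operator $(\op{0}{0}_a\otimes I_n)\,U_\psi\,\op{0}{0}_{a+n}$ is rank one with its single nonzero singular value equal to $p$, so the entire procedure reduces to acting on this one singular value. Crucially, the fixed-factor \cref{lemma:adl:uniform_singular_value_amplification} cannot be invoked directly: it multiplies every singular value by a \emph{known} constant $\gamma$, but here the correct factor $1/p \approx \alpha$ is unknown (only $\alpha'\ge\alpha$ is available), and a fixed $\gamma$ would over- or under-shoot. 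Instead I would invoke the lower-bound (fixed-point) variant: an odd polynomial $P$ with $|P(x)|\le 1$ on $[-1,1]$ and $P(x)\in[1-\epsilon_1,1]$ for all $x\in[1/(2\alpha'),1]$, of degree $L = O(\alpha'\log(1/\epsilon_1))$, implemented by interleaving $L$ alternating calls to $U_\psi$ and $U_\psi^{\dagger}$ with projector-controlled phase rotations (reflections about $\ket{0}_{a+n}$ and about $\op{0}{0}_a\otimes I_n$). This yields a unitary $V$ using $O(\alpha'\log(1/\epsilon_1))$ queries to $U_\psi$ and $U_\psi^\dagger$; since each reflection about $\ket{0}_{a+n}$ costs $O(a+n)$ elementary gates and each query costs $O(T_\psi)$, the circuit depth is $O((T_\psi+a+n)\alpha'\log(1/\epsilon_1))$, and the phase-rotation/reflection gadgets require at most four additional ancilla qubits, giving the $a+4$ count. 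Because the good-subspace projector acts as $I_n$ on the data register, the amplification rotates purely in the two-dimensional plane spanned by the (normalized) good and bad components, so $(\bra{0}_{a+4}\otimes I_n)V\ket{0}_{a+4+n} = s\,\ket{\hat v}_n$ for a real scalar $s$ with $|1-s|\le 2\epsilon_1$.

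Finally I would collect the two error sources by the triangle inequality. From the input VE I get $\lnorm{\ket{\psi}_n - \ket{\hat v}_n}_2 \le \lnorm{\ket{\psi}_n - \alpha\ket{v}_n}_2 + |\alpha p - 1|\,\lnorm{\ket{\hat v}_n}_2 \le \epsilon_0 + \epsilon_0 = 2\epsilon_0$, and from the amplification $\lnorm{\ket{\hat v}_n - s\ket{\hat v}_n}_2 = |1-s| \le 2\epsilon_1$, so the output encoded vector satisfies $\lnorm{\ket{\psi}_n - s\ket{\hat v}_n}_2 \le 2(\epsilon_0+\epsilon_1)$, exhibiting $V$ as the claimed $(1, a+4, 2(\epsilon_0+\epsilon_1))$-VE. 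I expect the main obstacle to be the middle step: pinning down the amplifying polynomial of the correct degree $O(\alpha'\log(1/\epsilon_1))$ that saturates to $1$ uniformly across the \emph{entire} interval $[1/(2\alpha'),1]$ (so that it handles the unknown true value of $p$), and verifying that the procedure preserves the encoded direction and introduces no uncontrolled phase on $s$; the two triangle-inequality bookkeeping steps are then routine.
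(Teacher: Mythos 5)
Your proposal matches the paper's proof in all essentials: you derive the same lower bound $p \ge 1/(2\alpha')$ via the reverse triangle inequality with $\epsilon_0 \le 1/2$, reduce the problem to QSVT on the rank-one operator $\ket{v}\bra{0}$ (the paper constructs this as a $(1,a+2,0)$-block-encoding of $U_\psi\op{0}{0}$ projected down) with an odd sign-like polynomial of degree $O(\alpha'\log(1/\epsilon_1))$ saturating to $1$ on $[1/(2\alpha'),1]$ (the paper cites Corollary 6 of \citet{low2017hamiltonian}), and carry out the same $2\epsilon_0 + 2\epsilon_1$ triangle-inequality bookkeeping. The polynomial you flag as the main obstacle is exactly the off-the-shelf sign-function approximation the paper invokes, so your argument is correct and essentially identical to the paper's.
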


This implements vector normalization by boosting the scaling factor so the norm of the encoded vector is $1$, and all the padding entries are $0$ (up to logarithmic error). 

\begin{proof}\label{proof:VE_fixed_amplitude_amplification_v2}
Define $\ket{\phi}_n := (\bra{0}_a\otimes I_n)U_{\psi}\ket{0}_{a+n}$, $\mathcal N_{\phi} := \lnorm{\ket{\phi}_n}_2$, $\ket{\Phi}_n := \ket{\phi}_n/ \mathcal{N}_{\phi}$.
Then, $U_{\psi}$ is equivalently a $(\mathcal N_{\phi}, a, 0)$-VE for $\ket{\phi}_n/\mathcal N_{\phi}$. Using~\cref{adl:lemma:block_encoding_of_basis_projector}, we can get $U_0$ a $(1, 2,0)$-block-encoding of the $n +a $ qubit projector $\op{0}{0}$ with $O(n + a)$ circuit depth. Then, $V = (I_2 \otimes U_{\psi})U_0$ is a $(1, 2, 0)$-block-encoding for $U_{\psi}\op{0}{0}$, with $O(T_{\psi} + a + n)$ circuit complexity. Noting that $(\bra{0}_{2}\otimes I_{a + n}) V (\ket{0}_{2}\otimes I_{a+n})=U_{\psi}\op{0}{0}$, then $(\bra{0}_{2+a}\otimes I_n) V (\ket{0}_{2+a}\otimes I_n) = (\bra{0}_a\otimes I_n) U_{\psi}\op{0}{0}(\ket{0}_a\otimes I_n) = \op{\phi}{0}_a$, so
\begin{align}
\lnorm{\op{\phi}{0}_a
-
\bra{0}_{2+a}\otimes I_n) V (\ket{0}_{2+a}\otimes I_n)
}_2=0.
\end{align}
Thus, we have a $(1, a+2,0)$-block-encoding of $\op{\phi}{0}_a = \mathcal N_{\phi} \op{\Phi}{0}$. This object has singular value $\mathcal N_{\phi}$. Thus, we want to apply a polynomial approximation to this block-encoding, such that the error of the polynomial approximation is at most $\epsilon_1$ on the interval $[\mathcal{N}_{\phi}, 1]$. From Corollary 6 of~\citet{low2017hamiltonian}, we know that there exists an odd polynomial $P_k(x)$ with degree $k \in O(\frac{1}{\tau}\log(1/\epsilon_1))$ such that 
\begin{align}
\max_{x\in[-1, -\frac{\tau}{2}]\cup[\tau/2, 1]}|P_k(x) - \text{sign}(x)| \le \epsilon_1
\end{align}
and $\max_{x\in[-1,1]}|P_k(x)|\le 1$. Since $\mathcal{N}_{\phi} \ge \frac{1}{2\alpha} \ge \frac{1}{2\alpha'}$, we can set $\tau = \frac{1}{2\alpha'}$, guaranteeing that $P(\mathcal N_{\phi}) \ge 1 - \epsilon_1$. Consequently, we can invoke quantum singular value transformation (QSVT) \citep{gilyen2019quantum} with $P_k$, yielding $V_f$ a $(1, a + 4, \epsilon_1)$-block-encoding for $P(\mathcal N_{\phi} \op{\Phi}{0}) = c\op{\Phi}{0}$, where $1 \ge c \ge 1 - \epsilon_1$. Moreover, $V_f$ has $O(\frac{1}{\alpha'}\log(1/\epsilon_1)(T_{\psi} + a + n))$ circuit complexity.
Noting that
\begin{align}
\lnorm{
    \op{\psi}{0}
    -
   P(\mathcal N_{\phi} \op{\Phi}{0}) 
}_2 &=
\lnorm{
    \op{\psi}{0}
    -
    \op{\Phi}{0}
    +
    \op{\Phi}{0}
    -
   c\op{\Phi}{0}
}_2\\
& 
\le 
\lnorm{
    \op{\psi}{0}
    -
    \op{\Phi}{0}}_2
    +
    \lnorm{
    \op{\Phi}{0}
    -
   c\op{\Phi}{0}
}_2\\
&\le 
\lnorm{\ket{\psi}_n -\ket{\Phi}_n}_2 + \epsilon_1.
\end{align}
Moreover,
\begin{align}
    \| \ket{\psi}_n- \ket{\Phi}_n\|_2\leq& \| \ket{\psi}_n-\alpha \ket{\phi}_n\|_2+\| \alpha \ket{\phi}_n- \frac{1}{\mathcal{N}_{\phi}}\ket{\phi}_n\|_2\\
    \leq& \epsilon_0+\frac{1}{\mathcal{N}_{\phi}} \|\alpha\mathcal{N}_{\phi}\ket{\phi}_n- \ket{\phi}_n \|
    = \epsilon_0  + \frac{|\alpha\mathcal{N}_{\phi} - 1|}{\mathcal N_{\phi}}\lnorm{\ket{\phi}_n}_2\\
    \leq& 
    \epsilon_0
    +
    |\alpha\mathcal{N}_{\phi} - 1|.
\end{align}
Moreover, using the reverse triangle inequality with $\lnorm{\ket{\psi}_n - \alpha \ket{\phi}_n}_2\le \epsilon_0$, we get $|1 - \alpha \lnorm{\ket{\phi}_n}_2| = |1 -\alpha\mathcal N_{\phi}|\le \epsilon_1$, which implies that $1-\epsilon_0 \le \alpha\mathcal N_{\phi} \le 1 + \epsilon_0$. Consequently, $|\alpha\mathcal{N}_{\phi} - 1| \le \epsilon_0$, and so
\begin{align}
    \lnorm{\ket{\psi}_n- \ket{\Phi}_n}_2
    \le 2\epsilon_0. 
\end{align}
Thus,
\begin{align}
    \lnorm{
    \op{\psi}{0}
    -
   P(\mathcal N_{\phi} \op{\Phi}{0}) 
}_2
\le 
2\epsilon_0 + \epsilon_1.
\end{align}
Moreover, since $V_f$ is a $(1, a + 4, \epsilon_1)$-block-encoding for $P(\mathcal N_{\phi}\op{\Phi}{0})$,
\begin{align}
\lnorm{
P(\mathcal N_{\phi}\op{\Phi}{0})
-
(\bra{0}_{a + 4}\otimes I_n)V_f(\ket{0}_{a+4}\otimes I_n)
}_2 \le \epsilon_1.
\end{align}
Thus,
\begin{align}
\lnorm{
\op{\psi}{0}
-
(\bra{0}_{a + 4}\otimes I_n)V_f(\ket{0}_{a+4}\otimes I_n)
}_2 \le 2(\epsilon_0 + \epsilon_1).
\end{align}
\end{proof}

Sometimes it is necessary to increase the norm of the vector encoded in the subspace of a VE. This is equivalent to multiplying all of the entries in the encoded vector by a constant with value greater than or equal to one. The following lemma achieves the opposite: it allows the norm of the encoded vector to be shrunk by an arbitrarily large amount. This is equivalent to dividing all the entries in the encoded vector by a constant greater than or equal to one. 
It is worth noting that the following result is trivial and can almost certainly be further optimized, e.g., by removing the additional ancillary qubits added. 
\begin{lemma}[Vector De-Amplification]\label{adl:lemma:vector_de_amplification}
Let $\tau \ge 1$, $\alpha \ge 1$, $\epsilon \ge 0$. Given $U_{\psi}$ an $(\alpha, a, \epsilon)$-VE for $\ket{\psi}_n$, with circuit complexity $O(T)$, we can obtain $U_{\psi}'$ an $(\alpha \tau, a + 2, \epsilon)$-VE for $\ket{\psi}_n$ with circuit complexity $O(T + a)$. 
\end{lemma}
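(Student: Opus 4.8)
The plan is to shrink the amplitude of the encoded vector in the flagged subspace by a factor $1/\tau$ while simultaneously raising the recorded scaling factor from $\alpha$ to $\alpha\tau$, so that the reconstructed vector is left unchanged and the approximation error is transported verbatim. Concretely, write $\ket{v}_n := (\bra{0}_a\otimes I_n)U_\psi\ket{0}_{a+n}$ for the exact vector encoded by $U_\psi$, so that the VE hypothesis reads $\lnorm{\ket{\psi}_n - \alpha\ket{v}_n}_2 \le \epsilon$. Since $\tau \ge 1$ we have $1/\tau \in (0,1]$, so there is a valid single-qubit rotation $R$ with $R\ket{0}_1 = \tfrac{1}{\tau}\ket{0}_1 + \sqrt{1-\tau^{-2}}\,\ket{1}_1$.

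Next I would adjoin one fresh ancilla qubit, placed as a new most-significant ancilla (i.e.\ before the existing $a$ ancillas and the $n$-qubit main register), and define $U_\psi' := (R\otimes I_{a+n})(I_1\otimes U_\psi)$. Acting on $\ket{0}_{1+a+n}$ gives $U_\psi'\ket{0}_{1+a+n} = (R\ket{0}_1)\otimes(U_\psi\ket{0}_{a+n})$, and projecting onto the flagged subspace where all $a+1$ ancillas are $\ket{0}$ selects the $\tfrac{1}{\tau}\ket{0}_1$ branch together with the $\ket{0}_a$ branch of $U_\psi\ket{0}_{a+n}$. Hence the exact vector encoded by $U_\psi'$ is $(\bra{0}_{a+1}\otimes I_n)U_\psi'\ket{0}_{a+1+n} = \tfrac{1}{\tau}\ket{v}_n$, and no new approximation error is introduced because $R$ is exact.

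The verification is then immediate: with scaling factor $\alpha\tau$ we have $\alpha\tau\cdot\tfrac{1}{\tau}\ket{v}_n = \alpha\ket{v}_n$, so $\lnorm{\ket{\psi}_n - \alpha\tau\,(\bra{0}_{a+1}\otimes I_n)U_\psi'\ket{0}_{a+1+n}}_2 = \lnorm{\ket{\psi}_n - \alpha\ket{v}_n}_2 \le \epsilon$, while $\alpha\tau \ge 1$ since both factors are at least $1$. This shows $U_\psi'$ is an $(\alpha\tau, a+1, \epsilon)$-VE, and padding with one additional idle ancilla qubit yields the stated $(\alpha\tau, a+2, \epsilon)$-VE. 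Because the construction only prepends a single-qubit rotation to $U_\psi$, its circuit complexity is $O(T)$, comfortably within the stated (deliberately loose) $O(T+a)$ bound; the same effect can alternatively be obtained by invoking the tensor-product construction of~\cref{adl:lemma:tensor_product_of_VEs} against a one-qubit carrier, which accounts for the extra ancilla and the $O(a)$ slack.

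There is essentially no hard step here --- as the authors themselves note, the lemma is trivial. The only points requiring care are bookkeeping: ensuring the new flag qubit is counted inside the ancilla register so that the flagged-subspace projector $\bra{0}_{a+1}\otimes I_n$ actually picks up the $1/\tau$ factor, and confirming that the error bound is carried over unchanged rather than amplified --- which holds precisely because the $1/\tau$ shrinkage of the encoded amplitude is exactly cancelled by the $\tau$ growth of the scaling parameter.
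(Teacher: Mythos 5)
Your proof is correct, and it takes a genuinely simpler route than the paper's. The paper adjoins \emph{two} ancillas and uses an $a$-controlled $X$ gate (flagging the $\ket{0}_a$ branch of $U_{\psi}\ket{0}_{a+n}$) followed by a controlled $R_{1/\tau^2}$ rotation, so that the de-amplifying rotation acts \emph{only} on the branch carrying the encoded vector $\ket{\phi_0}_n$ and leaves the garbage branches $\ket{j}_a\ket{\phi_j}_n$, $j\neq 0$, untouched; this multi-controlled gate is exactly what produces the $O(a)$ term in the stated $O(T+a)$ complexity. You instead apply the rotation unconditionally to a single fresh ancilla in tensor product with everything else, and observe that the projection $\bra{0}_{1+a}\otimes I_n$ factorizes, picking up the $1/\tau$ amplitude regardless of what happens on the non-flagged branches. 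Since the VE definition (\cref{def:VE}) only constrains the flagged subspace, the two constructions certify the identical bound $\lnorm{\ket{\psi}_n - \alpha\tau\cdot\frac{1}{\tau}\ket{v}_n}_2\le\epsilon$, and your bookkeeping of the $\alpha\tau\ge 1$ requirement and the ancilla padding is sound. What your approach buys is a strictly better gate count, $O(T)$ plus a constant, and a one-line verification; what the paper's construction buys is that the content of the non-flagged subspace is preserved exactly, a property that is not needed for this lemma nor for its downstream use in \cref{adl:lemma:general_skip_norm_block} (where the output is fed into the vector-sum lemma, which again only inspects the flagged subspace), so your simplification is a genuine improvement consistent with the authors' own remark that the result ``can almost certainly be further optimized.''
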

\begin{proof}
Let $\ket{\phi_j}_n := (\bra{0}_a\otimes I_n)U_{\psi}\ket{0}_{a + n}$. Then, note that $U_{\psi}\ket{0}_{a + n} = \sum_{j=0}^{2^a - 1}\ket{j}_a\otimes \ket{\phi_j}_n$. By~\cref{def:VE}, we know that $\lnorm{\ket{\psi}_n - \alpha \ket{\phi_0}_n}\le\epsilon$. 

We introduce two single-qubit ancillas as the most significant bits, and then apply a multiple-controlled $X$ gate (with $a$ controls each activated by the $0$ state of each of the previous $a$ ancilla qubits) targeting the first newly added ancilla qubit. Using~\citet{saeedi2013linear} this can be implemented with $O(a)$ two-qubit gates. We then apply a controlled $R_{1/\tau^2}$ (as per~\cref{adl:def:r_tau_gate}) gate targeting the second new ancilla qubit, controlled on the first new ancilla. This yields the state,
\begin{align}
    \ket{1}_1(\frac{1}{\tau}\ket{0}_1 +\sqrt{1 - \frac{1}{\tau^2}}\ket{1}_1)
    \ket{0}_a \ket{\phi_0}_n
    +
    \ket{0}_1\ket{0}_1\sum_{j=1}^{2^a - 1}\ket{j}_a \ket{\phi_j}_n. 
\end{align}
We then apply a $X$ gate to the first ancilla qubit, and we call the $2 + a$-qubit unitary containing all the preceding operations $V$. Then, $U'_{\psi} := (V\otimes I_n)(I_2\otimes U_{\psi})$. Simple analysis thus shows that $(\bra{0}_{2+a}\otimes I_n)U'_{\psi}\ket{0}_{2+a+n} = \ket{\phi_0}_n/\tau$. Then,
\begin{align}
    \lnorm{
    \ket{\psi}_n
    -
    \alpha\tau
    (\bra{0}_{2+a}\otimes I_n)U'_{\psi}\ket{0}_{2+a+n}
    }_2
    =
    \lnorm{\ket{\psi}_n - \alpha \ket{\phi_0}_n}\le\epsilon. 
\end{align}
\end{proof}

\begin{definition}[Real Rotation Single Qubit Gate]\label{adl:def:r_tau_gate}
Let $0 \le \tau \le 1$. Then, define the following single-qubit gate:
\begin{align}
    R_{\tau} := 
    \begin{pmatrix}
        \sqrt{\tau} & -\sqrt{1 - \tau}\\
        \sqrt{1-\tau} & \sqrt{\tau},
    \end{pmatrix}.
\end{align}
\end{definition}

\begin{figure}
    \centering
    \includegraphics[width=0.48\linewidth]{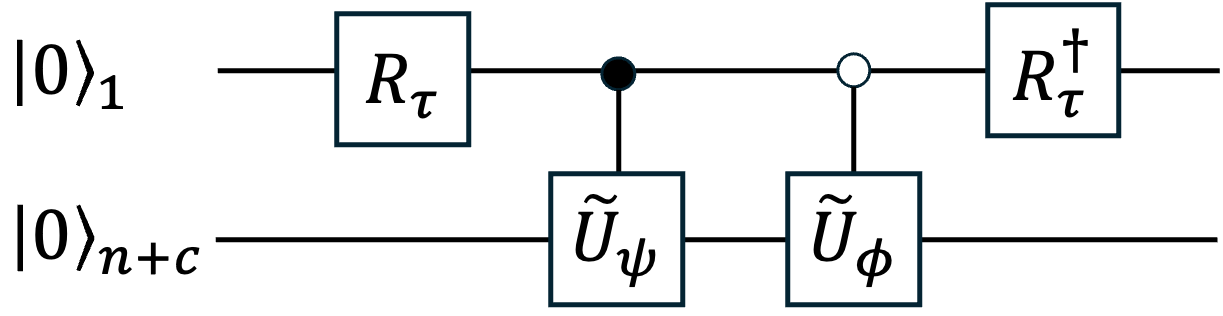}
    \caption{\textbf{Circuit for addition of VE encoded vectors.} Given two unitary matrices, $U_{\psi}$ which is a $(\alpha, a, \epsilon_0)$-VE for the $n$-qubit state $\ket{\psi}$, and $U_{\phi}$ which is a $(\beta, b, \epsilon_1)$-VE for the $n$-qubit state $\ket{\phi}$, define $c := \max(a, b)$. 
    We define $\tilde{U}_{\psi}$ by appropriately tensoring $U_{\psi}$ with $I_{c-a}$ and we define $\tilde{U}_{\phi}$ by appropriately tensoring $U_{\phi}$ with $I_{c-b}$, such that $\tilde{U}_{\psi}$ and $\tilde{U}_{\phi}$ both act on $n + c$ qubits. Then, the given circuit yields a VE of the sum of the encoded vectors, as shown in~\cref{adl:lemma:vector_sum}.} 
    \label{adl:fig:sum_of_VEs_lemma_figure}
\end{figure}

\begin{proof}[\hypertarget{proof:lemma:vector_sum}{\textbf{Proof of~\cref{adl:lemma:vector_sum}} (Vector Sum)}]\label{proof:lemma:vector_sum}
This result follows using a common techniques, see e.g., LCU~\citep{childs2012hamiltonian}, or the sum of block-encodings result~\citep{gilyen2019quantum}.
As per~\cref{adl:fig:sum_of_VEs_lemma_figure}, we will augment $U_{\psi}$ and $U_{\phi}$ so that they both act on $c = \max(a + b)$ ancilla qubits. Then, define the $n + c$ qubit states, $\ket{\tilde{\psi}}_{n + c} := U_{\psi}\ket{0}_{n + c}$. We will drop the subscripts on these states for the rest of the proof, as their dimension is clear from the context.
This block-encoding will be obtained with the circuit shown in~\cref{adl:fig:sum_of_VEs_lemma_figure}, and so we will now analyze the action of that circuit. First, we start with the state $\ket{0}_{1 + n + c}$, which we will write as $\ket{0}\ket{0}$, where the first register has one qubit, and the second register has the remaining $n + c$ qubits. We then apply $R_{\tau}$ (as defined in~\Cref{adl:def:r_tau_gate}) to the first qubit, yielding the state $(\sqrt{\tau}\ket{0} + \sqrt{1 - \tau}\ket{1})\ket{0}$. Next, we apply the controlled $U_{\psi}$ and $U_{\phi}$ gates, yielding, $\sqrt{\tau}\ket{0}\ket{\tilde{\psi}} + \sqrt{1 - \tau}\ket{1}\ket{\tilde{\phi}}$. Next, we apply $R_{\tau}^{\dagger} = \begin{pmatrix}\sqrt{\tau} & \sqrt{1 - \tau}\\-\sqrt{1-\tau} & \sqrt{\tau} \end{pmatrix}$ on the first qubit, yielding the output of the new VE, $V\ket{0} =  \ket{0}(\tau\ket{\tilde{\psi}} + (1-\tau)\ket{\tilde{\phi}}) + \sqrt{\tau(1-\tau)}\ket{1}(\ket{\tilde{\phi}} - \ket{\tilde{\psi}})$. 
Define $\ket{E_{\psi}} := \ket{\psi} - \alpha (\bra{0}^{\otimes (c)}\otimes I_n)\ket{\tilde{\psi}}$ and note that $\lnorm{\ket{E_{\psi}}}_2 \le \epsilon_0$. 
Similarly define $\ket{E_{\phi}}$, and note that $\lnorm{\ket{E_{\phi}}}_2 \le \epsilon_1$.
As a result, we can determine the properties of this VE by bounding the following,
\begin{align}
    &\lnorm{\ket{\overline{\Gamma}} - \frac{1}{\mathcal N}(\bra{0}^{\otimes (1 + c)}\otimes I_n)V\ket{0}_{1 + c + n}}_2\\
    &=
    \lnorm{\ket{\overline{\Gamma}} - \frac{1}{\mathcal N}(\bra{0}^{\otimes c}\otimes I_n)(\tau\ket{\tilde{\psi}} + (1-\tau)\ket{\tilde{\phi}})}_2\\
    &=
    \lnorm{\ket{\overline{\Gamma}} - \frac{1}{\mathcal N}\left(\frac{\tau}{\alpha}(\ket{\psi}-\ket{E_{\psi}}) + \frac{1 - \tau}{\beta}(\ket{\phi}-\ket{E_{\phi}})\right)}_2\\
    &=
    \lnorm{\frac{1}{\mathcal N}\left(\frac{\tau}{\alpha}\ket{E_{\psi}} + \frac{1 - \tau}{\beta}\ket{E_{\phi}}\right)}_2 \le \frac{1}{\mathcal N}\left(\frac{\tau \epsilon_0}{\alpha} + \frac{(1 - \tau)\epsilon_1}{\beta}\right) \\
    &\le
     \frac{1}{\mathcal N}\left(\frac{\epsilon_0}{\alpha} + \frac{\epsilon_1}{\beta}\right)
    \le 
    \frac{\epsilon_0 + \epsilon_1}{\mathcal N}.
\end{align}
where the final inequality comes from the definition of a VE imposing that $\alpha\ge 1$ and  $\beta \ge 1$. Thus, the unitary circuit $V$ is a $(\mathcal{N}^{-1}, 1 + a + b, \mathcal{N}^{-1}(\epsilon_0 + \epsilon_1))$-VE for $\ket{\overline{\Gamma}}$.
\end{proof}

\begin{proof}[\hypertarget{proof:matrix_vector_product}{\textbf{Proof of~\cref{lemma:matrix_vector_product} (Matrix Vector Product)}}]\label{proof:matrix_vector_product}
We now require a result allowing for matrix-vector products with our vector-encodings. This result is essentially a special case of the product of the standard product of block-encodings result (Lemma 53 of \citet{gilyen2019quantum}). As a result, the following proof closely follows that in~\citet{gilyen2019quantum}.

In this lemma, again following the notation of \citet{gilyen2019quantum} for tensor products, it is assumed that $U_A$ and $U_{\psi}$ act trivially on the other's ancillas. To be explicit, the tensor products in $(I_b\otimes U_A)(I_a\otimes U_{\psi})$ use a special definition only in this lemma. 
Let $\mathcal N := \lnorm{A\ket{\psi}_n}_2$.
We wish to upper-bound,
\begin{align}
    \xi &:=
    \lnorm{\frac{A\ket{\psi}_n}{\mathcal N} - \frac{\alpha\beta}{\mathcal N} (\bra{0}_{a + b}\otimes I_n)(I_b\otimes U_A)(I_a \otimes U_{\psi})\ket{0}_{a + b + n}}_2\\
    &=\frac{1}{\mathcal N}\lnorm{A\ket{\psi}_n - \alpha\beta (\bra{0}_{a + b}\otimes I_n)(I_b\otimes U_A)(I_a \otimes U_{\psi})(\ket{0}_{a+b}\otimes I_n)\ket{0}_n}_2
\end{align}
Then, directly from the proof of Lemma 53 in~\citet{gilyen2019quantum}, 
\begin{align}
\xi 
=
\frac{1}{\mathcal N}\lnorm{A\ket{\psi}_n - \alpha\beta \left[(\bra{0}_a\otimes I_n)U_A(\ket{0}_a\otimes I_n)\right]\left[
    (\bra{0}_b \otimes I_n)U_{\psi}(\ket{0}_b\otimes I_n)
    \right]\ket{0}_n}_2    
\end{align}
Let $\tilde A := \alpha(\bra{0}_a\otimes I_n)U_A(\ket{0}_a\otimes I_n)$ and let $\ket{\tilde \psi} := \beta(\bra{0}_b \otimes I_n)U_{\psi}(\ket{0}_{b + n})$. Then,
\begin{align}
    \xi
    &=
    \frac{1}{\mathcal N}\lnorm{
    A\ket{\psi}_n
    -
    \tilde{A}\ket{\tilde \psi}_n
    }_2
    =
    \frac{1}{\mathcal N}\lnorm{
    A\ket{\psi}_n
    -
    \tilde A\ket{\psi}_n
    +
    \tilde A\ket{\psi}_n
    -
    \tilde{A}\ket{\tilde \psi}_n
    }_2\\
    & \le 
    \frac{1}{\mathcal N}
    \left( 
    \lnorm{A - \tilde A}_2
    +
    \lnorm{\tilde A}_2 \lnorm{\ket{\psi}_n - \ket{\tilde \psi}_n}_2
    \right)
\end{align}
Noting that $\lnorm{\tilde A}_2 \le \alpha$, we then get
\begin{align}
    \xi \le (\epsilon_0 + \alpha \epsilon_1) / \mathcal{N}.
\end{align}
Consequently, $(I_b\otimes U_A)(I_a\otimes U_{\psi})$ gives a $(\alpha\beta/\mathcal N, a + b, (\epsilon_0 + \alpha \epsilon_1)/\mathcal{N})$-VE for $A\ket{\psi}_n/\mathcal N$.

\end{proof}

In the following lemma we derive a technical result handling the case where you have a vector encoding for some vector $\ket{\psi}$, and another vector of interest $\ket{\phi}$ is sub-encoded as $\ket{\psi} = \begin{pmatrix}\ket{\phi}/\beta \\  \cdot \end{pmatrix}$. Our result also handles the case where each vector is imperfectly encoded (i.e., encoded with error).
\begin{lemma}[Vector Sub-Encodings]\label{lemma:vector_sub_encodings}
Let $m, n$ be integers such that $m > n$.
Let $U_{\psi}$ be an $(\alpha, a, \epsilon)$-VE for $\ket{\psi}_m$, and let $\ket{\psi}_m \approx V_{\phi}\ket{0}_m$ (precisely, $\lnorm{\ket{\psi}_m - V_{\phi}\ket{0}_m}_2 \le \gamma$), where $V_{\phi}$ is a $(\beta, m - n, \delta)$-VE for $\ket{\phi}_n$.
Then, $U_{\psi}$ is an $(\alpha\beta, a + m - n, \delta + \beta(\epsilon + \gamma))$-VE for $\ket{\phi}_n$. 
\end{lemma}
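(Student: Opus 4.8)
The plan is to unfold \cref{def:VE} for the claimed output encoding and control the resulting error by a three-term triangle inequality. First I would factor the projection as $\bra{0}_{a+m-n}\otimes I_n = (\bra{0}_{m-n}\otimes I_n)(\bra{0}_a\otimes I_m)$, which holds once we fix the register ordering (the $a$ ancillas of $U_\psi$, then the $m-n$ sub-encoding ancillas, then the $n$ data qubits). Writing $\ket{\chi}_m := (\bra{0}_a\otimes I_m)U_\psi\ket{0}_{a+m}$ for the exact vector encoded by $U_\psi$ viewed as a VE for $\ket{\psi}_m$, this factorization shows that the exact vector encoded by $U_\psi$ viewed as a VE for $\ket{\phi}_n$ (with $a+m-n$ ancillas) is precisely $(\bra{0}_{m-n}\otimes I_n)\ket{\chi}_m$. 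By \cref{def:VE} applied to the hypothesis that $U_\psi$ is an $(\alpha,a,\epsilon)$-VE for $\ket{\psi}_m$, I record that $\lnorm{\ket{\psi}_m - \alpha\ket{\chi}_m}_2 \le \epsilon$.

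Next I would insert two intermediate vectors, $B := \beta(\bra{0}_{m-n}\otimes I_n)V_\phi\ket{0}_m$ and $C := \beta(\bra{0}_{m-n}\otimes I_n)\ket{\psi}_m$, between the target $\ket{\phi}_n$ and the quantity $\alpha\beta(\bra{0}_{m-n}\otimes I_n)\ket{\chi}_m$ that the VE error equals. The triangle inequality then gives
\begin{align}
&\lnorm{\ket{\phi}_n - \alpha\beta(\bra{0}_{m-n}\otimes I_n)\ket{\chi}_m}_2\\
&\quad\le \lnorm{\ket{\phi}_n - B}_2 + \lnorm{B - C}_2 + \lnorm{C - \alpha\beta(\bra{0}_{m-n}\otimes I_n)\ket{\chi}_m}_2.
\end{align}
The first term is at most $\delta$ directly from \cref{def:VE} applied to the hypothesis that $V_\phi$ is a $(\beta,m-n,\delta)$-VE for $\ket{\phi}_n$.

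For the remaining two terms I would pull the partial projection out of the norm, using that $\bra{0}_{m-n}\otimes I_n$ is a contraction (operator norm at most $1$). This yields $\lnorm{B - C}_2 = \beta\lnorm{(\bra{0}_{m-n}\otimes I_n)(V_\phi\ket{0}_m - \ket{\psi}_m)}_2 \le \beta\lnorm{V_\phi\ket{0}_m - \ket{\psi}_m}_2 \le \beta\gamma$ by the hypothesis $\lnorm{\ket{\psi}_m - V_\phi\ket{0}_m}_2\le\gamma$, and $\lnorm{C - \alpha\beta(\bra{0}_{m-n}\otimes I_n)\ket{\chi}_m}_2 = \beta\lnorm{(\bra{0}_{m-n}\otimes I_n)(\ket{\psi}_m - \alpha\ket{\chi}_m)}_2 \le \beta\epsilon$ by the bound recorded above. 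Summing the three contributions gives total error $\delta + \beta(\epsilon+\gamma)$; reading off the scaling $\alpha\beta$ and the ancilla count $a+m-n$ then identifies $U_\psi$ as an $(\alpha\beta, a+m-n, \delta+\beta(\epsilon+\gamma))$-VE for $\ket{\phi}_n$, as claimed.

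This is essentially a bookkeeping argument, so I do not expect a deep obstacle. The one step that requires genuine care is the operator factorization $\bra{0}_{a+m-n}\otimes I_n = (\bra{0}_{m-n}\otimes I_n)(\bra{0}_a\otimes I_m)$ together with the contraction property of the intermediate partial projection, since it is exactly this that licenses pulling the scalar $\beta$ and the projection outside the norms in the last two terms. I would therefore double-check the register-ordering convention to confirm the two projections compose in the stated order before committing to the triangle-inequality chain.
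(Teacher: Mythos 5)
Your proposal is correct and follows essentially the same route as the paper's proof: the same factorization $\bra{0}_{a+m-n}\otimes I_n = (\bra{0}_{m-n}\otimes I_n)(\bra{0}_a\otimes I_m)$, the same three error sources (the $\delta$ from $V_\phi$'s encoding error, the $\beta\epsilon$ from $U_\psi$'s encoding error, and the $\beta\gamma$ from the approximation $\ket{\psi}_m \approx V_\phi\ket{0}_m$), each controlled via the contraction property of the partial projection, and summed by the triangle inequality. The only difference is presentational — the paper substitutes the error vectors algebraically before bounding, while you insert intermediate points and apply the triangle inequality directly — so the two arguments are the same.
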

\begin{proof}
Let $b = m - n$.
First, define $\ket{E_{\psi}}_m := \ket{\psi}_m - \alpha \left(\bra{0}_a\otimes I_m \right)U_{\psi}\ket{0}_{a + m}$, and 
$\ket{E_{\phi}}_n := \ket{\phi}_n - \alpha \left(\bra{0}_{b}\otimes I_n \right)U_{\psi}\ket{0}_{b + n}$. By~\cref{def:VE}, $\lnorm{\ket{E_{\psi}}_m}_2 \le \epsilon$ and $\lnorm{\ket{E_{\phi}}_n}_2 \le \delta$.
Let $\ket{E_v}_m := \ket{\psi}_m - V_{\phi}\ket{0}_{m}$.
Now observe,
\begin{align}
    \left(\bra{0}_b \otimes I_n \right)\left( 
    \bra{0}_{a}\otimes I_m
    \right)U_{\psi}\ket{0}_{a + m}
    &=
    \left(\bra{0}_b \otimes I_n \right)
    (\ket{\psi}_m - \ket{E_{\psi}}_m)/\alpha\\
    &=
    \left(\bra{0}_b \otimes I_n \right)
    (V_{\phi}\ket{0}_m + \ket{E_v}_m - \ket{E_{\psi}}_m)/\alpha\\
    &=
    \left( 
        (\ket{\phi}_n - \ket{E_{\phi}}_n)/\beta 
        +
        (\bra{0}_b\otimes I_n)(\ket{E_v}_m - \ket{E_{\psi}})
    \right)/\alpha.
\end{align}
Consequently, since $(\bra{0}_{b}\otimes I_n)(\bra{0}_a\otimes I_m) = \bra{0}_{a + b}\otimes I_{n}$, 
\begin{align}
&\lnorm{\ket{\phi}_n - \alpha\beta (\ket{0}_{a + b}\otimes I_n) U_{\psi}\ket{0}_{a + b + n}}_2\\
&\le
\lnorm{\ket{E_{\phi}}_n}_2 + \beta \lnorm{\ket{E_{\psi}}_m}_2 
+ \beta \lnorm{\ket{E_{v}}_m}_2 
\le 
\delta + \beta (\epsilon + \gamma).
\end{align}
\end{proof}

\begin{lemma}[Tracing Out Qubits in Vector Sub-Encodings]\label{adl:lemma:tracing_out_qubit_in_VE_0_state}
Let $U$ be an $(\alpha, a, \epsilon)$-VE for $\ket{0}_b\ket{\psi}_n$. Then, $U$ is an $(\alpha, a + b, \epsilon)$-VE for $\ket{\psi}_n$. 
\end{lemma}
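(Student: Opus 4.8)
The plan is to unfold both vector-encoding conditions via \cref{def:VE} and reduce the claim to the elementary observation that prepending $\bra{0}_b$ to the measurement projector simply extracts the $\ket{\psi}_n$ component of $\ket{0}_b\ket{\psi}_n$, at no cost in error because the partial projector $\bra{0}_b \otimes I_n$ has operator norm one.

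First I would record the hypothesis explicitly. Writing the exact vector encoded by $U$ (viewed as acting on the $(b+n)$-qubit target register) as $\ket{\Phi}_{b+n} := \alpha\left(\bra{0}_a \otimes I_{b+n}\right)U\ket{0}_{a+b+n}$, the assumption that $U$ is an $(\alpha, a, \epsilon)$-VE for $\ket{0}_b\ket{\psi}_n$ is precisely $\lnorm{\ket{0}_b\ket{\psi}_n - \ket{\Phi}_{b+n}}_2 \le \epsilon$. Next I would establish the register-regrouping identity $\bra{0}_{a+b} \otimes I_n = \left(\bra{0}_b \otimes I_n\right)\left(\bra{0}_a \otimes I_{b+n}\right)$, which follows from $\bra{0}_{a+b} = \bra{0}_a \otimes \bra{0}_b$ together with associativity of the tensor product. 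Applying this identity, the exact vector encoded by $U$ when re-interpreted as an $(\alpha, a+b, \cdot)$-VE is $\alpha\left(\bra{0}_{a+b}\otimes I_n\right)U\ket{0}_{a+b+n} = \left(\bra{0}_b\otimes I_n\right)\ket{\Phi}_{b+n}$.

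Finally I would close with a one-line operator-norm estimate. Since $\left(\bra{0}_b \otimes I_n\right)\left(\ket{0}_b\ket{\psi}_n\right) = \ket{\psi}_n$, the target error factors as $\ket{\psi}_n - \left(\bra{0}_b \otimes I_n\right)\ket{\Phi}_{b+n} = \left(\bra{0}_b \otimes I_n\right)\left(\ket{0}_b\ket{\psi}_n - \ket{\Phi}_{b+n}\right)$, whose $\ell_2$-norm is bounded by $\lnorm{\bra{0}_b \otimes I_n}_2\,\lnorm{\ket{0}_b\ket{\psi}_n - \ket{\Phi}_{b+n}}_2 \le \epsilon$, using that $\bra{0}_b \otimes I_n$ has operator norm one. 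This is exactly the VE condition for $\ket{\psi}_n$ with parameters $(\alpha, a+b, \epsilon)$, completing the argument.

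This is a bookkeeping lemma and carries no genuine obstacle; the only subtlety is tracking the tensor-register grouping so that the two projectors compose as claimed, after which the error bound is immediate. I expect the write-up to be almost entirely the algebraic identity plus a single submultiplicativity step.
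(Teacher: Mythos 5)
Your proposal is correct and mirrors the paper's own proof: both rest on the register-regrouping identity $\bra{0}_{a+b}\otimes I_n = \left(\bra{0}_b\otimes I_n\right)\left(\bra{0}_a\otimes I_{b+n}\right)$ followed by the observation that the contraction $\bra{0}_b\otimes I_n$ cannot increase the $\ell_2$-norm of the error vector. The only cosmetic difference is that the paper names the error vector $\ket{E}_{b+n}$ while you name the encoded vector $\ket{\Phi}_{b+n}$; the argument is otherwise identical.
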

\begin{proof}
Let $\ket{E}_{b + n} := \ket{0}_b\ket{\psi}_n - \alpha(\bra{0}_{a}\otimes I_{n + b})U\ket{0}_{a+b+n}$. Since $\bra{0}_{a+b}\otimes I_n = (\bra{0}_b\otimes I_n)(\bra{0}_a\otimes I_{b + n})$, $(\bra{0}_{a+b}\otimes I_n)U\ket{0}_{a+ b+n} = \frac{1}{\alpha}(\ket{\psi}_n - (\bra{0}_{b}\otimes I_n)\ket{E}_{b + n})$. Thus, 
\begin{align}
    \lnorm{\ket{\psi}_n - \alpha (\bra{0}_{a+b}\otimes I_n)U\ket{0}_{a+ b+n} }_2  
    =
    \lnorm{(\bra{0}_{b}\otimes I_n)\ket{E}_{b + n}}_2
    \le \epsilon.
\end{align}
\end{proof}

\begin{proof}[\hypertarget{proof:lemma:tensor_product_of_VEs}{\textbf{Proof of~\cref{adl:lemma:tensor_product_of_VEs} (Vector Tensor Product)}}]\label{proof:lemma:tensor_product_of_VEs}
This result closely follows the derivation of the tensor product of block-encodings (\cref{adl:lemma:tensor_product_of_BEs}), which was a rederivation of Lemma 1 of \citet{camps2020approximate}.

$U_{\psi}$ acts on an $a$-qubit ancilla register and a $n$-qubit main register, while $U_{\phi}$ acts on an $b$-qubit ancilla register and a $m$-qubit main register. 

As per~\cref{adl:lemma:tensor_product_of_BEs}, define $\Pi$ to swap the $n$-qubit register with the $b$-qubit register acting trivially on the other two registers. Again, $\Pi$ has a circuit depth bounded by $O(\max(n /b, b/n)) \in O(\max(n, b))$.
Then, $(\bra{0}_{a+b}\otimes I_{n+m})\Pi^{\dagger} = (\bra{0}_a\otimes I_n)\otimes (\bra{0}_b\otimes I_m)$. Let $V = \Pi^{\dagger} (U_{\psi}\otimes U_{\phi})$. 
Let $\ket{E_{\psi}}_n = \ket{\psi}_n - \alpha(\bra{0}_a\otimes I_n)U_{\psi}\ket{0}_{a+n}$ and $\ket{E_{\phi}}_m = \ket{\phi}_m - \beta(\bra{0}_b\otimes I_m)U_{\phi}\ket{0}_{b+m}$.
Then, 
\begin{align}
(\bra{0}_{a+b}\otimes I_{n+m})\Pi^{\dagger} (
    U_{\psi}\otimes U_{\phi}) \ket{0}_{a + b +n + m} =
    \frac{1}{\alpha\beta}(\ket{\psi}_n - \ket{E_{\psi}}_n)\otimes (\ket{\phi}_m - \ket{E_{\phi}}_m),
\end{align}
and so,
\begin{align}
\lnorm{\ket{\psi}_n \ket{\phi}_m - \alpha\beta
(\bra{0}_{a+b}\otimes I_{n+m})\Pi (
    U_{\psi}\otimes U_{\phi}) \ket{0}_{a + b +n + m}
}_2
\le 
\epsilon + \delta + \epsilon\delta. 
\end{align}

\end{proof}

\begin{proof}[\hypertarget{proof:lemma:concatenation_of_vector_encodings}{\textbf{Proof of~\cref{adl:lemma:concatenation_of_vector_encodings} (Vector Concatenation)}}]\label{proof:lemma:concatenation_of_vector_encodings}
We now present the proof of a simple result on the concatenation of vectors stored in VEs. This result follows from a simple modification of LCU \citep{childs2012hamiltonian}. In essence, given a set of $D = 2^d$ vectors $\{\ket{\psi_j}_n\}_j$, we first create vector encodings of $\{\ket{j}_d\ket{\psi_j}_n\}_j$ and then take the resulting sum of the encoded vectors following LCU, yielding an encoding of $\begin{pmatrix} \bra{\psi_{0}}_n  & \hdots & \bra{\psi_{D-1}}_n\end{pmatrix}^{\dagger}$.

For all $j$, define $\ket{E_{\psi_j}}_{n} := \ket{\psi_j}_n - \alpha (\bra{0}_a\otimes I_n)U_{i}\ket{0}_{a+n}$.

First, let $j$ be $d$ bits, and let $j = j_0j_1\hdots j_{d-1}$. Define $X_j := X^{j_0}\otimes X^{j_1}\otimes \hdots \otimes X^{j^{d-1}}$.
Note that $\ket{j}_d = X_j\ket{0}_d$, and thus that $X_j$ is a $(1, 0,0)$-VE for $\ket{j}_d$. Then, we can invoke~\cref{adl:lemma:tensor_product_of_VEs} with $U_j$ and $X_j$ to obtain $V_j$, an $(\alpha_j, a, \epsilon)$-VE for $\ket{j}_d\ket{\psi_j}_n$ with $O(T + n)$ circuit complexity. Moreover, by inspecting~\cref{adl:lemma:tensor_product_of_VEs}, we find that $(\bra{0}_a\otimes I_{n+d})V_{j}\ket{0}_{a+d+n} = \frac{1}{\alpha_j}(\ket{j}_d\ket{\psi_j}_n - \ket{j}_d\ket{E_{\psi_j}}_n)$.

Additionally, define $S:= \sum_{j=0}^{D-1} \op{j}{j}_d\otimes  V_j$. This can be implemented by a sequence of $O(D)$ multi-controlled gates, each enacting $V_j$ when the control register is $\ket{j}_d$ (in the standard fashion of LCU~\citep{childs2012hamiltonian}). First, note that by using~\citet{saeedi2013linear} a multiple-controlled gate with $O(d)$ controls can be split into a sequence of $O(d)$ single and two-qubit gates. By splitting each of the $d$ control qubits into $a + d + n$ copies (with $O(\log(a + d + n))$ depth), we can control each gate in each layer of $U_j$ in parallel with $O(d)$ circuit depth. Since these ancillas can be uncomputed and traced out, we ignore them in the complexity analysis. Thus, each multi-controlled $V_j$ gate can be decomposed into a sequence of $O(d T)$ single and two-qubit gates.
Thus, $S$ has a total circuit depth of $O(dD T)$. Let $\hat H := H^{\otimes d}\otimes I_{d + n + a}$. Using $\bra{0}_{a+d}\otimes I_{n + d} = (\bra{0}_{d}\otimes I_{n+d})(I_d\otimes \bra{0}_a\otimes I_{d+a})$,
\begin{align}
    &(\bra{0}_{d + a}\otimes I_{n + d})\hat H S \hat H\ket{0}_{2d + a + n}\\
    &=
    (\bra{+}_{d}\otimes I_{n + d}) (I_d\otimes \bra{0}_a\otimes I_{n+d})
    \sum_{j=0}^{D-1}( \op{j}{j}_d \otimes V_{j})\ket{+}_d\ket{0}_{a+d+n}\\
    &=
    \frac{1}{D}\sum_{j=0}^{D-1}\frac{\ket{j}_d\ket{\psi_j}_d - \ket{j}_d\ket{E_{\psi_j}}_{n}}{\alpha_j}.
\end{align}
Then, noting that $\mathcal N^2 = \sum_{j=0}^{D-1}\frac{1}{\alpha_j^2}$, and that $\lnorm{\sum_{j=0}^{D-1}\ket{j}_d\ket{E_{\psi_j}}_{n}/\alpha_j}_2 \le \mathcal{N}\epsilon$,
\begin{align}
\lnorm{
\frac{\ket{\Psi}_{d+n}}{\mathcal N}
-
\frac{D}{\mathcal N}
(\bra{0}_{d + a}\otimes I_{n + d})\hat H S \hat H
\ket{0}_{2d + a + n}}_2
=
\frac{1}{\mathcal N}\lnorm{\sum_{j=0}^{D-1}\ket{j}_d\ket{E_{\psi_j}}_{n}/\alpha_j}_2
\le 
\epsilon. 
\end{align}
Thus, $\hat H S\hat H$ is a $(D/\mathcal N, d + a, \epsilon)$ for $\frac{\ket{\Psi}_{d+n}}{\mathcal N}$ with $O(d D T)$ circuit complexity. 
\end{proof}

\subsection{General Matrix-Vector-Squared Product}\label{adl:subsection:general_mat_vec_squared}
In this subsection, we will derive a procedure which given an \textit{arbitrary} matrix $W$ and quantum state $\ket{\psi}$, allows for a state proportional to the product of $W(\ket{\psi})^2$ to be obtained with complexity \textit{independent} of the Frobenius norm (and thus rank), and sparsity, of $W$. To the best of our knowledge, this is the first result which allows such a product without either a rank or sparsity condition on $W$. The key insight is to avoid ever constructing a block-encoding of the operator $W$, and directly query its columns weighted by the entries of the vector it is being applied to. In particular, at a high-level we construct two objects. Define the columns of $W = \begin{pmatrix}\vec w_0 & \hdots & \vec w_{N-1} \end{pmatrix}$, define the column norms $a_j := \lnorm{\vec w_j}_2$, and the normalized versions of the columns $\ket{w_j}_n = \vec w_j/a_j$. Additionally, define the state we are applying it to as $\ket{\psi}_n = \sum_j \psi_j \ket{j}_n$. First, we construct the normalized state $\sum_j \psi_j \ket{j}_n \ket{w_j}_n$. Clearly, this object has no Frobenius norm dependence. We would like to map all the vectors in the first register to the $\ket{0}$ state so that we have something resembling the matrix-vector product, and to do this we construct another operator. Note that the matrix $Q = \begin{pmatrix}a_0 I_n & \hdots & a_{N-1}I_n\\  & \vec 0 & \end{pmatrix}$ (i.e., the first $N$ rows are non-zero, and the rest are all zero) when applied to $\ket{\phi}_{2n} =\sum_j \psi_j \ket{j}_n \ket{w_j}_n$ yields $Q \ket{\phi}_{2n}= \ket{0}_n \otimes (W\ket{\psi}_n)$. However, this object has a spectral norm $\Omega(\lnorm{W}_F)$. Instead, we define $M := \begin{pmatrix}a_0 \psi_0 I_n & \hdots & a_{N-1}\psi_{N-1}I_n\\  & \vec 0 & \end{pmatrix}$ and note that $M$ can be shown to have $\lnorm{M}_2 \le 1$, and moreover, we subsequently show how a block-encoding of this operator can be efficiently obtained. Consequently, since $M \ket{\phi}_{2n} = \ket{0}_n \otimes (W (\ket{\psi}_n)^2)$, the result follows. The rest of this section simply derives the ingredients necessary to rigorously prove this intuition. 

\begin{definition}[$R_Y(t)$ Gate]\label{adl:def:r_y_gate}
Let $t \in \mathbb R$, and let $Y$ be the standard single-qubit Pauli-$Y$ gate. Then, define
\begin{align}
   R_Y(t) := e^{-i t Y} = \cos(t)I - i\sin(t)Y
   =
   \begin{pmatrix}
       \cos(t) & -\sin(t)\\
       \sin(t) & \cos(t)
   \end{pmatrix}.
\end{align}
\end{definition}

For completeness, we will now present a standard result allowing one to transfer digitally represented information to the amplitudes of a quantum state.
\begin{lemma}[$CR_Y(t)$ Gate]\label{adl:lemma_cry_gate}
Let $t \in \mathbb R$.
Let $Y$ be a standard Pauli-$Y$ gate. Let $\ket{a}_d$ be a $d$-bit standard basis vector, and let $\ket{\psi}_1$ be an arbitrary single-qubit quantum state. Then, we can define the gate $CR_Y(t)$ by the following action,
\begin{align}
    CR_Y(t)\ket{\psi}_1\ket{a}_d = (e^{-iat Y}\ket{\psi}_1)\ket{a}_d. 
\end{align}
In the event that $\ket{\psi}_1 = \ket{0}_1$, this action can be simplified to
\begin{align}
    CR_Y(t)\ket{0}_1\ket{a}_d = (\cos(at)\ket{0}_1 + \sin(at)\ket{1}_1)\ket{a}_d. 
\end{align}
Moreover, the $CR_Y(t)$ gate is implemented with $O(d)$ circuit depth. 
\end{lemma}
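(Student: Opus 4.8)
The plan is to implement $CR_Y(t)$ as a product of ordinary controlled $R_Y$ rotations, one per bit of the register $\ket{a}_d$, exploiting the fact that rotations about the same axis simply add. First I would write the integer $a$ in binary as $a = \sum_{k=0}^{d-1} a_k 2^k$, with $a_k \in \{0,1\}$ the $k$-th bit of $\ket{a}_d$. For each $k$ I would apply a single-qubit $R_Y(2^k t)$ gate (as defined in~\cref{adl:def:r_y_gate}) to the first register, controlled on the $k$-th qubit of the second register being in the state $\ket{1}_1$. Calling this controlled gate $G_k$, I would then set $CR_Y(t) := G_{d-1}\cdots G_1 G_0$.

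The correctness then follows by evaluating the action on an arbitrary computational basis state $\ket{a}_d$ of the control register. Conditioned on $\ket{a}_d$, the gate $G_k$ acts on the first register as $R_Y(2^k t)^{a_k} = e^{-i a_k 2^k t Y}$, and leaves $\ket{a}_d$ unchanged. Because all these factors are rotations about the same axis $Y$, they commute and their exponents add, so $\prod_{k=0}^{d-1} e^{-i a_k 2^k t Y} = e^{-i(\sum_k a_k 2^k) t Y} = e^{-iatY}$ on the first register. This is precisely the claimed action $CR_Y(t)\ket{\psi}_1\ket{a}_d = (e^{-iatY}\ket{\psi}_1)\ket{a}_d$, and since the whole construction is block-diagonal with respect to the standard basis of the control register, it extends to arbitrary superpositions of $\ket{a}_d$ by linearity — which is exactly what is needed later when $\ket{a}_d$ is supplied in a superposed state.

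For the specialization to $\ket{\psi}_1 = \ket{0}_1$, I would expand using~\cref{adl:def:r_y_gate}: $e^{-iatY}\ket{0}_1 = \cos(at)\ket{0}_1 - i\sin(at)\,Y\ket{0}_1$. Using $Y\ket{0}_1 = i\ket{1}_1$, the second term becomes $-i\sin(at)\cdot i\ket{1}_1 = \sin(at)\ket{1}_1$, yielding $\cos(at)\ket{0}_1 + \sin(at)\ket{1}_1$, as required. For the depth bound, each $G_k$ is a single controlled single-qubit rotation, decomposable into $O(1)$ elementary one- and two-qubit gates. All $d$ of them target the same qubit of the first register, so they cannot be parallelized and must be applied in sequence; but there are only $d$ of them, giving total circuit depth $O(d)$.

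I do not expect any genuine obstacle here, as this is a standard digitally-controlled-rotation construction. The only two points worth stating explicitly are that, since $Y$-rotations commute, the order in which the $G_k$ are applied is irrelevant, and that the block-diagonal structure in the control basis guarantees the gate behaves as intended on superposed control registers.
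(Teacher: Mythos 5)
Your proposal is correct and follows essentially the same route as the paper's proof: both decompose $CR_Y(t)$ into a sequence of $d$ controlled $e^{-i 2^k t Y}$ rotations, one per bit of the control register, using the additivity of same-axis rotations so that the exponents sum to $-iatY$. Your explicit verification of the $\ket{0}_1$ special case and of linearity over superposed control states only makes explicit what the paper leaves implicit.
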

\begin{proof}
    This is a standard result. This proof is included for completeness, and follows the one in~\citet{rattew2022preparing}. 
    Let $D = 2^d$.
    First, note that $CR_Y(t) = \sum_{a=0}^{D-1} e^{-i a t Y}\otimes \op{a}{a}$. Additionally, let $a = a_{d-1}a_{d-2}\hdots a_1 a_0 = a_{d-1}2^{d-1} +...+a_1 2 + a_0$. Then, 
    \begin{align}
        e^{-i a t Y} = e^{-i (a_{d-1}2^{d-1} +...+a_1 2 + a_0) t Y}
        =
        e^{-i a_{d-1}2^{d-1} t Y}\cdot \hdots \cdot e^{-i a_{1} t Y}e^{-i a_{0} t Y}.
    \end{align}
    Then, $CR_Y(t)$ can be implemented by applying a sequence of $d$ controlled $e^{-i 2^{j} t Y}$ gates (\cref{adl:def:r_y_gate}), targeting the first register, controlled on the $j^{th}$ bit of the second register. 
\end{proof}

We now present a result on obtaining a block-encoding of an arbitrary diagonal matrix whose entries are stored in QRAM. This is essentially a special case of Lemma 48 of \citet{gilyen2019quantum}, but by considering this special case moderate improvements in complexity can be obtained. 
\begin{lemma}[Quantum Block-Encoding of Diagonal Matrices from QRAM]\label{adl:lemma:diagonal_block_encoding_from_qram}
Let $N = 2^n$. We are given a set of $N$ real coefficients, $\{a_j\}_j$ such that $\forall j, |a_j| \le 1$. Assume that each $a_j$ can be represented exactly in a binary encoding with $d$-bits of precision, and define $D = 2^d$. 
Define $b_j := \arccos(a_j) D/ \pi$, and for simplicity assume that each $b_j$ can also be implemented with exactly $d$-bits of precision\footnote{In practice this will result in an additional logarithmic source of error, which we are neglecting, as it is akin to finite-precision arithmetic error which is usually neglected in classical algorithm analysis.}, and note that $b_j \in [D]$. Assume that we are given an oracle, implemented via QRAM, such that $U \ket{0}_d\ket{j}_n = \ket{b_j}_d\ket{j}_n$. 
Then, we can obtain $U_A$, a $(1, d + 1, 0)$-block-encoding for $A = \text{diag}(a_0, \hdots, a_{N-1})$, with $O(d n)$ circuit depth. 
\end{lemma}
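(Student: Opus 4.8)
The plan is to follow the standard diagonal block-encoding recipe (a special case of Lemma 48 of~\citet{gilyen2019quantum}): load the precomputed angle $b_j$ into a $d$-qubit register with the given QRAM oracle, transfer it onto the amplitude of a single flag qubit via a controlled $Y$-rotation so that the $\ket{0}$-amplitude becomes exactly $a_j$, and then uncompute the angle register. Concretely, I would use $d+1$ ancilla qubits: one flag qubit (treated as the most significant ancilla) together with a $d$-qubit register to hold $b_j$.

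First I would apply $I_1\otimes U$ to the state $\ket{0}_1\ket{0}_d\ket{j}_n$, producing $\ket{0}_1\ket{b_j}_d\ket{j}_n$. Next I would apply the $CR_Y(t)$ gate of~\cref{adl:lemma_cry_gate} with $t=\pi/D$, acting on the flag qubit and controlled on the angle register; since the flag qubit is in $\ket{0}_1$, this yields $(\cos(b_j\pi/D)\ket{0}_1 + \sin(b_j\pi/D)\ket{1}_1)\ket{b_j}_d\ket{j}_n$. By the definition $b_j=\arccos(a_j)D/\pi$ we have $\cos(b_j\pi/D)=a_j$ exactly, so the $\ket{0}_1$-component carries amplitude $a_j$. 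Finally I would apply $I_1\otimes U^{\dagger}$ to restore the angle register to $\ket{0}_d$, leaving $(a_j\ket{0}_1 + \sin(\arccos(a_j))\ket{1}_1)\ket{0}_d\ket{j}_n$. Defining $U_A := (I_1\otimes U^{\dagger})(CR_Y(\pi/D)\otimes I_n)(I_1\otimes U)$ gives the candidate block-encoding.

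It then remains to verify the block-encoding condition exactly. Since the oracle and its inverse leave the $\ket{j}_n$ register untouched and the controlled rotation acts only on the flag qubit, the induced map on the main register is diagonal, with $(\bra{0}_{d+1}\otimes I_n)U_A(\ket{0}_{d+1}\otimes I_n)=\text{diag}(a_0,\dots,a_{N-1})=A$; off-diagonal entries vanish because no coupling between distinct $\ket{j}_n$ is ever introduced. Because $b_j$ is stored exactly in $d$ bits and $\cos(b_j\pi/D)=a_j$ holds with no approximation, the error parameter is genuinely $0$. For the cost, the oracle $U$ and its inverse $U^{\dagger}$ each contribute $O(dn)$ depth by~\cref{adl:def:qram} (using $\log N=n$), while the $CR_Y(\pi/D)$ gate contributes $O(d)$ depth, so the total is $O(dn)$.

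The main subtlety—rather than a genuine obstacle—will be the register bookkeeping needed to keep the flag qubit and the angle register cleanly separated so that the uncomputation with $U^{\dagger}$ restores $\ket{0}_d$ without disturbing $\ket{j}_n$; the amplitude transfer itself is immediate from~\cref{adl:lemma_cry_gate} and the choice $t=\pi/D$, and exactness relies only on the stated finite-precision assumptions. Sign handling is automatic, since $\cos(b_j\pi/D)$ ranges over $[-1,1]$ as $b_j$ ranges over $[0,D]$, so negative coefficients $a_j$ require no special treatment.
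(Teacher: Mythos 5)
Your proposal is correct and matches the paper's proof essentially verbatim: the same circuit $U_A = (I_1\otimes U^{\dagger})(CR_Y(\pi/D)\otimes I_n)(I_1\otimes U)$, the same amplitude-transfer argument via $\cos(b_j\pi/D)=a_j$, and the same $O(dn)$ depth accounting. The only difference is cosmetic — you track a single basis state $\ket{j}_n$ and appeal to linearity, while the paper writes the operator as a sum over $j$ of rank-one terms — and your statement of the key identity is actually cleaner than the paper's (which contains a typo, writing $\cos(b_j\pi/D)=\arccos(a_j)$).
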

\begin{proof}

Define the circuit $V := (I_1 \otimes U^{\dagger})(CR_{Y}(\frac{\pi}{D})\otimes I_n)(I_1 \otimes U)$, with $CR_{Y}(\frac{\pi}{D})$ defined as per~\cref{adl:lemma_cry_gate}. First, since for any $\ket{\phi}$ and basis vector $\ket{j}$, $\ket{\phi}\otimes \op{j}{j} = (\ket{\phi}\ket{j})\bra{j}$, observe that
\begin{align}
    (I_1 \otimes U)(\ket{0}_{d + 1}\otimes I_n)
    =
    \sum_{j=0}^{N-1} \left[(I_1 \otimes U) \ket{0}_1\ket{0}_d\ket{j}_n\right]\bra{j}_n
    =
    \sum_{j=0}^{N-1} (\ket{0}_1\ket{b_j}_d\ket{j}_n)\bra{j}_n.
\end{align}
Then, since $\cos(b_j \frac{\pi}{D}) = \arccos(a_j)$, 
\begin{align}
    (CR_{Y}(\frac{\pi}{D})\otimes I_n)(I_1 \otimes U)(\ket{0}_{d + 1}\otimes I_n)
    =
    \sum_{j=0}^{N-1} \left((a_j\ket{0}_1 + \sqrt{1 - a_j^2}\ket{1}_1)\ket{b_j}_d\ket{j}_n\right)\bra{j}_n.
\end{align}
Then, since $(\bra{0}_{d + 1}\otimes I_n)(I_1 \otimes U^{\dagger})
=[(I_1 \otimes U)(\ket{0}_{d + 1}\otimes I_n)]^{\dagger} = \sum_{j=0}^{N-1} \ket{j}_n(\bra{0}_1\bra{b_j}_d\bra{j}_n)$, we readily find that 
\begin{align}
    (\bra{0}_{d + 1}\otimes I_n)V(\ket{0}_{d + 1}\otimes I_n) = \sum_{j=0}^{N-1}a_j\op{j}{j} = \text{diag}(a_0, \hdots, a_{N-1}) = A. 
\end{align}
Thus, $V$ is a $(1, d + 1, 0)$-block-encoding for $A$. The circuit depth of implementing $U$ is the depth of making a QRAM query, and is thus $O(d \log N) = O(n d)$ (see~\cref{adl:def:qram}). The cost of implementing the $CR_{Y}$ gate is simply $O(d)$ as per~\cref{adl:lemma_cry_gate}, and thus the overall circuit complexity of this block-encoding is $O(n d)$. 
\end{proof}
In the case where each $a_j \in \mathbb C$, the complex and real parts need to be specified separately. A diagonal block-encoding of the real and imaginary parts can then be obtained using~\cref{adl:lemma:diagonal_block_encoding_from_qram}, and can then be summed by adding an ancilla to obtain a $(2, d + 2, 0)$-block-encoding with the same overall circuit complexity. 
One might wonder why, given a QRAM assumption, a state-preparation unitary yielding a state proportional to $\sum_j a_j\ket{j}$ can't be used instead, in combination with the diagonal block-encoding of state amplitudes result of \citet{rattew2023non}. If each $a_j$ represent the column norm of some matrix $W$, doing so would result in a normalization factor of $\lnorm{\sum_j a_j\ket{j}}_2 = \sqrt{\sum_j |a_j|^2} = \lnorm{W}_F$, yielding a Frobenius norm-dependence which this approach avoids.

The following data-structure is useful in situations where you are willing to pay a pre-processing cost linear (up to polylogarithmic factors) in the number of non-zero matrix elements, but want a fast algorithm at runtime. This is the case with accelerating neural network inference. The following data structure is very similar to the one given in~\citet{kerenidis2017quantumrecommendationsystems}. 
\begin{definition}[Preprocessed Matrix QRAM Data Structure]\label{adl:preprocessed_matrix_qram_datastructure}
Let $N = 2^n$, and let $D = 2^d$. 

Let $W \in \mathbb C^{N\times N}$ and let $\lnorm{W}_2 \le 1$. 
Let the columns of $W$ be represented as $W = \begin{pmatrix}\vec w_0 &  \hdots & \vec w_{N-1}\end{pmatrix}$. Additionally, define $\ket{w_j} = \vec w_j /\lnorm{\vec w_j}_2$, and $a_j = \lnorm{\vec w_j}$. Let $b_j := \arccos(a_j)D/\pi$. For simplicity, we assume that $b_j$ can be exactly written with $d$-bits, and thus that $b_j$ will be an integer between $[0, D-1]$.
We say we have access to a Preprocessed QRAM Data Structure for $W$ if we have a QRAM oracle $U_W$ (as per~\cref{adl:def:qraqm}) such that 
\begin{align}
    U_W\ket{j}_n\ket{0}_n = \ket{j}_n\ket{w_j}_n, 
\end{align}
and we also have access to a QRAM yielding the mapping,
\begin{align}
    U_{A}\ket{0}_d\ket{j}_n = \ket{b_j}_d\ket{j}_n.
\end{align}
$U_W$ can be implemented with $O(\log^2 N)$ circuit depth, and with $\tilde O(N^2)$ total qubits (as per~\cref{adl:def:qraqm}). $U_A$ can be implemented with $O(d \log N)$ circuit depth, and with $\tilde{O}(dN)$ total qubits (as per~\cref{adl:def:qram}). 
\end{definition}

We are now ready to present a somewhat surprising result on matrix-vector multiplication with arbitrary (potentially full-rank and dense) matrices and the element-wise square of a given vector.
The following uses ideas similar to importance-sampling.
\begin{theorem}[Product of Arbitrary Matrix with a Vector Element-wise Squared]\label{adl:theorem:matrix_vector_squared_product}
Let $N = 2^n$.
We are given a matrix $W \in \mathbb C^{N\times N}$ through the data-structure in~\cref{adl:preprocessed_matrix_qram_datastructure}. Let $d$ be the number of bits required to represent the function of the column norms of $W$, $b_j$,  as per~\cref{adl:preprocessed_matrix_qram_datastructure}.
Additionally, we are given the unitary $U_{\psi}$ with circuit complexity $O(T_{\psi})$, a $(\alpha, a, \epsilon)$-VE for the quantum state $\ket{\psi}_n$. Define the function $g :\mathbb C\mapsto \mathbb R$ as $g(x) = |x|^2$, and $\mathcal N := \lnorm{Wg(\ket{\psi}_n)}_2$. Then we can construct the unitary $U_f$ which is a $(\frac{\alpha^2}{\mathcal N}, 2a + d + 3 + n, \frac{2\alpha\epsilon}{\mathcal N})$-VE for $Wg(\ket{\psi}_n)/\mathcal N$, and has a circuit depth of $O(T_{\psi} + dn + n^2)$.
\end{theorem}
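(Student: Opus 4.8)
The plan is to realize the product without ever block-encoding $W$ itself (which for dense full-rank $W$ would cost a factor of $\lnorm{W}_F$), by coherently combining the normalized columns $\ket{w_j}$ of $W$ weighted by the entries of $\ket{\psi}$. I introduce two objects. First, the $\ell_2$-normalized state $\ket{\phi}_{2n} := \sum_j \psi_j \ket{j}_n\ket{w_j}_n$, which carries no Frobenius-norm dependence since $\lnorm{\ket{\phi}_{2n}}_2^2 = \sum_j|\psi_j|^2\lnorm{\ket{w_j}_n}_2^2 = 1$. Second, the operator $M := (\ket{0}_n\bra{v})\otimes I_n$ with $\ket{v} := A\ket{\psi}_n$ and $A := \text{diag}(a_0,\dots,a_{N-1})$. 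A direct computation, using $\ip{v}{k}=a_k\bar\psi_k$ and $a_k\ket{w_k}_n = \vec w_k$, gives $M\ket{\phi}_{2n} = \ket{0}_n\otimes\sum_k a_k|\psi_k|^2\ket{w_k}_n = \ket{0}_n\otimes Wg(\ket{\psi}_n)$ exactly. The conjugate $\bar\psi_k$ supplied by the bra $\bra{v}$ is precisely what turns the $\psi_k$ carried by $\ket{\phi}_{2n}$ into $g(\psi_k)=|\psi_k|^2$. Since $\lnorm{A}_2 = \max_j a_j = \max_j\lnorm{\vec w_j}_2 \le \lnorm{W}_2 \le 1$, both $\ket{v}$ and $M$ have spectral norm at most $1$ (so $M$ admits a block-encoding), and crucially $\lnorm{M\ket{\phi}_{2n}}_2 = \lnorm{Wg(\ket{\psi}_n)}_2 = \mathcal N$.

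Next I would construct the two encodings, carefully tracking scalings. For the VE of $\ket{\phi}_{2n}$: append a fresh $\ket{0}_n$ register to $U_{\psi}$ and apply the column oracle $U_W$ of \cref{adl:preprocessed_matrix_qram_datastructure} to the main register and the new register. Since $U_W$ is exact, the orthogonality of the $\ket{j}_n$ labels shows this yields an $(\alpha, a, \epsilon)$-VE for $\ket{\phi}_{2n}$ with $O(T_\psi + n^2)$ depth (the $n^2$ being the $O(\log^2 N)$ cost of $U_W$). For the block-encoding of $M$: I first obtain a $(1, d+1, 0)$-block-encoding of $A$ from \cref{adl:lemma:diagonal_block_encoding_from_qram} using the oracle $U_A$; separately, following the rank-one construction in the proof of \cref{lemma:VE_fixed_amplitude_amplification_v2} (compose $U_\psi$ with the $\op{0}{0}$ block-encoding of \cref{adl:lemma:block_encoding_of_basis_projector}), I obtain an $(\alpha, a+2, \epsilon)$-block-encoding of $\ket{\psi}_n\bra{0}_n$. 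Multiplying these via \cref{lemma:product_of_block_encodings} gives an $(\alpha, a+d+3, \epsilon)$-block-encoding of $A\ket{\psi}_n\bra{0}_n = \ket{v}\bra{0}_n$; taking the adjoint yields the same parameters for $\ket{0}_n\bra{v}$, and tensoring with the trivial $(1,0,0)$-encoding of $I_n$ via \cref{adl:lemma:tensor_product_of_BEs} gives an $(\alpha, a+d+3, \epsilon)$-block-encoding of $M$ with $O(T_\psi + dn + n)$ depth.

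Finally I would apply the matrix-vector product \cref{lemma:matrix_vector_product} to the $(\alpha, a+d+3, \epsilon)$-block-encoding of $M$ and the $(\alpha, a, \epsilon)$-VE of $\ket{\phi}_{2n}$. Because $\lnorm{M\ket{\phi}_{2n}}_2 = \mathcal N$, this returns an $(\alpha^2/\mathcal N,\,2a+d+3,\,(\epsilon+\alpha\epsilon)/\mathcal N)$-VE for $M\ket{\phi}_{2n}/\mathcal N = \ket{0}_n\otimes\big(Wg(\ket{\psi}_n)/\mathcal N\big)$; bounding $1+\alpha\le 2\alpha$ (valid since $\alpha\ge1$) gives the stated error $2\alpha\epsilon/\mathcal N$. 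Tracing out the leading $\ket{0}_n$ register via \cref{adl:lemma:tracing_out_qubit_in_VE_0_state} promotes this to an $(\alpha^2/\mathcal N,\,2a+d+3+n,\,2\alpha\epsilon/\mathcal N)$-VE for $Wg(\ket{\psi}_n)/\mathcal N$, exactly as claimed, with total depth $O(T_\psi + dn + n^2)$. The main obstacle — and the crux of the result — is the block-encoding of $M$: it must have scaling \emph{exactly} $\alpha$ (inherited from the VE of $\ket{\psi}_n$) so the final scaling is $\alpha^2/\mathcal N$, and it must carry the conjugate $\bar\psi_j$ so that $M\ket{\phi}_{2n}$ produces $g(\psi_j)$ rather than $\psi_j^2$. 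Keeping the normalization tied to $\mathcal N = \lnorm{Wg(\ket{\psi}_n)}_2$ throughout — never dividing by $\lnorm{A\ket{\psi}_n}_2$, which could be arbitrarily small — is precisely what eliminates any Frobenius-norm (and hence rank) dependence.
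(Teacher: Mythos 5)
Your proposal is correct and follows essentially the same route as the paper's proof: both construct the state $\sum_j \psi_j\ket{j}_n\ket{w_j}_n$ via $U_W$ composed with $U_\psi$, block-encode the rank-one operator $(\ket{0}_n\bra{A\psi})\otimes I_n$ with scaling exactly $\alpha$, and finish with \cref{lemma:matrix_vector_product} followed by \cref{adl:lemma:tracing_out_qubit_in_VE_0_state}. The only difference is cosmetic: you build the rank-one encoding as $A\cdot(\ket{\psi}\bra{0})$ and then take the adjoint, whereas the paper first forms a VE of $A\ket{\psi}$ and then composes it with the $\op{0}{0}$ projector block-encoding --- this lets you bypass the intermediate normalization $\lnorm{A\ket{\psi}}_2$ entirely, but the resulting parameters are identical.
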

\begin{proof}
Noting that $a_j = \lnorm{W \ket{j}}_2$, it is easy to show $\lnorm{W}_2\le 1 \implies \forall j, a_j \le 1$;  
$a_j = \lnorm{W \ket{j}}_2 \le \max_{\vec x : \lnorm{\vec x}_2 = 1} \lnorm{W \vec x}_2 = \lnorm{W}_2  \le 1$. Consequently, by~\cref{adl:lemma:diagonal_block_encoding_from_qram} we can immediately get $U_A$, a $(1, d+1, 0)$-block-encoding for $A = \text{diag}(a_0, \hdots, a_{N-1})$ with $O(d n)$ circuit depth.

Let $\ket{\psi_1}_n := A \ket{\psi}_n = \sum_{j=0}^{N-1}a_j\psi_j\ket{j}_n$, $\mathcal{N}_1 := \lnorm{\ket{\psi_1}_n}_2$.
By~\cref{lemma:matrix_vector_product}, we can combine $U_A$ and $U_{\psi}$ to obtain $V_1$, a $(\alpha/\mathcal N_1, a + d + 1, \epsilon/\mathcal N_1)$-VE for $\ket{\psi_1}_n /\mathcal N_1$. This has circuit complexity $O(T_{\psi} + dn)$.

By~\cref{adl:lemma:block_encoding_of_basis_projector}, we can get $U_0$, a $(1, 2, 0)$-block-encoding for the $n + a + d + 1$-qubit projector $\op{0}{0}$.
Let $\ket{E_{\psi_1}}_n := \frac{\ket{\psi_1}}{\mathcal N_1} - \frac{\alpha}{\mathcal N_1}(\bra{0}_{a + d  +1}\otimes I_n)V_1(\ket{0}_{n + a + d  +1})$. Then, by~\cref{def:VE}, $\lnorm{\ket{E_{\psi_1}}_n}_2 \le \epsilon/\mathcal N_1$. Moreover, $\bra{0}_{a + d  +1}\otimes I_n)V_1(\ket{0}_{n + a + d  +1}) = \frac{1}{\alpha}(\ket{\psi_1}_n - \mathcal N_1 \ket{E_{\psi_1}}_n)$. 
Then, observe that $V_2 := U_0(I_2\otimes V_1^{\dagger})$ is a $(1, 2, 0)$-block-encoding for $\op{0}{0}V_1^{\dagger}$. Let $c = a + d + 1$.
Noting that $(\ket{0}_{c + 2}\otimes I_n) = (\ket{0}_2\otimes I_{c + n})(\ket{0}_{c}\otimes I_n)$, then,
\begin{align}
    (\bra{0}_{c + 2}\otimes I_n)V_2(\ket{0}_{c + 2}\otimes I_n)
    &=
    (\bra{0}_{c}\otimes I_n)(\bra{0}_2\otimes I_{c + n})V_2
    (\ket{0}_2\otimes I_{c + n})(\ket{0}_{c}\otimes I_n)\\
    &=
    (\bra{0}_{c}\otimes I_n)\op{0}{0}V_1^{\dagger}(\ket{0}_{c}\otimes I_n)\\
    &=
    \frac{1}{\alpha}\left(
    \ket{0}_n (\bra{\psi_1}_n - \mathcal N_1 \bra{E_{\psi_1}}_n)
    \right).
\end{align}
The third inequality follows by noting that $(\bra{0}_{c}\otimes I_n)\ket{0}_{n + c} = \ket{0}_n$, and that by~\cref{def:quantum_block_encoding}, $(\bra{0}_2\otimes I_{c + n})V_2 \ket{0}_2\otimes I_{c + n}) = \op{0}{0}V_1^{\dagger}$. Then, letting $\op{0}{\psi_1}$ be a $2^n\times 2^n$ projector, 
\begin{align}
    \lnorm{\op{0}{\psi_1} - \alpha (\bra{0}_{c + 2}\otimes I_n)V_2(\ket{0}_{c + 2}\otimes I_n)
    }_2
    =
    \mathcal N_1\lnorm{\op{0}{E_{\psi_1}}}_2 \le \epsilon. 
\end{align}
Consequently, $V_2$ is a $(\alpha, a + d + 3, \epsilon)$-block-encoding for the $2^n\times 2^n$ projector $\op{0}{\psi_1}$. Moreover, the circuit complexity of $V_2$ is dominated by the circuit complexity of $V_1$, and thus is $O(T_{\psi} + dn)$. Then, $V_3 := V_2 \otimes I_n$ is a $(\alpha, a +d + 3, \epsilon)$-block-encoding for $(\op{0}{\psi_1})\otimes I_n$.

Let $U_W$ be defined as in~\cref{adl:preprocessed_matrix_qram_datastructure}, i.e., it enacts $U_W\ket{j}_n\ket{0}_n = \ket{j}_n\ket{w_j}_n$.

Define $\ket{\phi}_{2n} := \sum_{j=0}^{N-1} \psi_j \ket{j}_n\ket{w_j}_n$. 

Then, let $S := (I_a \otimes U_W)(U_{\psi}\otimes I_n)$. We will now show that $S$ is an $(\alpha, a, \epsilon)$-VE for $\ket{\phi}_{2n}$.

Let $\ket{E_{\psi}}_n := \ket{\psi}_n - \alpha(\bra{0}_a\otimes I_n)U_{\psi}\ket{0}_{a+n}$, thus, $(\bra{0}_a\otimes I_n)U_{\psi}\ket{0}_{a + n} = \frac{1}{\alpha}(\ket{\psi}_n - \ket{E_{\psi}}_n)$

Moreover, define the $a$-qubit projector, $p^{a}_j := \op{j}{j}$. Then, $I_{a + n} = \sum_{j=0}^{2^a - 1}p_j^a \otimes I_n$. 
Finally, define $\ket{\gamma_j}_n := (\bra{j}_a\otimes I_n)U_{\psi}\ket{0}_{a+n}$.
Of course, 
\begin{align}
    U_{\psi}\ket{0}_{a+n}
    =
    \left(\sum_{j=0}^{2^a - 1}p_j^a \otimes I_n\right)U_{\psi}\ket{0}_{a+n}
    =
    \frac{1}{\alpha}\left(
    \ket{0}_a(\ket{\psi}_n - \ket{E_{\psi}}_n)
    \right)
    +
    \sum_{j=1}^{2^a-1} \ket{j}_a\ket{\gamma_j}_n.
\end{align}
Consequently,
\begin{align}
    (\bra{0}_a\otimes I_{2n})S\ket{0}_{a + 2n}
    &=
    (\bra{0}_a\otimes I_{2n})
    (I_a \otimes U_W)(U_{\psi}\otimes I_n)\ket{0}_{a + 2n}\\
    &=
    (\bra{0}_a\otimes U_W)\left[ 
        \frac{1}{\alpha}\left(
    \ket{0}_a(\ket{\psi}_n - \ket{E_{\psi}}_n)
    \right)
    +
    \sum_{j=1}^{2^a-1} \ket{j}_a\ket{\gamma_j}_n
    \right]\ket{0}_n\\
    &=
    \frac{1}{\alpha}
    \left(
        \ket{\phi}_{2n} - U_W\ket{E_{\psi}}_n\ket{0}_n
    \right).
\end{align}
Thus,
\begin{align}
    \lnorm{\ket{\phi}_{2n} - \alpha (\bra{0}_a\otimes I_{2n})S\ket{0}_{a + 2n}}_2
    =
    \lnorm{U_W\ket{E_{\psi}}_n\ket{0}_n}_2
    \le 
    \epsilon.
\end{align}
Thus, $S$ is an $(\alpha, a, \epsilon)$-VE for $\ket{\phi}_{2n}$. Moreover, the circuit complexity of $S$ comes from summing the circuit complexity of $U_{\psi}$ and $U_W$. As per~\cref{adl:preprocessed_matrix_qram_datastructure}, the circuit complexity of $U_W$ is $O(n^2)$, giving an overall circuit complexity for $S$ of $O(T_{\psi} + n^2)$. 

Define $\ket{\Gamma}_n := W g(\ket{\psi}_n)$, and note that
\begin{align}
[(\op{0}{\psi_1})\otimes I_n] \ket{\phi}_{2n} 
&=
\ket{0}_n\sum_{j=0}^{N-1} |\psi_j|^2  a_j\ket{w_j}_n
=
\ket{0}_n \ket{\Gamma}_n.
\end{align}
We now have $V_3$, a $(\alpha, a +d + 3, \epsilon)$-block-encoding for $(\op{0}{\psi_1})\otimes I_n$, and $S$ an $(\alpha, a, \epsilon)$-VE for $\ket{\phi}_{2n}$. We will now invoke~\cref{lemma:matrix_vector_product} to take the product of the matrix encoded in $V_3$ with the vector encoded in $S$, and then will invoke~\cref{adl:lemma:tracing_out_qubit_in_VE_0_state} to remove the $\ket{0}_n$ tensored register. This yields $U_f$, an $(\frac{\alpha^2}{\mathcal N}, 2a + d + 3 + n, \frac{2\alpha\epsilon}{\mathcal N})$-VE for $\ket{\Gamma}_n/\mathcal N$ with circuit complexity $O(T_{\psi} + dn + n^2)$. 
\end{proof}

\subsection{Convolution Block-Encoding}\label{adl:subsection_conv_block_encoding}
In this section, we will first provide a matrix-form of a 2D multi-filter convolution (with stride $1$ and $0$ padding to ensure the input and outputs have the same dimension). We then derive a quantum block-encoding of the matrix form of the convolution. 

As a note, some popular deep learning frameworks such as PyTorch~\citep{paszke2019pytorch} actually implement cross-correlation rather than convolution. However, in the pre-processing stage, our convolutional block-encoding immediately gives a cross-correlation block-encoding by simply switching the $Q$ operator (\cref{adl:def:unilateral_shift_operator}) with a $Q^T$ operator. Finally, in this section, we assume that all addition on basis vectors is mod the dimension of the vector. I.e., for integers $i,j$, $\ket{i + j}_n = \ket{(i + j) \text{ mod } N}_n$ (with $N = 2^n$).

\begin{definition}[Permutation Matrix]\label{adl:def:permutation_matrix}
Define the following $N$ dimensional unitary permutation matrix that maps an input basis vector $i$ to the basis vector $(i + 1) \text{ mod } N$.
\begin{align}
P :=
\sum_{i = 0}^N 
\op{i + 1}{i}
=
\begin{pmatrix}
    0 & 0 & 0 & \hdots & 1\\
    1 & 0 & 0 &  & 0\\
    0 & 1 & 0 &  & 0\\
    \vdots  &  & & \ddots & \vdots \\
    0 & 0  & 0  & \hdots & 0
\end{pmatrix}.
\end{align}
\end{definition}

\begin{definition}[$R_Z$ Phase Gate]\label{adl:def:r_z_gate}
Define the single-qubit phase gate, $R_Z(t) := e^{i t Z} = \begin{pmatrix}
e^{i t} & 0\\
0 & e^{-i t}
\end{pmatrix}$.
\end{definition}

We now derive a block-encoding of the permutation matrix $P$ acting on $m$ qubits. We include this result for completeness, and similar results may be found in the literature (see e.g.,~\citet{motlagh2024generalized}, where they derive a 1D circulant convolution via QSP, or~\citet{camps2024explicit}). 
Our implementation of $P^m$ is identical to a $+m$ adder implemented with QFT, see e.g.,~\citet{draper2000addition}. 
\begin{lemma}[Permutation Matrix Block-Encoding]\label{adl:lemma:permutation_matrix_block_encoding}
Let $m \in \mathbb N_{>0}$. Let $N= 2^n$.
The $m^{th}$ power of the permutation matrix $P$ is given by $P^m = \sum_{j=0}^{N-1}\op{j + m}{j}$.
Then, we can get a $(1, 1, 0)$-block-encoding with $O(n^2)$ circuit complexity for $P^m$. 
\end{lemma}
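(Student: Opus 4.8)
The plan is to use the fact that the cyclic shift $P$ of \cref{adl:def:permutation_matrix} is a circulant matrix and is therefore exactly diagonalized by the quantum Fourier transform; once diagonalized, forming $P^m$ becomes routine, and---crucially---$P^m$ is itself a unitary (a permutation), so the block-encoding is immediate and incurs no subnormalization. Let $F := \frac{1}{\sqrt N}\sum_{j,k=0}^{N-1} e^{2\pi i jk/N}\op{j}{k}$ denote the $n$-qubit QFT. A one-line computation using $P\ket{j}_n = \ket{(j+1)\bmod N}_n$ shows that the columns $\ket{\tilde k} := F\ket{k}$ are eigenvectors of $P$ with eigenvalue $e^{-2\pi i k/N}$, i.e. $P = F\Lambda F^{\dagger}$ with $\Lambda := \sum_{k=0}^{N-1} e^{-2\pi i k/N}\op{k}{k}$. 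Since $FF^{\dagger}=I$, this gives $P^m = F\Lambda^m F^{\dagger}$ with $\Lambda^m = \sum_k e^{-2\pi i mk/N}\op{k}{k}$; note that the modular ($\bmod\,N$) arithmetic is handled automatically, as the phase $e^{-2\pi i mk/N}$ is periodic in both $m$ and $k$ with period $N$.

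First I would implement $\Lambda^m$ cheaply by observing that, writing $k=\sum_{\ell=0}^{n-1} k_\ell 2^\ell$ in binary, the phase factorizes as $e^{-2\pi i mk/N} = \prod_{\ell} e^{-2\pi i m k_\ell 2^\ell/N}$. Hence $\Lambda^m$ is a tensor product of $n$ single-qubit diagonal phase gates $\operatorname{diag}(1, e^{-2\pi i m 2^\ell/N})$, one per qubit, each with a classically precomputed angle and each applied exactly. I would then compose this with the QFT and its inverse, each of which admits the standard $O(n^2)$-gate implementation. The whole circuit $F\Lambda^m F^{\dagger}$ is precisely Draper's QFT-based constant adder for the constant $m$, so it realizes the exact permutation $P^m$ in $O(n^2)$ depth.

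Finally, since every gate is applied without approximation, the circuit $U := P^m$ is an exact unitary implementing $P^m$, hence trivially a $(1,0,0)$-block-encoding of $P^m$; padding with a single idle ancilla prepared and post-selected in $\ket{0}$ upgrades this to the stated $(1,1,0)$-block-encoding at no additional cost, leaving the complexity at $O(n^2)$. I do not expect a genuine obstacle here, because $P^m$ is already unitary and norm-preserving; the only points requiring care are fixing the Fourier diagonalization conventions so that $\Lambda$ carries exactly the eigenvalues $e^{-2\pi i k/N}$, and ensuring the single-qubit phases are realized exactly with no stray global phase (using honest phase-shift gates, or the $R_Z$ gate of \cref{adl:def:r_z_gate} together with a compensating global phase), so that the encoded matrix is $P^m$ itself rather than a phase multiple of it.
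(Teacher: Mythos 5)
Your proposal is correct and follows essentially the same route as the paper's proof: diagonalize the cyclic shift via the QFT as $P^m = F\,\Lambda^m F^{\dagger}$, implement the diagonal phases with $O(n)$ elementary gates, and note the resulting circuit is exactly the Draper QFT-adder for the constant $m$, giving an exact encoding with $O(n^2)$ depth. The only (immaterial) difference is that you apply the phases directly to the data qubits and then pad with an idle ancilla, whereas the paper kicks the phases back onto a single ancilla via controlled-$R_Z$ gates, which is also how it sidesteps the global-phase issue you correctly flag.
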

\begin{proof}
Drawing inspiration from~\citet{motlagh2024generalized, sedghi2018singular}, let $F := QFT$ represent the Quantum Fourier Transform on $n$ qubits. 
Define $\omega_N^{j} := e^{2\pi i j/N}$.
Noting that $F = \frac{1}{\sqrt{N}}\sum_{i=0}^{N-1}\sum_{j=0}^{N-1} \omega_N^{i j}\op{i}{j}$, 
it is easy to show that $P^m F\ket{j} = \omega_{N}^{-m j}F\ket{j}$. 
Consequently, we can write $P^m = FD F^{-1}$, where $D = \text{diag}(\omega_N^{0}, \omega_N^{-m}, \hdots, \omega_{N}^{-m(N-1)})$. 
Thus, by getting a block-encoding of $D$, we can implement $P^m$ by taking a product of $F D F^{-1}$. 
Let $\ket{j}_n$ be a basis vector, and let $j = 2^{n-1}j_{n-1} + \hdots + 2^1 j_1 + 2^0 j_0$.
We will now give a unitary $V_m$ which implements the mapping $V_m\ket{0}_1\ket{j}_n = \omega_N^{-j m}\ket{0}_1\ket{j}_n$. 
Noting that $\omega_N^{-jm} = e^{-2\pi i jm/N} = \prod_{l=0}^{n-1}e^{-2\pi i m(2^{j_l} j_l)/N}$, we can apply a sequence of $n$ controlled $R_Z(t)$ gates, where the $l^{th}$ gate is controlled on bit $j_{l}$ and applies $R_Z(m 2^{j_l}/N)$ on the ancilla qubit. This implements the desired mapping, and can be easily shown to be a $(1, 1, 0)$-block-encoding for $D$. The Quantum Fourier Transform~\citep{coppersmith2002approximate} can be implemented with $O(n^2)$ circuit complexity~\citep{nielsen_quantum_2011}, and so we can get a trivial $(1, 0, 0)$-block-encoding for both $F$ and $F^{\dagger}$. 
Thus, $(I_1\otimes F)V_m (I_1\otimes F^{\dagger})$ is a $(1, 1, 0)$-block-encoding for $P^m$, with $O(n^2)$ circuit depth. Its worth noting that since the ancilla qubit in $V_m$ is separable after the computation, this could be equivalently considered a $(1, 0, 0)$-block-encoding. 
\end{proof}

\begin{definition}[Discrete Unilateral Shift Operator]\label{adl:def:unilateral_shift_operator}
Define $Q$ to be the $N$-dimensional discrete unilateral shift operator,
\begin{align}
    Q := \sum_{j=0}^{N-2} \op{j+1}{j}
    =
    \begin{pmatrix}
        0 & 0 & \hdots & 0 & 0 \\
        1 & 0 &  & 0 & 0 \\
        0 & 1 &  & 0 & 0 \\
        \vdots & & \ddots & &\vdots  \\
        0 & 0 & \hdots & 1 & 0
    \end{pmatrix}.
\end{align}
This is just the permutation matrix $P$ without wrap-around. 
\end{definition}

\begin{lemma}[Block-Encoding of $Q$]\label{adl:lemma:block_encoding_of_Q}
Let $N = 2^n$. Define $Q$ as per~\cref{adl:def:unilateral_shift_operator}. Then, we can obtain a $(1, 4, 0)$-block-encoding for $Q$ with $O(n^2)$ circuit complexity. 
\end{lemma}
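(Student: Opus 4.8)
The plan is to express $Q$ as a short linear combination of two operators we already know how to block-encode, namely the full permutation matrix $P$ and the single wrap-around term that $Q$ omits. Writing $P = \sum_{i=0}^{N-1}\op{(i+1)\bmod N}{i}$ and comparing with \cref{adl:def:unilateral_shift_operator}, the only entry of $P$ absent from $Q$ is the wrap-around $\op{0}{N-1}$, so $Q = P - \op{0}{N-1}$. This reduces the problem to combining a block-encoding of $P$ with a block-encoding of a rank-one basis projector.

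First I would invoke \cref{adl:lemma:permutation_matrix_block_encoding} with $m=1$ to obtain a $(1,1,0)$-block-encoding of $P$ with $O(n^2)$ circuit depth, and \cref{adl:lemma:block_encoding_of_basis_projector} to obtain a $(1,2,0)$-block-encoding of $\op{0}{N-1}$ with $O(n)$ depth. To feed these into the linear-combination result (\cref{adl:lemma:lcu_of_BEs}) I first pad the encoding of $P$ with one dummy ancilla so that both are $(1,2,0)$-block-encodings; padding merely tensors an identity on the extra $\ket{0}$ ancilla and leaves the encoded block unchanged. The $-1$ coefficient cannot be passed directly to \cref{adl:lemma:lcu_of_BEs}, whose coefficients enter through $\sqrt{b_j}$ and must therefore be non-negative; instead I would absorb the sign into the projector's encoding by multiplying that unitary by the global phase $e^{i\pi}$, yielding a $(1,2,0)$-block-encoding of $-\op{0}{N-1}$. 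Applying \cref{adl:lemma:lcu_of_BEs} with $\vec b = (1,1)^T$ (so $\beta = \lnorm{\vec b}_1 = 2$ and $U_b = H$) then produces a $(2,3,0)$-block-encoding of $P + (-\op{0}{N-1}) = Q$, still with $O(n^2)$ depth.

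The remaining step is to amplify the sub-normalization factor $2$ down to $1$ without introducing error. The key structural observation is that $Q$ is a partial isometry with singular values in $\{0,1\}$: indeed $Q = P\,(I - \op{N-1}{N-1})$, a product of the unitary $P$ with an orthogonal projector (using $P\ket{N-1} = \ket{0}$), so $Q^\dagger Q = I - \op{N-1}{N-1}$ is itself a projector. This is exactly the hypothesis of \cref{adl:naixu:lemma:oblivious_aa_one_half}, so applying $\tfrac{1}{2}$ oblivious amplitude amplification to the $(2,3,0)$-block-encoding converts it into a $(1,4,0)$-block-encoding of $Q$ at only a constant-factor overhead (three controlled calls), preserving the $O(n^2)$ circuit complexity. \textbf{The main obstacle} is less the construction than the two bookkeeping points that make the standard primitives applicable: recognizing that the entry missing from $P$ is precisely the rank-one term $\op{0}{N-1}$, so that a two-term LCU with a sign absorbed into one encoding suffices; and verifying the singular-value structure of $Q$, so that the \emph{error-free} oblivious amplification of \cref{adl:naixu:lemma:oblivious_aa_one_half} can be used in place of the lossier uniform singular-value amplification (\cref{lemma:adl:uniform_singular_value_amplification}), thereby reaching an exact, normalization-$1$ encoding with zero error.
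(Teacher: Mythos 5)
Your proposal is correct and follows essentially the same route as the paper's proof: the decomposition $Q = P - \op{0}{N-1}$, block-encodings of $P$ and the rank-one projector via the same two lemmas, a two-term LCU giving a normalization-$2$ encoding, and then \cref{adl:naixu:lemma:oblivious_aa_one_half} using $Q^\dagger Q = I - \op{N-1}{N-1}$ to reach the exact $(1,4,0)$-block-encoding. The only cosmetic difference is that the paper builds the LCU circuit explicitly with the minus sign placed on the controlled projector term, whereas you invoke the formal LCU lemma and absorb the sign as a global phase $e^{i\pi}$ on that unitary; both are valid.
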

\begin{proof}
By~\cref{adl:lemma:permutation_matrix_block_encoding}, we can obtain a $U_P$ a $(1, 1, 0)$-block-encoding of $P = \sum_{j=0}^{N-1}\op{j+1}{1}$ with $O(n^2)$ circuit complexity. By~\cref{adl:lemma:block_encoding_of_basis_projector}, we can obtain $V$ a $(1, 2, 0)$-block-encoding of the $n$-qubit projector $\op{0}{N-1}$ with $O(n)$ circuit depth. 
Following LCU~\citep{childs2012hamiltonian, gilyen2019quantum}, we can get the sum of these two block-encodings, introducing an additional ancilla, with the circuit $U_f := (H\otimes I_{2+n})(\op{0}{0}_1\otimes I_{1}\otimes  U_P - \op{1}{1}_1\otimes V)(H\otimes I_{2+n})$. 
Then, $U_f$ is a $(1, 3, 0)$-block-encoding for $ \frac{1}{2}(P - \op{0}{N-1}) = \frac{1}{2}Q$, with $O(n^2)$ circuit complexity. Noting that $Q^{\dagger}Q = I_n - \op{N-1}{N-1}$, it is clear that $\lnorm{Q}_2 \le 1$. 
Moreover, since all the singular values of $Q/2$ are either $0$ or $1/2$, we can invoke~\cref{adl:naixu:lemma:oblivious_aa_one_half}, a special case of oblivious amplitude amplification~\citep{gilyen2019quantum}, to immediately convert this to a $(1, 4, 0)$-block-encoding for $Q$ with only 3 calls to $U_f$. 
\end{proof}

\begin{lemma}[Block-Encoding of $Q^m$]\label{adl:lemma:block_encoding_of_Q_m}

Let $m \in \mathbb N_{>0}$ and let $N=2^n$. 
Define the $N$-dimensional operator $Q$ as per~\cref{adl:def:unilateral_shift_operator}. 
Then, we can obtain a $(1, 4m, 0)$-block-encoding of $Q^m$ with $O(m n^2)$ circuit complexity.
\end{lemma}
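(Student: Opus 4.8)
The plan is to build $U_{Q^m}$ as the product of $m$ independent copies of the block-encoding of $Q$ supplied by \cref{adl:lemma:block_encoding_of_Q}, iterating the product-of-block-encodings result (\cref{lemma:product_of_block_encodings}). First I would invoke \cref{adl:lemma:block_encoding_of_Q} to obtain $U_Q$, a $(1,4,0)$-block-encoding of the $n$-qubit operator $Q$ (\cref{adl:def:unilateral_shift_operator}) with $O(n^2)$ circuit depth. Since $Q^m$ is literally the $m$-fold product of $Q$ with itself, and \cref{lemma:product_of_block_encodings} block-encodes the product of two block-encoded operators, composing $m$ copies of $U_Q$ yields a block-encoding of $Q^m$.

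The key step is to track the three parameters through the telescoping product by induction on $j$. The inductive claim is that after combining $j$ copies of $U_Q$ we have a $(1,4j,0)$-block-encoding of $Q^j$; the base case $j=1$ is exactly $U_Q$. For the inductive step, I would apply \cref{lemma:product_of_block_encodings} with the $(1,4j,0)$-block-encoding of $Q^j$ playing the role of the first operator ($\alpha=1$, $a=4j$, $\delta=0$) and a fresh copy of $U_Q$ as the second ($\beta=1$, $b=4$, $\epsilon=0$). The lemma then produces a block-encoding of $Q^j\cdot Q = Q^{j+1}$ with subnormalization $\alpha\beta = 1$, ancilla count $a+b = 4j+4 = 4(j+1)$, and error $\alpha\epsilon + \beta\delta = 0$; that is, a $(1,4(j+1),0)$-block-encoding. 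Running this from $j=1$ to $j=m$ gives the claimed $(1,4m,0)$-block-encoding of $Q^m$. For the circuit complexity, note that each application of \cref{lemma:product_of_block_encodings} only tensors the existing circuits with identities on the disjoint ancilla registers (which contributes no depth) and concatenates them; since each of the $m$ copies of $U_Q$ has $O(n^2)$ depth, the total depth is $O(m n^2)$.

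I expect no serious obstacle here — the result is essentially bookkeeping — but the point requiring the most care is the ancilla accounting under the convention flagged after \cref{lemma:product_of_block_encodings}, namely that each factor acts trivially on the other factors' ancillas. Concretely, I would partition the $4m$ ancillas into $m$ disjoint blocks of $4$, one per copy of $U_Q$, so that the $i$-th copy acts only on the $n$-qubit main register and its own $4$-qubit ancilla block while leaving the remaining $4(m-1)$ ancillas untouched. The reassuring feature is that because every factor has $\alpha=\beta=1$ and $\delta=\epsilon=0$, the error stays exactly $0$ at every step of the induction (no error amplification across the product), and the subnormalization never grows, so the final encoding is exact with constant subnormalization.
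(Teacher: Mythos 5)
Your proposal is correct and follows essentially the same route as the paper: the paper's proof likewise obtains $U_Q$ from \cref{adl:lemma:block_encoding_of_Q} and invokes the product-of-block-encodings result (Lemma 53 of \citet{gilyen2019quantum}, stated as \cref{lemma:product_of_block_encodings}) $m$ times to get the $(1,4m,0)$-block-encoding with $O(mn^2)$ depth. Your explicit induction and ancilla bookkeeping simply spell out what the paper leaves implicit.
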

\begin{proof}
As per~\cref{adl:lemma:block_encoding_of_Q}, we can obtain $U_Q$ a $(1, 4, 0)$-block-encoding for $Q$ with $O(n^2)$ circuit complexity. Invoking Lemma 53 (Product of Block-Encoded Matrices) of \citet{gilyen2019quantum} with $U_Q$ $m$ times directly yields a $(1, 4m, 0)$-block-encoding of $Q^m$ with $O(m n^2)$ circuit complexity.\footnote{This can likely be optimizing by using QSVT~\citep{gilyen2019quantum}.}
\end{proof}

Now, we present a standard well-known result giving the matrix form of a 2D multi-filter convolution (see e.g.,~\citet{sedghi2018singular, kerenidis2020qcnn}).
\begin{lemma}[Matrix Form of $2D$ Multi-Filter Convolution]\label{adl:lemma:matrix_form_of_2d_convolution}
Let $M = 2^m$, let $n = 2m$, let $N = 2^n$, and let $D = 2^d$. Let $C = 2^c$ represent the number of input and output channels. 
Let $X$ represent the rank$-3$ input tensor, which in vectorized form (stored in column-major order for each input channel) is given by, $\ket{X}_{n + c} = \sum_{i = 0}^{C-1}\sum_{j=0}^{M-1}\sum_{k=0}^{M-1} X_{i, k, j}\ket{i}_c\ket{j}_m\ket{k}_m.$
I.e., $\ket{X}_{n+c}$ is of dimension $M^2C = NC$. 
Define $\tilde X_{i, j, k} = X_{i, j, k}$ if $j \ge 0$ and $k \ge 0$, and $\tilde X_{i, j, k} = 0$ otherwise. 
We can define the convolutional kernel $K$ to be a rank-$4$ tensor containing each of the $C$, $C\times D\times D$ filters\footnote{If the number of channels is $1$ (i.e., $C=1$), then the kernel is $D\times D$ dimensional.}, where the first index represents the output channel, the second index represent the input channel, the third index represents the row index, and the fourth index represents the column index. 
Then, entry $y, z$ of the $x^{th}$ output channel after convolution with $K$  is given by,
\begin{align}\label{adl:eqn:2d_conv_with_filters_def}
[X * K]_{x, y, z}
:=
\sum_{j = 0}^{C-1}
\sum_{k = 0}^{D-1}
\sum_{l = 0}^{D-1}
K_{x, j, k, l} \tilde X_{j, z - k, y - l}.
\end{align}
Defining $Q$ as per~\cref{adl:def:unilateral_shift_operator}, we can give the matrix form of the convolution,
\begin{align}
\mathcal C 
:=
\sum_{i=0}^{C-1}
\sum_{j=0}^{C-1}
\sum_{k=0}^{D-1}
\sum_{l=0}^{D-1}
K_{i, j, k, l} 
(\op{i}{j}_c\otimes Q^{l}\otimes Q^k).
\end{align}
I.e., $\mathcal C\ket{X}_{n +c} = \text{vec}(X * K)$. 

\end{lemma}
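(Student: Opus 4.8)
The plan is to verify $\mathcal C\ket{X}_{n+c}=\text{vec}(X*K)$ by applying $\mathcal C$ directly to the vectorized input and reading off the amplitude of each output basis state $\ket{x}_c\ket{y}_m\ket{z}_m$, then matching it term-by-term against the convolution formula in~\cref{adl:eqn:2d_conv_with_filters_def}. The computation is entirely bookkeeping; the only genuine content is confirming that the truncating, non-wrapping shift $Q$ reproduces exactly the zero-padding encoded in $\tilde X$.

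First I would record the action of powers of $Q$ on basis states. From~\cref{adl:def:unilateral_shift_operator} and a one-line induction, $Q^{s}=\sum_{j=0}^{M-1-s}\op{j+s}{j}$, so $Q^{s}\ket{j}_m=\ket{j+s}_m$ when $j+s\le M-1$ and $Q^{s}\ket{j}_m=0$ otherwise. The essential feature, in contrast to the wrap-around permutation $P$ of~\cref{adl:def:permutation_matrix}, is that a forward shift overrunning the boundary is \emph{dropped} rather than wrapped, which is precisely the behaviour needed for zero-padding.

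Next I would substitute the expansion of $\ket{X}_{n+c}$ (channel register $a$, middle register $\beta$, last register $\alpha$, with amplitude $X_{a,\alpha,\beta}$) and let each summand $\op{i}{j}_c\otimes Q^{l}\otimes Q^{k}$ act: it projects onto $a=j$, relabels the channel to $i$, and shifts $\beta\mapsto\beta+l$ on the middle register and $\alpha\mapsto\alpha+k$ on the last register, giving
\begin{align}
\mathcal C\ket{X}_{n+c}
=
\sum_{i,j,k,l}\;\sum_{\beta,\alpha}
K_{i,j,k,l}\,X_{j,\alpha,\beta}\,
\ket{i}_c\,\ket{\beta+l}_m\,\ket{\alpha+k}_m,
\end{align}
where every term with $\beta+l\ge M$ or $\alpha+k\ge M$ vanishes by the truncation of $Q$. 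Collecting the coefficient of a fixed output state $\ket{x}_c\ket{y}_m\ket{z}_m$ forces $i=x$, $\beta=y-l$, $\alpha=z-k$, leaving amplitude $\sum_{j,k,l}K_{x,j,k,l}\,X_{j,\,z-k,\,y-l}$, valid only when $y-l\ge 0$ and $z-k\ge 0$ (the residual consequence of the truncation, since $\beta,\alpha$ range over $[0,M-1]$).

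Finally I would compare this with~\cref{adl:eqn:2d_conv_with_filters_def}: the factor $\tilde X_{j,z-k,y-l}$ equals $X_{j,z-k,y-l}$ exactly when $z-k\ge 0$ and $y-l\ge 0$ and is $0$ otherwise, matching the support condition obtained above. Hence the amplitude of $\mathcal C\ket{X}_{n+c}$ at $\ket{x}_c\ket{y}_m\ket{z}_m$ equals $[X*K]_{x,y,z}$, i.e.\ $\mathcal C\ket{X}_{n+c}=\text{vec}(X*K)$. The step I expect to be most delicate is maintaining a consistent index dictionary between the four tensor indices of $K$, the register coordinates, and the row/column roles under the column-major layout; the accompanying subtlety is checking that only the lower boundary ever triggers truncation (the upper never does, since $z-k\le z\le M-1$ and $y-l\le y\le M-1$), so that $Q$'s truncation and the padding in $\tilde X$ coincide precisely.
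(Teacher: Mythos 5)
Your proposal is correct and takes essentially the same route as the paper: both verify the identity coefficient-by-coefficient against \cref{adl:eqn:2d_conv_with_filters_def}, using the truncating (non-wrapping) action of powers of $Q$ to reproduce the zero-padding in $\tilde X$. The only cosmetic difference is that you expand $\mathcal C\ket{X}_{n+c}$ on the ket side ($Q^{s}\ket{j}_m=\ket{j+s}_m$ or $0$) and collect coefficients, whereas the paper contracts with the output bra ($\bra{i}Q^{l}=\bra{i-l}$ or $0$); these are the same computation read in opposite directions.
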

\begin{proof}
We will verify that $\mathcal C$ indeed implements the mapping specified in~\cref{adl:eqn:2d_conv_with_filters_def} by computing the following, $\bra{x}_c\bra{y}_m\bra{z}_m \mathcal C \ket{X}_{n + c}$. 
Note that for all $i < l$, $\bra{i} Q^l= 0$, and that for all $i \ge l$, $\bra{i} Q^l= \bra{i - l}$. Consequently, if $y - l \ge 0, z-k\ge 0$, then $\bra{j}_c\otimes(\bra{y}_m\bra{z}_mQ^{l}\otimes Q^k)\ket{X}_{n + c} = X_{j, z-k, y-l}$, and if $y - l < 0$ or $z-k< 0$ then $\bra{j}_c\otimes(\bra{y}_m\bra{z}_mQ^{l}\otimes Q^k)\ket{X}_{n + c} = 0$. Thus, $\bra{j}_c\otimes(\bra{y}_m\bra{z}_mQ^{l}\otimes Q^k)\ket{X}_{n + c} = \tilde X_{j, z-k, y-l}$. Therefore,
\begin{align}
&\bra{x}_c\bra{y}_m\bra{z}_m \sum_{j=0}^{C-1}K_{i, j, k, l} (\op{i}{j}_c\otimes Q^{l}\otimes Q^k)\ket{X}_{n+c}
=
\sum_{j=0}^{C-1} 
K_{x, j, k, l} 
 \tilde{X}_{j, z-k, y-l}.
\end{align}
As a result, 
\begin{align}
    \bra{x}_c\bra{y}_m\bra{z}_m \mathcal C \ket{X}_{n + c}
    &= 
    \sum_{j = 0}^{C-1}
    \sum_{k = 0}^{D-1}
    \sum_{l = 0}^{D-1}
    K_{x, j, k, l} 
 \tilde{X}_{j, z-k, y-l}
 =
 [X*K]_{x,y,z}.
\end{align}
\end{proof}

\begin{proof}[\hypertarget{proof:qram_free_2d_conv_block_encoding}{\textbf{Proof of~\cref{adl:lemma:qram_free_2d_conv_block_encoding}}}]\label{proof:qram_free_2d_conv_block_encoding}
Define $\ket{X}_{n+c}$, $K$, and $\mathcal C$ as per~\cref{adl:lemma:matrix_form_of_2d_convolution}. As a result, obtaining a block-encoding of $\mathcal C$ allows us to implement the desired $2D$ convolution in the vectorized setting.

First, for a given $i, j, k, l$, we will show how to obtain a block-encoding of $K_{i,j,k,l}(\op{i}{j}_c\otimes Q^l\otimes Q^k)$.

Using~\cref{adl:lemma:block_encoding_of_basis_projector}, we can obtain $U_{i,j}$ a $(1,2,0)$-block-encoding of the $c$-qubit projector $\op{i}{j}_c$, with $O(c)$ circuit depth. Then, using~\cref{adl:lemma:block_encoding_of_Q_m}, we can obtain $U_{Q^l}$ a $(1, 4l, 0)$-block-encoding of $m$ qubit $Q^l$ with $O(l m^2)$ circuit complexity. We similarly obtain $U_{Q^k}$ a $(1, 4k, 0)$-block-encoding of $m$ qubit $Q^k$ with $O(k m^2)$ circuit complexity. 
We can then invoke~\cref{adl:lemma:tensor_product_of_BEs} with $U_{i,j}$ and $U_{Q^l}$, and again with $U_{Q^k}$, to obtain $U_{i,j,l,k}$, a $(1, 2 + 4l + 4k, 0)$-block-encoding of $\op{i}{j}_c\otimes Q^l\otimes Q^k$ with $O(c + D m^2)$ circuit complexity. To make each operator act on the same number of qubits, we will augment each with the appropriate number of tensored identities to yield a $(1, 2 + 8D, 0)$-block-encoding for the corresponding operator.

Define $\ket{K}_{2c + 2d} := \sum_{i=0}^{C-1}\sum_{j=0}^{C-1}\sum_{k=0}^{D-1}\sum_{k=0}^{D-1}  K_{i, j, k, l} \ket{i}_c\ket{j}_c\ket{k}_d\ket{l}_d$, and define $\ket{\sqrt{K}}_{2c + 2d} = \sqrt{\ket{K}_{2c + 2d}}$ (with the square-root applied element-wise). Then, define $\mathcal N_K := \lnorm{\ket{\sqrt{K}}_{2c + 2d}}_2 = \lnorm{\ket{K}_{2c + 2d}}_1^{1/2}$, and $\ket{\overline K}_{2c + 2d} := \ket{K}_{2c + 2d}/ \mathcal N_K$. Noting that this vector is $C^2D^2$ dimensional, we can brute-force construct a unitary $U_K$, with a total of $O(C^2D^2)$ single and two qubit gates, such that $U_K\ket{0}_{2c + 2d} = \ket{\sqrt{K}}_{2c + 2d}/\mathcal N_K$~\citep{plesch2011quantum}.
We can then invoke~\cref{adl:lemma:lcu_of_BEs}, obtaining a $(\mathcal N_K^2, 2 + 8D + 2\log(CD), 0)$-block-encoding for $\mathcal C$ with $O(cd C^2 D^3 m^2)$ circuit complexity. This is equivalent to a $(1, 2 + 8D  + 2\log(CD), 0)$-block-encoding for $\mathcal C/ \lnorm{\ket{K}_{2c + 2d}}_1$. Since we are concerned with accelerating inference, we will ignore classical pre-computation costs that must only be paid one time to construct this datastructure.
We can then invoke~\cref{lemma:adl:uniform_singular_value_amplification}, setting $\gamma =\lnorm{\ket{K}_{2c + 2d}}_1/2\lnorm{\mathcal C}_2$ and $\delta = 1/2$, since $\lnorm{\mathcal C/\lnorm{\ket{K}_{2c + 2d}}_1}_2 \le \frac{1}{2} \frac{2\lnorm{\mathcal C}_2}{\lnorm{\ket{K}_{2c + 2d}}_1}$. Neglecting the logarithmic error-terms incurred by~\cref{lemma:adl:uniform_singular_value_amplification} (as these will not dominate complexity), this then yields a $(1, 3 + 8D + 2\log(CD), 0)$-block-encoding for $\frac{\mathcal C}{2\lnorm{\mathcal C}_2}$ with $O(\frac{\lnorm{\ket{K}_{2c + 2d}}_1}{\lnorm{\mathcal C}_2}cd C^2 D^3 m^2)$ circuit depth. We will now show that $\frac{\lnorm{\ket{K}_{2c + 2d}}_1}{\lnorm{\mathcal C}_2} \le D C^{3/2}$, and thus that the overall circuit depth is bounded by $O(cd m^2 C^3 D^4)$.

We will now upper-bound $\lnorm{\ket{K}_{2c + 2d}}_1$. 
Define the basis vector $\ket{x}_{c + 2m} = \ket{x_1}_c\ket{x_2}_m\ket{x_3}_m$.
Then, the $x^{th}$ row of $\mathcal C$ is given by $\bra{x}_{c + 2m} \mathcal C$. Simple analysis shows that $\bra{x}_{c + 2m} \mathcal C = \sum_{j=0}^{C-1}\sum_{k=0}^{D-1}\sum_{l=0}^{D-1} K_{x_1, j, k, l}\bra{j}_c \otimes \bra{x_2 - l}_m \otimes \bra{x_3 - k}_m$, where $\bra{x_2 - l}_m = 0$ if $x_2 - l < 0$ and $\bra{x_3 - k}_m = 0$ if $x_3 - k = 0$. Then it can be readily shown that $\lnorm{\bra{x}_{c + 2m} \mathcal C}_2^2 = \sum_{j=0}^{C-1}\sum_{k=0}^{D-1}\sum_{l=0}^{D-1} |K_{x_1, j, k, l}|^2$. For any operator $A$ with $\lnorm{A}_2 \le 1$, the maximum column norm $\max_{\ket{i}}\lnorm{A\ket{i}}_2 \le \max_{\ket{\psi} : \lnorm{\ket{\psi}}_2 =1}\lnorm{A\ket{i}}_2  \le 1$. Similarly, since $\lnorm{A}_2 = \lnorm{A^{\dagger}}_2$, the maximum row norm cannot exceed the spectral norm of the matrix. Therefore, any row of $\mathcal C$ must have $\ell_2$-norm bounded by $\lnorm{\mathcal C}_2$, thus, $\sum_{j=0}^{C-1}\sum_{k=0}^{D-1}\sum_{l=0}^{D-1} |K_{x_1, j, k, l}|^2 \le \lnorm{\mathcal C}_2^2$. 
Consequently, $\lnorm{\ket{K}_{2c + 2d}}_2^2 = \sum_{i=0}^{C-1}\sum_{j=0}^{C-1}\sum_{k=0}^{D-1}\sum_{l=0}^{D-1} |K_{i, j, k, l}|^2 \le C \lnorm{\mathcal C}_2^2$. Moreover, for an $n$-dimensional vector $\vec x$, $\lnorm{\vec x}_1 \le \sqrt{n}\lnorm{\vec x}_2$, and thus, $\lnorm{\ket{K}_{2c + 2d}}_1 \le \sqrt{C^2 D^2}\sqrt{C}\lnorm{\mathcal C}_2 = D C^{3/2} \lnorm{\mathcal C}_2$. Consequently, $\lnorm{\ket{K}_{2c + 2d}}_1/ \lnorm{\mathcal C}_2 \le D C^{3/2}$. 
\end{proof}
To see a set of related block-encoding circuits, see~\citet{camps2024explicit}.

It is also worth noting that the preceding result can be made substantially more efficient by utilizing a circulant convolution to implement the non-circulant convolution. We will now quickly sketch this idea for future optimization. For simplicity, we assume that the convolution has one input channel and one output channel, and that the input is a rank-2 tensor (e.g., a black and white image). Let $M=2^m$. Then, if the input image is $X \in \mathbb R^{M \times M}$, we can add $0$ padding with the $M\times M$ projector, $\op{0}{0}_m \otimes X$. Then, enacting a circulant convolution on this augmented operator and projecting onto the zero-state of the first register yields the desired non-circulant convolution. Moreover, we can define a circulant $2D$ convolution as $[X * K]_{i,j} = \sum_{k=0}^{l-1}\sum_{l=0}^{d-1} K_{k, l} X_{i-k, j-l}$.  The following sketch generalizes the 1D circulant convolution given in~\citet{motlagh2024generalized}, and also follows the ideas discussed in~\citet{sedghi2018singular}. Consequently, the operator $C := \sum_{i=0}^{d-1}\sum_{j=0}^d K_{i, j}P^j\otimes P^i$ implements $X * K$ in the vectorized setting (using a column-major vectorization for $X$). Let $\omega_M := \exp(2\pi i/M)$ be the $M^{th}$ root of unity. Let $F := QFT$ represent the Quantum Fourier Transform on $m$ qubits. It is easy to show that $P^k F\ket{j} = \omega_{M}^{-k j}F\ket{j}$. Thus, let $D := F^{-1}P F = \text{diag}(\omega_M^0, \omega_M^{-1}, \hdots, \omega_M^{-(M-1)})$. Consequently, $P = F D F^{-1}$, and so $C = (F\otimes F)\left(\sum_{i=0}^{d-1}\sum_{j=0}^{d-1} K_{i, j} D^{j}\otimes D^i \right)(F^{-1}\otimes F^{-1})$. Clearly, since implementing the QFT is efficient on a quantum computer, the key to implementing $C$ is in implementing a block-encoding of the diagonal matrix $\Gamma := \sum_{i=0}^{d-1}\sum_{j=0}^{d-1} K_{i, j} D^{j}\otimes D^i$. Noting that this is a $1$-sparse matrix with efficiently computable entries, a technique such as~\citet{gilyen2019quantum} can be immediately used to obtain the desired block-encoding (replacing QRAM assumptions with arithmetic oracles computing the locations and values of the non-zero elements). This can be further optimized by replacing the arithmetic with QRAM. In the multi-filter case, the diagonal matrix becomes a block-diagonal matrix (with blocks of height and width given by the number of input and output channels), and the sparse block-encoding techniques can still be used.

\subsection{Non-Linear Transformation of Vector-Encodings}

We now present an essential result on transforming the amplitudes of a state encoded as a VE. This result is a direct translation of the ideas in the result given in~\citet{rattew2023non} (which in turn builds on~\citet{guo2024nonlinear, mitarai2019quantumanalogdigitalconversion}) to the setting of VEs. While~\citet{rattew2023non} also give a similar result in the setting of a VE (called an SPBE in that paper), they obtain it by treating the whole unitary VE as a state-preparation unitary, and then invoke their non-linear amplitude transformation (NLAT) result on that, which gives slightly worse complexity than just directly re-deriving the whole transformation result in the framework of VEs. We include the following for completeness and simplicity, and do not claim novelty on this result. 
\begin{lemma}[NLAT of VE~\citep{rattew2023non}]\label{adl:lemma:nlat_of_VE}
Let $N = 2^n$. Let $0 \le \epsilon_0 \le 1$, and $\alpha \ge 1$.
We are given a unitary matrix $U_{\psi}$ which is an $(\alpha, a, \epsilon_0)$-VE for the $n$-qubit real quantum state $\ket{\psi}_n$ with circuit complexity $O(T)$, and a function $f : \mathbb R \mapsto \mathbb R$ with Lipschitz constant $L$ such that $f(0) = 0$. Define $\epsilon_1$ such that $0 < \epsilon_1 \le L$.
Define $\mathcal N := \lnorm{f(\ket{\psi}_n/\alpha)}_2$. 
Define the interval of approximation $[-\tau, \tau]$, where $0 < \tau \le 1$ which can be set to either $\tau = 1$ or any value such that $\tau \ge \frac{1 + \epsilon_0}{\alpha}$ if a smaller region of approximation yields a better complexity. 
Define the polynomial $P : \mathbb R\mapsto \mathbb R$, such that with degree $k$, $\max_{x\in[-\tau, \tau]}|P(x) - f(x)| \le \frac{L\epsilon_1}{2\sqrt{N}}$.
Suppose we are given a bound $\tilde{\gamma}$ satisfying $\tilde{\gamma} \ge \max_{x\in[-1,1]}|P(x)/x|$, and require that $P(0)= 0$. Then, we can obtain a unitary circuit $U_f$ that is a $\left(\frac{4 \tilde{\gamma}}{\mathcal{N}}, n + 2a + 4, \frac{L}{\mathcal{N}}(\epsilon_0 + \epsilon_1)\right)$-VE for $f(\ket{\psi}_n/\alpha)/\mathcal{N}$, and which requires $O(k)$ calls to a controlled $U_{\psi}$ and $U_{\psi}^{\dagger}$ circuit, and has a total circuit depth of $O(k (n + a + T))$. This circuit can be obtained with $O(\text{poly}(k, \log(\frac{\tilde{\gamma}}{\mathcal N \epsilon_1})))$ classical time complexity.
\end{lemma}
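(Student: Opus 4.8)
The plan is to avoid the naive construction that applies a polynomial approximation of $f$ to a uniform superposition (which would incur a spurious $\sqrt{N}$ factor in the subnormalization), and instead to exploit the hypothesis $P(0)=0$ through the factorization $P(x)=x\,R(x)$ with $R(x):=P(x)/x$. The key observation is that if $A:=\mathrm{diag}(\tilde\psi_0,\dots,\tilde\psi_{N-1})$ denotes the diagonal matrix of the (real, sub-normalized) amplitudes $\tilde\psi_j:=(\bra{0}_a\otimes I_n)U_\psi\ket{0}_{a+n}$ actually encoded by $U_\psi$ (so $\tilde\psi_j\approx\psi_j/\alpha$), then applying $R(A)=\mathrm{diag}(R(\tilde\psi_j))$ to the encoded vector $\tilde\psi$ yields $\sum_j R(\tilde\psi_j)\tilde\psi_j\ket{j}=\sum_j P(\tilde\psi_j)\ket{j}$, which approximates $f(\ket{\psi}_n/\alpha)$ element-wise with no dimension-dependent prefactor. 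This reduces the task to three standard ingredients: a diagonal block-encoding of amplitudes, an eigenvalue transformation realizing $R$, and the matrix-vector product of \cref{lemma:matrix_vector_product}.

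First I would use the diagonal block-encoding of amplitudes of \citet{rattew2023non} to turn $U_\psi$ together with $U_\psi^\dagger$ into a block-encoding of $A$. Since $\tilde\psi$ is a marginal of the normalized state $U_\psi\ket{0}_{a+n}$ we have $\lnorm{\tilde\psi}_2\le 1$, so every diagonal entry is real and lies in $[-1,1]$; this block-encoding therefore has $O(1)$ subnormalization, costs $O(n+a+T)$ depth, and consumes the $2a$ ancillas of $U_\psi,U_\psi^\dagger$ plus $O(1)$ more. Second I would apply the real-polynomial (eigenvalue) version of quantum singular value transformation \citep{gilyen2019quantum} to this Hermitian block-encoding with the degree-$(k-1)$ polynomial $R$; the hypothesis $\tilde\gamma\ge\max_{x\in[-1,1]}|P(x)/x|$ guarantees $\max_{x\in[-1,1]}|R(x)/\tilde\gamma|\le 1$, so the transformation produces a genuine $(1,\cdot,\cdot)$-block-encoding of $R(A)/\tilde\gamma$, i.e.\ a $(\tilde\gamma,\cdot,\cdot)$-block-encoding of $R(A)$, using $O(k)$ calls to the diagonal block-encoding and $O(\mathrm{poly}(k,\log(\tilde\gamma/(\mathcal N\epsilon_1))))$ classical angle-finding. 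Third I would feed this block-encoding of $R(A)$ and the VE $U_\psi$ into \cref{lemma:matrix_vector_product}. Tracking scalars, $R(A)\ket{\psi}_n=\sum_j R(\tilde\psi_j)\psi_j\ket{j}=\alpha\sum_j P(\tilde\psi_j)\ket{j}$ has norm $\approx\alpha\mathcal N$, so \cref{lemma:matrix_vector_product} returns subnormalization $\tilde\gamma\alpha/(\alpha\mathcal N)=\tilde\gamma/\mathcal N$ for the state $f(\ket{\psi}_n/\alpha)/\mathcal N$; the remaining constant (giving $4\tilde\gamma/\mathcal N$) and the ancilla count $n+2a+4$ are absorbed from the $O(1)$ subnormalizations and ancillas of the amplitude-encoding and eigenvalue-transformation sub-routines.

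The error bound then follows by splitting $\lnorm{\sum_j P(\tilde\psi_j)\ket{j}-f(\ket{\psi}_n/\alpha)}_2$ into a polynomial term $\lnorm{P(\tilde\psi)-f(\tilde\psi)}_2\le\sqrt{N}\cdot\tfrac{L\epsilon_1}{2\sqrt{N}}=\tfrac{L\epsilon_1}{2}$ (this is exactly why the per-entry accuracy is demanded to scale as $1/\sqrt{N}$, and where the $\tau$ hypothesis $\tau\ge(1+\epsilon_0)/\alpha$ is used to place each $\tilde\psi_j$ inside $[-\tau,\tau]$), a Lipschitz term $\lnorm{f(\tilde\psi)-f(\ket{\psi}_n/\alpha)}_2\le L\lnorm{\tilde\psi-\ket{\psi}_n/\alpha}_2\le L\epsilon_0$, and negligible block-encoding/angle-finding errors; dividing by $\mathcal N$ gives the claimed $\tfrac{L}{\mathcal N}(\epsilon_0+\epsilon_1)$. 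I expect the main obstacle to be the careful propagation of the VE error $\epsilon_0$ through the amplitude block-encoding and the subsequent matrix-vector product --- specifically, verifying that the encoded diagonal is $\mathrm{diag}(\tilde\psi_j)$ for the \emph{same} $\tilde\psi$ to which the matrix is applied, so that the product telescopes to $P(\tilde\psi_j)$ rather than a mismatched cross term --- together with the parity bookkeeping: since $R=P/x$ has indefinite parity, its realization on a Hermitian block-encoding requires the real-polynomial form of the transformation, and tracking the resulting $O(1)$ overheads is what must be done to land on the stated constants.
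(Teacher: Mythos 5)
Your proposal follows essentially the same route as the paper's proof: a diagonal block-encoding of the encoded amplitudes (Lemma 6 of \citet{rattew2023non}), a QSVT implementation of the factored polynomial $Q(x) := P(x)/x$ with the indefinite-parity overhead (Theorem 56 of \citet{gilyen2019quantum}, whence the $4\tilde{\gamma}$ subnormalization), multiplication back onto the encoded vector so that $Q(x)\cdot x = P(x)$ telescopes, and the same three-way error split (a $\sqrt{N}$ factor times the uniform approximation error, a Lipschitz term in $\epsilon_0$, and the QSVT error weighted by $4\tilde{\gamma}$).

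The one place where your write-up, taken literally, does not deliver the stated bound is exactly the ``mismatched cross term'' you flag and defer. If you feed the $(\alpha, a, \epsilon_0)$-VE for $\ket{\psi}_n$ into \cref{lemma:matrix_vector_product} together with a block-encoding of $R(A)$ of subnormalization proportional to $\tilde{\gamma}$, the lemma's output error is $(\delta + \tilde{\gamma}\epsilon_0)/\mathcal{N}'$: the VE error is amplified by the block-encoding's subnormalization $\tilde{\gamma}$, not by $L$. Concretely, with $\tilde\psi_j$ the actually encoded amplitudes,
\begin{align}
R(A)\ket{\psi}_n = \alpha\sum_j P(\tilde\psi_j)\ket{j} + \sum_j R(\tilde\psi_j)\left(\psi_j - \alpha\tilde\psi_j\right)\ket{j},
\end{align}
and the second sum is bounded only by $\tilde{\gamma}\epsilon_0$. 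Since the hypotheses constrain $P$ near $f$ only on $[-\tau,\tau]$, the quantity $\max_{x\in[-1,1]}|P(x)/x|$ (and hence $\tilde{\gamma}$) can greatly exceed $L$, so this term cannot be absorbed into the claimed $\frac{L}{\mathcal{N}}(\epsilon_0+\epsilon_1)$. The paper's proof sidesteps the mismatch by never separating the diagonal from the vector it multiplies: it block-encodes $A = \text{diag}(U_{\psi}\ket{0}_{a+n})$ (the diagonal of the \emph{full} $(a+n)$-qubit state), applies the QSVT-transformed operator $Q(A)/4\tilde{\gamma}$ directly to $U_{\psi}\ket{0}_{a+n}$ itself --- so the telescoping $Q(A)\,U_{\psi}\ket{0}_{a+n} = P(U_{\psi}\ket{0}_{a+n})/1$ is exact on the encoded amplitudes --- and only then projects onto $\ket{0}_a$, using that this projection commutes with element-wise polynomials. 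As a result, $\epsilon_0$ enters the analysis solely through the Lipschitz comparison $\lnorm{f(\ket{\psi}_n/\alpha) - f(\ket{\phi}_n)}_2 \le L\epsilon_0/\alpha$, giving the stated bound. Your variant is rescued the same way: before invoking \cref{lemma:matrix_vector_product}, re-interpret $U_{\psi}$ as an exact $(1/\lnorm{\tilde\psi}_2,\, a,\, 0)$-VE for the normalized encoded vector (which \cref{def:VE} permits); then the product telescopes exactly to $\sum_j P(\tilde\psi_j)\ket{j}$ and your stated error decomposition goes through, landing on the paper's constants.
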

\begin{proof}
We will begin by considering the domain we require for the polynomial approximation. Essentially, by noting that if $\alpha > 1$, the function is being applied to a sub-component of an $\ell_2$-normalized vector, and thus the maximum value of its input will be strictly less than $\frac{1 + \epsilon_0}{\alpha}$. In some cases, this could yield a more efficient polynomial approximation, and so we will write our result both in the setting where the interval of approximation is $[-1, 1]$ and $[-\frac{1 + \epsilon_0}{\alpha},\frac{1 + \epsilon_0}{\alpha}]$. In particular, the function will be applied to $(\bra{0}_a\otimes I_n)U_{\psi}\ket{0}_{a + n}$, and so we must upper-bound the maximum amplitude in this quantity. 
Define $c \in \mathbb R$ such that $0 < c \le 1$. 
Define the un-normalized vector $\ket{\phi}_n := (\bra{0}_{a}\otimes I_n)U_{\psi}\ket{0}_{a + n}$. Define $\ket{E_{\psi}}_n := \ket{\psi}_n - \alpha \ket{\phi}_n$, and note that $\lnorm{\ket{\phi}_n}_2 \le 1$, and thus that $\frac{1}{\alpha}\lnorm{\ket{\psi}_n - \ket{E_{\psi}}_n}_2 \le 1$. 
Additionally, by~\cref{def:VE}, $\lnorm{\ket{E_{\psi}}_n}_2 \le \epsilon_0$. 
Define $\{\phi_j\}_j$ such that $\ket{\phi}_n = \sum_{j=1}^N \phi_j\ket{j}_n$. 
Thus, $|\phi_j| \le  \lnorm{\ket{\phi}_n}_2 \le \frac{1}{\alpha}(1 + \epsilon_0)$. Define $c := \min(\frac{1}{\alpha}(1 + \epsilon_0), 1)$.

Let $\mathcal{N}_{\psi} := \mathcal N$.
Let $\mathcal N_{P} := \lnorm{P(\ket{\psi}/\alpha)}_2$. Define the degree $k-1$ polynomial $Q(x) := P(x)/x$, and define $\epsilon_2$ such that $\max_{x\in[-c,c]}|P(x) - f(x)|\le \epsilon_2$. 

Using Lemma 6 of \citet{rattew2023non}, we can get a $(1, a + n + 2, 0)$-block-encoding $U_A$ of $A:=\text{diag}(U_{\psi}\ket{0}_{a + n})$ with $O(a + n)$ circuit depth, and $6$ additional calls to a controlled $U_{\psi}$ circuit. 
Invoking Theorem 56 of \citet{gilyen2019quantum} with $Q(x)/4\tilde{\gamma}$, we get the unitary $U_Q$, a $(1, a + n + 4, \delta)$-block-encoding for $Q(A)/4\tilde{\gamma}$, requiring $O((a + n)k)$ single and two-qubit gates, $O(k)$ calls to a controlled $U_{A}$ circuit, and $O(poly(k, \log(1/\epsilon)))$ classical computation to determine the QSVT rotation angles to implement the degree $k$ polynomial. 
We can equivalently call $U_Q$ a $(1, a + n + 4, 0)$-block-encoding for some matrix $V$, such that $\lnorm{V - Q(A)/4\tilde{\gamma}}_2 \le \delta$. Additionally, define $E_Q := V - Q(A)/4\tilde{\gamma}$. Since for any vector $\vec x$, $Q(\text{diag}(\vec x))\vec x = P(\vec x)$, we get, $VU_{\psi}\ket{0}_{a + n} = \frac{P(U_{\psi}\ket{0}_{a+n})}{4\tilde\gamma} + E_V U_{\psi}\ket{0}_{a + n}$. Additionally, noting that $(\bra{0}_a\otimes I_{n})P(\vec x) = P((\bra{0}_a\otimes I_{n})\vec x)$, and that $(\bra{0}_a\otimes I_{n})U_{\psi}\ket{0}_{a + n} = \ket{\phi}_n$, we get $(\bra{0}_a\otimes I_{n})VU_{\psi}\ket{0}_{a + n} = \frac{P(\ket{\phi}_n)}{4\tilde\gamma} + (\bra{0}_a\otimes I_{n}) E_V U_{\psi}\ket{0}_{a + n}$.  
Define $\tilde{U}_{\psi} := I_{a + n + 4}\otimes U_{\psi}$.

First, note that $\tilde U_{\psi}\ket{0}_{2a + 2n + 4} = (\ket{0}_{n + a + 4}\otimes I_{a + n})U_{\psi}\ket{0}_{a + n}$. Then, note that $(\bra{0}_{n + 2a + 4}\otimes I_{n})= (\bra{0}_{a}\otimes I_n)(\bra{0}_{n+a+4}\otimes I_{a+n})$. Consequently,  by~\cref{def:quantum_block_encoding}, since $(\bra{0}_{n + a + 4}\otimes I_{a + n}) U_Q (\ket{0}_{n + a + 4}\otimes I_{a + n}) = V$, 
\begin{align}
    (\bra{0}_{n + 2a + 4}\otimes I_{n}) U_Q \tilde{U}_{\psi}\ket{0}_{2n + 2a + 4}
    &=
    (\bra{0}_{a}\otimes I_{n}) V U_{\psi}\ket{0}_{a + n}\\
    &= 
    \frac{P(\ket{\phi}_n)}{4\tilde\gamma} + (\bra{0}_a\otimes I_{n}) E_V U_{\psi}\ket{0}_{a + n}.
\end{align}
We will now show that $U_Q\tilde{U}_{\psi}$ is a VE for $\frac{1}{\mathcal{N}_{\psi}}f(\ket{\psi}_n/\alpha)$. Precisely, we must upper-bound, 
\begin{align}
    \xi_1 &:= \lnorm{\frac{1}{\mathcal{N}_{\psi}}f(\ket{\psi}_n/\alpha) - \frac{4\tilde{\gamma}}{\mathcal{N}_{\psi}} (\bra{0}_{n + 2a + 4}\otimes I_n) U_Q \tilde{U}_{\psi}\ket{0}_{2n + 2a + 4}}_2\\
    &\le 
    \frac{1}{\mathcal N_{\psi}}\left(
    \lnorm{f(\ket{\psi}_n/\alpha) - P(\ket{\phi}_n)}_2
    +
    4\tilde\gamma\lnorm{(\bra{0}_a\otimes I_{n}) E_V U_{\psi}\ket{0}_{a + n}}_2
    \right).
\end{align}
Let $\ip{j}{E_{\psi}} := e_j$. We will now prove a sequence of simple facts. Since $|f(x) - f(x + b)| \le L|b|$, and using $\ket{\phi}_n = \frac{\ket{\psi}_n - \ket{E_{\psi}}_n}{\alpha}$, we have that $\lnorm{f(\ket{\psi}_n/\alpha) - f(\ket{\phi}_n)}_2^2 = \sum_{i=j}^N |f(\psi_j) - f((\psi_j - e_j)/\alpha)|^2 \le \frac{L^2}{\alpha^2}\sum_{j=1}^N |e_j|^2 = \frac{L^2}{\alpha^2} \lnorm{\ket{E_{\psi}}_n}_2^2 \le \frac{L^2\epsilon_0}{\alpha^2}$. Then, $\lnorm{f(\ket{\phi}_n) - P(\ket{\phi}_n)}_2^2 = \sum_{j=1}^N |f(\phi_j) - P(\phi_j)|^2 \le \max_{x\in [-c, c]}|f(x) - P(x)|^2 N \le \epsilon_2^2 N$.
Then,
\begin{align}
    \lnorm{f(\ket{\psi}_n/\alpha) - P(\ket{\phi}_n)}_2
    &=
    \lnorm{f(\ket{\psi}_n/\alpha) - f(\ket{\phi}_n) + f(\ket{\phi}_n) - P(\ket{\phi}_n)}_2\\
    &\le 
    \frac{L\epsilon_0}{\alpha}
    + 
    \epsilon_2\sqrt{N}. 
\end{align}
At this point, the proof branches into two cases. The first case is where we simply use the uniform approximation to the function on the entire interval $[-1, 1]$. The second case, which should only be used when approximating the function on $[-\tau, \tau]$ yields a better asymptotic approximation, will be proven after. 

Noting that $\lnorm{(\bra{0}_a\otimes I_{n}) E_V U_{\psi}\ket{0}_{a + n}}_2 \le \delta$, we can now get the overall bound of 
\begin{align}
    \xi_1 &\le 
    \frac{1}{\mathcal N_{\psi}}\left(
        \frac{L\epsilon_0}{\alpha}
        + 
        \epsilon_2\sqrt{N}
        +
        4\tilde\gamma \delta
    \right)
    \le 
    \frac{1}{\mathcal N_{\psi}}\left(
        L\epsilon_0
        + 
        \epsilon_2\sqrt{N}
        +
        4\tilde\gamma \delta
    \right).
\end{align}
Thus, we have shown that $U_Q\tilde{U}_{\psi}$ is a $(\frac{4\tilde\gamma}{\mathcal N_{\psi}}, 2a + n+4, \frac{1}{\mathcal N_{\psi}}(L\epsilon_0 + \epsilon_2\sqrt{N} + 4\tilde\gamma \delta))$-VE for $\frac{1}{\mathcal{N}_{\psi}}f(\ket{\psi}_n/\alpha)$. To get the overall error-bound, we will set $\epsilon_2\sqrt{N} = L\epsilon_1/2$, and $4\tilde\gamma \delta = L\epsilon_1/2$, yielding $\epsilon_2 = \frac{L\epsilon_1}{2\sqrt{N}}$, and $\delta = \frac{L\epsilon_1}{8\tilde\gamma}$. This gives a $(\frac{4\tilde\gamma}{\mathcal N_{\psi}}, 2a + n+4, \frac{L}{\mathcal N_{\psi}}(\epsilon_0 + \epsilon_1))$-VE for $\frac{1}{\mathcal{N}_{\psi}}f(\ket{\psi}_n/\alpha)$, and requires $O(k)$ calls to a controlled $U_{\psi}$ and $U_{\psi}^{\dagger}$ circuit, and has a total circuit depth of $O(k (n + a + T))$. This circuit can be obtained with $O(\text{poly}(k, \log(\frac{\tilde{\gamma}}{L \epsilon_1})))$ classical time complexity.
\end{proof}

To make the preceding result easier to use, we provide a special case for transformation by the error function, and again do not claim novelty. 
\begin{lemma}[Application of $\erf(\nu x)$ to a Vector Encoding]\label{lemma:VE_nlat_lemma_importance_weighting_approximated_function_erf_2}
Let $N = 2^n$, let $\nu\ge 1/2$, let $1 \ge \epsilon_0 \ge 0$ and let $0 < \epsilon_1 \le 2$.
We are given a unitary matrix $U_{\psi}$ with circuit complexity $O(T)$ which is an $(\alpha, a, \epsilon_0)$-VE for the $n$-qubit quantum state $\ket{\psi}_n$, and we are also given the error function $f_{\nu}(x) = \erf(\nu x)$. Let $\mathcal N := \lnorm{f_{\nu}(\ket{\psi}_{n}/\alpha)}_2$.
Then, we can obtain a $\left(\frac{16\nu}{\sqrt{\pi}\mathcal{N}}, 2a + n+4, 2\nu\alpha(\epsilon_0 + \epsilon_1)\right)$-VE for $f_{\nu}(\ket{\psi}_{n}/\alpha)/\mathcal N$, with $O(\nu \log(\frac{\sqrt{N}}{\epsilon_1}))$ queries to a controlled $U_{\psi}$ and $U_{\psi}^{\dagger}$ circuit, and with a total circuit depth of $O(\nu \log(\frac{\sqrt{N}}{\epsilon_1})(a + n + T))$. Moreover, $\mathcal N \ge \frac{1}{2\alpha}$.
\end{lemma}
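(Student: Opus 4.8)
The plan is to obtain this as a direct specialization of the general non-linear amplitude transformation result, \cref{adl:lemma:nlat_of_VE}, to the function $f_\nu(x) = \erf(\nu x)$, and then to discharge the three function-specific quantities that lemma leaves abstract: the Lipschitz constant $L$, the polynomial $P$ together with its bound $\tilde\gamma \ge \max_{x\in[-1,1]}|P(x)/x|$, and the output norm $\mathcal N$. First I would verify the hypotheses. Since $\erf$ is odd and $\erf(0)=0$, we have $f_\nu(0)=0$ as required. Differentiating, $f_\nu'(x) = \tfrac{2\nu}{\sqrt\pi}e^{-\nu^2 x^2}$, which is maximized at $x=0$, so $f_\nu$ is Lipschitz with constant $L = \tfrac{2\nu}{\sqrt\pi}$; the same computation shows $\max_{x}|f_\nu(x)/x| = \tfrac{2\nu}{\sqrt\pi}$ (using oddness and concavity of $\erf$ on $x>0$).

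Next comes the polynomial. I would invoke the standard polynomial approximation of the error function (as in \citet{low2017hamiltonian} and used in \citet{rattew2023non}) to obtain an odd polynomial $P$ with $P(0)=0$, degree $k = O(\nu\log(\tfrac{\sqrt N}{\epsilon_1}))$, satisfying $\max_{x\in[-1,1]}|P(x) - \erf(\nu x)| \le \tfrac{L\epsilon_1}{2\sqrt N}$, i.e.\ I take $\tau=1$ in \cref{adl:lemma:nlat_of_VE}. The crucial extra property is a bound on $\tilde\gamma := \max_{x\in[-1,1]}|P(x)/x|$. Because $P$ is uniformly close to $\erf(\nu x)$, whose ratio-to-$x$ is at most $\tfrac{2\nu}{\sqrt\pi}$, and because $P$ is odd with $P(0)=0$ (so $P(x)/x$ is a polynomial taking value $P'(0)\approx \tfrac{2\nu}{\sqrt\pi}$ at the origin), I can take $\tilde\gamma = \tfrac{4\nu}{\sqrt\pi}$, which absorbs the approximation error with a factor-two safety margin. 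Feeding $\tilde\gamma = \tfrac{4\nu}{\sqrt\pi}$ and $L = \tfrac{2\nu}{\sqrt\pi}$ into \cref{adl:lemma:nlat_of_VE} immediately yields a $(\tfrac{4\tilde\gamma}{\mathcal N}, 2a+n+4, \tfrac{L}{\mathcal N}(\epsilon_0+\epsilon_1))$-VE with scaling factor $\tfrac{16\nu}{\sqrt\pi\,\mathcal N}$ and the advertised ancilla count; the $O(k)$ query and $O(k(a+n+T))$ depth bounds specialize to $O(\nu\log(\tfrac{\sqrt N}{\epsilon_1}))$ and $O(\nu\log(\tfrac{\sqrt N}{\epsilon_1})(a+n+T))$.

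It remains to establish the norm bound $\mathcal N \ge \tfrac{1}{2\alpha}$ and to convert the error $\tfrac{L}{\mathcal N}(\epsilon_0+\epsilon_1)$ into the stated form. Since $\ket{\psi}_n$ is $\ell_2$-normalized, every entry of $\ket{\psi}_n/\alpha$ lies in $[-1/\alpha,1/\alpha]$. Concavity of $\erf$ on $[0,\infty)$ together with oddness gives $|\erf(\nu x)| \ge \alpha\,\erf(\nu/\alpha)\,|x|$ for $|x|\le 1/\alpha$ (the curve lies above the chord through the origin). Squaring and summing over the entries, and using $\sum_j |\psi_j/\alpha|^2 = 1/\alpha^2$, yields $\mathcal N \ge \erf(\nu/\alpha)$. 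Since $\nu\ge 1/2$ and $\erf(w)\ge w$ for $w\in(0,1/2]$ — a one-line estimate from $\int_0^w e^{-t^2}\,dt \ge \int_0^w(1-t^2)\,dt = w - w^3/3$ — I conclude $\mathcal N \ge \erf(\tfrac{1}{2\alpha}) \ge \tfrac{1}{2\alpha}$. Plugging $\tfrac{1}{\mathcal N}\le 2\alpha$ into the error term gives $\tfrac{L}{\mathcal N}(\epsilon_0+\epsilon_1) \le 2\alpha L(\epsilon_0+\epsilon_1)$, which is the advertised $2\nu\alpha(\epsilon_0+\epsilon_1)$ up to the numerical factor $\tfrac{2}{\sqrt\pi}$ coming from $L$.

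The main obstacle is the polynomial step: simultaneously controlling the degree at $O(\nu\log(\tfrac{\sqrt N}{\epsilon_1}))$, the uniform error on $[-1,1]$, \emph{and} the ratio bound $\tilde\gamma$. The degree and uniform error are standard for the error function, but the $\max|P(x)/x|$ bound — which directly fixes the VE scaling factor and hence the runtime of any subsequent normalization via \cref{lemma:VE_fixed_amplitude_amplification_v2} — requires a careful derivative/Markov-type estimate of $P$ near the origin. I would lean on the corresponding lemma of \citet{rattew2023non}, which packages exactly this ratio bound for the amplitude-transformation setting; everything else is bookkeeping on VE parameters.
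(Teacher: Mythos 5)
Your proposal is correct and follows essentially the same route as the paper: specialize \cref{adl:lemma:nlat_of_VE} to $f_\nu(x)=\erf(\nu x)$, supply $L=\tfrac{2\nu}{\sqrt\pi}$, the degree-$O(\nu\log(\sqrt N/\epsilon_1))$ polynomial with ratio bound $\tilde\gamma=\tfrac{4\nu}{\sqrt\pi}$, and the lower bound $\mathcal N\ge\tfrac{1}{2\alpha}$ (the paper packages all of these in \cref{lemma:error_function_technical_details}, which is what you correctly fall back on). One caveat: your inline justification of $\tilde\gamma$ — that uniform closeness of $P$ to $\erf(\nu x)$ plus oddness bounds $|P(x)/x|$ — is not valid on its own, since a uniform error $\delta$ only gives $|P(x)/x|\le|\erf(\nu x)/x|+\delta/|x|$, which blows up near the origin, and uniform approximation says nothing about $P'(0)$; the bound genuinely requires the derivative estimate on the explicit Chebyshev--Bessel expansion, exactly as in the paper's \cref{lemma:error_function_technical_details}, so your deferral to that packaged result is the load-bearing step, not the heuristic. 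Your norm bound via concavity (chord through the origin, giving $\mathcal N\ge\erf(\nu/\alpha)\ge\tfrac{1}{2\alpha}$) is a valid alternative to the paper's argument that $\min_{x\in[-1,1]}|\erf(\nu x)/x|\ge 1/2$, and the $\tfrac{2}{\sqrt\pi}$ constant slack you flag in the final error bound is present in the paper's own statement as well.
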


\begin{proof}\label{proof:VE_nlat_lemma_importance_weighting_approximated_function_erf_2}

From~\cref{lemma:error_function_technical_details}, we know that the Lipschitz constant $L$ of $\erf(\nu x)$ is $L = \frac{2\nu}{\sqrt{\pi}}$.

Define $c = O(1/\alpha)$. 
Using~\cref{lemma:error_function_technical_details}, we can obtain a degree $k \in O(\nu\log(\nu/\alpha\epsilon'))$ polynomial $P_{k,\nu}$ such that $P_{k,\nu}(0)=0$ and $\max_{x\in[-c,c]}|P_{k,\nu}(x) - f_{\nu}(x)|\le \epsilon'$. Since we need $\epsilon' \le \frac{L\epsilon_1}{2\sqrt{N}}$, we can set $\epsilon' = \frac{\nu \epsilon_1}{10\sqrt{N}}$ in accordance with~\cref{adl:lemma:nlat_of_VE}, we have a degree $k\in O(\nu \log(\frac{\sqrt{N}}{\epsilon_1}))$ polynomial approximation. 

From~\cref{lemma:error_function_technical_details}, for $\nu \ge 1/2$, we know that $\forall x\in[-1, 0)\cup (0,  1], |\erf(\nu x)|\ge |x/2|$. Consequently, $\mathcal N^2 = \sum_{j=1}^N |f(\psi_j/\alpha)|^2 \ge (\frac{1}{2\alpha})^2$. Additionally, we know that $\tilde{\gamma} = \max_{x\in[-1,1]}|P_{k,\nu}(x)/x| \le \frac{4\nu}{\sqrt{\pi}}$. Invoking~\cref{adl:lemma:nlat_of_VE}, setting with all of the above facts and setting $\tilde{\gamma} = \frac{4\nu}{\sqrt{\pi}}$ then gives the complexity. 
\end{proof}

\section{General Architectural Blocks}\label{adl:appendix:section:general_architecture_blocks}
The architectural blocks we present in this paper are intended to demonstrate how the different operations on encoded matrices and vectors can be combined to coherently implement various architectures on quantum computers. There is a rich set of possibilities, and we are only exploring a small but elucidating set.

Two of the most important concepts governing the complexity of the quantum implementation of any classical architecture are: (1) the number of non-linear activation layers, and (2) the $\ell_2$ norm of the vectorized input tensor as it propagates through the network.

In order for a unitary matrix (a linear operator) to enact a non-linear transformation on a vector, its definition must depend on the vector it is being applied to. Consequently, techniques which enact non-linear transformations on state-amplitudes (e.g.,~\citet{rattew2023non, guo2024nonlinear}) must have circuit definitions which depend on the vector-encoding circuit they are being applied to. Thus, if the unitary circuit implementing the transformation requires \textit{even two} calls to the input vector encoding, then the circuit complexity will grow exponentially with the number of non-linear activations. Consequently, wide but shallow multi-layer architectures are ideal for quantum acceleration. Finally, an alternative to fully coherent quantum acceleration is to periodically read-out the vector in intermediate layers of the network. As discussed in the introduction, several quantum computing papers have proposed this approach. However, in general, since reading out a quantum state incurs a dimension-dependent cost~\citep{cramer2010efficient, van2023quantum} (and incurs polynomial error-dependence) this either imposes significant constraints on the types of architectures that can be accelerated (requiring frequent mappings to very low-dimensional spaces where readout is cheaper), or incur asymptotically dominating error accumulation. Nevertheless, there are certain settings where periodic state readout may be desirable, and our techniques are fully compatible with these ideas.

The second key concept governing the complexity of a quantum implementation of an architecture relates to the norm of the encoded vector as it propagates through the network. Whenever a sample is drawn from an encoded vector, a cost inversely proportional to the norm of the encoded vector must be paid. Similarly, whenever an encoded vector is normalized, an inverse norm-cost must be paid. Consequently, to obtain provable end-to-end complexity results, we need to be able to lower-bound the norm of the encoded vector whenever we apply a layer norm (or draw a sample from the output of the network). A key tool in doing this is the skip connection, as it allows the norm from the previous layer to be preserved in the output of the next layer. Additionally, if the weight layers are normalized (i.e., if $W$ represents the matrix form of any parameter layer, then $\lnorm{W}_2 \le 1$), and the activation function is scaled so that its Lipschitz constant on the interval $[-1, 1]$ is at most $1$, this results in provable norm-preservation bounds. Requiring weight-layers to be sub-normalized has been extensively explored in the classical deep learning literature~\citep{miyato2018spectral,yoshida2017spectral, gouk2021regularisation}, as sub-normalization can help prevent network norm explosion as deeper networks are trained.

It is worth briefly noting that, in certain cases, the sub-normalization condition on the weight layers can be removed (i.e., for matrix $W$, $0 \le \lnorm{W}_2 \le c$ where $c \ge 1$). This is done by implementing $W/\lnorm{W}_2$, and then scaling the input of the subsequent activation function by $\lnorm{W}_2$. If using the error function activation, this increases the cost of the polynomial approximation by an amount proportional to $c$. We do not consider this regime as it makes it more challenging to prove norm preservation properties after the skip connection, but stress that quantum computers can actually implement such regimes. Numerical studies examining norm preservation for such networks could shed light into their efficiency.

We will now formally define our $\ell_2$-norm squared pooling; this is essentially just an $\ell_2$-norm pooling operation followed by an element-wise square. Throughout we will assume that dimensions neatly line-up, noting that if they don't padding can be used to easily and efficiently ensure alignment.

\begin{definition}[Squared $\ell_2$ Norm Pooling]\label{definition:ell_2_norm_squared_pooling}
Given an $N$-dimensional vector $\ket{\phi} = \sum_{k=1}^N \phi_k\ket{k}$, and a positive integer $C$ such that $N$ is divisible by $C$, define $f_j := (j-1)\frac{N}{C} + 1$. Then, we define $\ell_2$-norm squared pooling by $\pool_C(\ket{\phi}) :=\sum_{j=1}^C \sum_{l = f_j}^{j\frac{N}{C}} \phi_l^2 \ket{j}$, 
where $\{\ket{j}\}$ is the set of $C$-dimensional basis vectors. 
\end{definition}

\begin{lemma}[Error Propagated Through $\ell_2$ Norm Squared Pooling]\label{lemma:error_propogated_through_pooling}
Define the $N$-dimensional vectors $\ket{\phi}$ and $\ket{\tilde{\phi}}$, such that $\lnorm{\ket{\phi} - \ket{\tilde \phi}}_2 \le \epsilon$. 
Then, defining a positive integer $C$ such that $N$ is divisible by $C$, and defining $\pool_C$ as per~\cref{definition:ell_2_norm_squared_pooling}, we have that $\lnorm{\pool_C(\ket{\phi}) - \pool_C(\ket{\tilde \phi})}_2 \le \frac{2 N \epsilon}{\sqrt{C}}$.
\end{lemma}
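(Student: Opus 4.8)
The plan is to reduce the $\ell_2$ distance between the two pooled outputs to a per-entry control on the amplitude differences $\delta_l := \phi_l - \tilde\phi_l$, exploiting the difference-of-squares factorization $\phi_l^2 - \tilde\phi_l^2 = \delta_l(\phi_l + \tilde\phi_l)$ so that the nonlinear (squaring) map becomes locally Lipschitz on the bounded amplitude range.

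First I would record two elementary facts about the perturbation. Writing $\vec\delta := (\delta_1,\dots,\delta_N)$, the hypothesis $\lnorm{\ket{\phi} - \ket{\tilde\phi}}_2 \le \epsilon$ is exactly $\lnorm{\vec\delta}_2 \le \epsilon$, so every single entry obeys $|\delta_l| \le \lnorm{\vec\delta}_2 \le \epsilon$. Second, since $\ket{\phi}$ and $\ket{\tilde\phi}$ are (sub-)normalized state-amplitude vectors, each entry satisfies $|\phi_l|, |\tilde\phi_l| \le 1$, hence $|\phi_l + \tilde\phi_l| \le 2$. Combining these through the factorization yields the clean per-entry bound $|\phi_l^2 - \tilde\phi_l^2| = |\delta_l|\,|\phi_l + \tilde\phi_l| \le 2\epsilon$.

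Next I would bound each coordinate of the output-difference vector. By~\cref{definition:ell_2_norm_squared_pooling}, the $j$-th coordinate difference is $D_j := \sum_{l=f_j}^{j N/C}(\phi_l^2 - \tilde\phi_l^2)$, a sum over a block $B_j$ of exactly $N/C$ consecutive indices. The triangle inequality together with the per-entry bound gives $|D_j| \le \sum_{l \in B_j} 2\epsilon = 2\epsilon \cdot (N/C) = 2N\epsilon/C$, uniformly in $j$. Finally, since the pooled outputs live in $\mathbb{R}^C$, I would assemble the coordinate bounds as $\lnorm{\pool_C(\ket{\phi}) - \pool_C(\ket{\tilde\phi})}_2 = \sqrt{\sum_{j=1}^{C} D_j^2} \le \sqrt{C \cdot (2N\epsilon/C)^2} = 2N\epsilon/\sqrt{C}$, which is the claimed inequality.

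The computation is routine; the only genuinely load-bearing step is the entrywise bound $|\phi_l + \tilde\phi_l| \le 2$, i.e.\ the assumption that the inputs are (sub-)normalized quantum states, which holds in every setting where this lemma is invoked. I would flag this explicitly, since without an a priori magnitude bound on the amplitudes the squaring map is not Lipschitz and no such stability estimate can hold. I would also note that sharpening the crude $\max_l |\delta_l| \le \epsilon$ step into a per-block Cauchy–Schwarz estimate would improve the constant to $2\sqrt{N}\epsilon/\sqrt{C}$, but the stated bound is already sufficient for the downstream error-propagation arguments.
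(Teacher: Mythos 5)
Your proof is correct and follows essentially the same route as the paper's: the difference-of-squares factorization $\phi_l^2 - \tilde\phi_l^2 = (\phi_l - \tilde\phi_l)(\phi_l + \tilde\phi_l)$, the entrywise bounds $|\phi_l - \tilde\phi_l| \le \epsilon$ and $|\phi_l + \tilde\phi_l| \le 2$, the per-block sum bound $2N\epsilon/C$, and summing the $C$ squared coordinates. Your observation that the bound $|\phi_l + \tilde\phi_l| \le 2$ is the load-bearing (and unstated) hypothesis is apt — the paper's proof invokes it just as silently, even though the lemma statement does not formally require the vectors to be sub-normalized.
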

\begin{proof}
Let $\ket{\phi} = \sum_{j=1}^N \phi_j\ket{j}$, and let $\ket{\tilde \phi} = \sum_{j=1}^N \tilde\phi_j\ket{j}$. Then, $\lnorm{\ket{\phi} - \ket{\tilde \phi}}_2 \le \epsilon$ implies that $\forall j, |\phi_j - \tilde \phi_j| \le \epsilon$. Then, additionally using that $|\phi_j + \tilde \phi_j| \le 2$,
\begin{align}
\lnorm{\pool_C(\ket{\phi}) - \pool_C(\ket{\tilde \phi})}_2^2
&=
\sum_{j=1}^C\left(\sum_{l=f_j}^{jN/C}(\phi_l - \tilde\phi_l)(\phi_l + \tilde\phi_l)\right)^2
\le 
4\sum_{j=1}^C\left(\frac{N\epsilon}{C}\right)^2 = \frac{4 N^2\epsilon^2}{C}.
\end{align}
\end{proof}

\begin{proof}[\hypertarget{proof:general_skip_norm_block}{\textbf{Proof of~\cref{adl:lemma:general_skip_norm_block}}}]\label{proof:general_skip_norm_block}
The parameter $\kappa$ in the lemma is designed for situations where we don't have a perfect block-encoding of the matrix we would like. For instance, in cases where we want to apply some matrix $A$, but we are only able to get a block-encoding of $A/2$. We can fix this when applying the activation function by scaling its input to remove the $1/2$ factor.

Let $\nu := 4\kappa/5$. Let $\ket{\phi_1}_n := W \ket{\psi}_n/\kappa$, $\mathcal N_1 := \lnorm{\ket{\phi_1}_n}_2$, and $\ket{\Phi_1}_n := \ket{\phi_1}_n/\mathcal N_1$. Using~\cref{lemma:matrix_vector_product} we get $U_1$ a $(\mathcal N_1^{-1}, a + b, \epsilon_0 \mathcal N_{1}^{-1})$-VE for $\ket{\Phi_1}_n$ with $O(T_1 + T_2)$ circuit complexity.

Let $\ket{\phi_2}_n := f(W\ket{\psi}_n))$, $\mathcal N_2 := \lnorm{\ket{\phi_2}_n}_2$, and $\ket{\Phi_2}_n := \ket{\phi_2}_n/\mathcal N_2$. Define $0 < \epsilon_1 \le 1$. Invoking~\cref{lemma:VE_nlat_lemma_importance_weighting_approximated_function_erf_2} with $U_1$ and $f(\kappa x) = \erf(4\kappa x/5) = \erf(\nu x)$, we obtain $U_2$ a $\left(\frac{16\nu}{\sqrt{\pi}\mathcal N_2}, 2(a + b) + n + 4, 2\nu \mathcal{N}_1^{-1}(\epsilon_0 + \epsilon_1) \right)$-VE for $f(\ket{\Phi_1}_n \mathcal N_1)/\lnorm{f(\ket{\Phi_1}_n \mathcal N_1)}_2 = \ket{\Phi_2}_n$. $U_2$ has circuit complexity $O(\nu \log(\frac{\sqrt{N}}{\epsilon_1})(a + b + n + T_1 + T_2))$. 

So as to invoke~\cref{adl:lemma:vector_sum} to implement the skip connection and obtain a state proportional to $\ket{\psi_f}_n$, we will need to factor out a common factor of $\frac{\sqrt{\pi}}{16\nu}$. Consequently, we invoke~\cref{adl:lemma:vector_de_amplification} on $U_{\psi}$ to obtain $U_{\psi}'$ a $(\frac{16\nu}{\sqrt{\pi}}, a + 2, \epsilon_0)$-VE for $\ket{\psi}_n$ with $O(T_1 + a)$ circuit complexity.

Define $\ket{\gamma}_n := \frac{\sqrt{\pi}}{32 \nu}\left(\ket{\psi}_n +  \ket{\Phi_2}_n\mathcal N_2\right) = \frac{\sqrt{\pi}}{32 \nu}\left(\ket{\psi}_n +  f(W\ket{\psi}_n)\right)$, $\mathcal N_{\gamma} := \lnorm{\ket{\gamma}_n}_2$ and $\ket{\Gamma}_n := \ket{\gamma}_n/\mathcal N_{\gamma}$. Then, we can invoke~\cref{adl:lemma:vector_sum} (setting $\tau = 1/2$) with $U_{\psi}'$ and $U_2$, yielding $U_3$ a $\left(\mathcal N_{\gamma}^{-1}, 2(a + b) + n + 5, N_{\gamma}^{-1}[\frac{\epsilon_0\sqrt{\pi}}{16\nu} + \frac{\mathcal N_2 \sqrt{\pi}}{16\nu}(2\nu \mathcal N_1^{-1}(\epsilon_0 + \epsilon_1))] \right)$-VE for $\ket{\Gamma}_n$, with circuit complexity $O(\nu \log(\frac{\sqrt{N}}{\epsilon_1})(a + b + n + T_1 + T_2))$. We will now simplify the error component of this VE statement.

First, define $\ket{x}_n = \sum_{i}x_i\ket{i}_n = W\ket{\psi}_n$. Then, using the fact that $f(x)=\erf(4x/5)$ has a Lipschitz-constant of $\frac{8}{5\sqrt{\pi}}$ (as per~\cref{lemma:VE_nlat_lemma_importance_weighting_approximated_function_erf_2}), $\mathcal N_2^2 = \lnorm{f(W\ket{\psi}_n)}_2^2 \le \sum_{i}|f(x_i)|^2 \le (\frac{8}{5\sqrt{\pi}})^2 \sum_{i}|x_i|^2  = (\frac{8}{5\sqrt{\pi}})^2 \lnorm{W\ket{\psi}_n}_2^2$. Since $\lnorm{W}_2 \le 1$, $\lnorm{W\ket{\psi}_n}_2 \le 1$, and thus $\mathcal N_2 \le \frac{8}{5\sqrt{\pi}} \le 0.91$. 
Next, we must lower-bound $\mathcal N_{\gamma}$. Thus, $\mathcal N_{\gamma}^2 = (\frac{\sqrt{\pi}}{32\nu})^2\left(1 + \mathcal{N}_2^2 + 2\mathcal{N}_2\ip{\Phi_2}{\psi}\right) \ge  (\frac{\sqrt{\pi}}{32\nu})^2\left(1 - \mathcal{N}_2\right)^2 \ge (\frac{\sqrt{\pi}}{32\nu})^2(0.09)^2$. Consequently, using that $\nu \le 8/5$ (since $\kappa \le 2$) we get that $\mathcal N_{\gamma} \ge 1/400$. Additionally, it is straight-forward to show that $\mathcal{N}_2/\mathcal N_1 \le 0.91\kappa \le 2$. Inserting all of these values and performing simple algebra, we find that $U_3$ is equivalently a $\left(\mathcal N_{\gamma}^{-1}, 2(a + b) + n + 5, 355(\epsilon_0 + \epsilon_1) \right)$-VE for $\ket{\Gamma}_n$.

Let $0 < \epsilon_2 \le 1$. 
Then, invoking~\cref{lemma:VE_fixed_amplitude_amplification_v2}, we get $U_f$, a $(1, 2(a+b) + n + 9, 2(355(\epsilon_0 + \epsilon_1) + \epsilon_2))$-VE for $\ket{\Gamma}_n$, with circuit complexity $O(\log(\frac{\sqrt{N}}{\epsilon_1})\log(\frac{1}{\epsilon_2})(a + b + n + T_1 + T_2))$. If we let $\epsilon_2 = \epsilon_1$, then we can simplify this to a $(1, 2(a+b) + n + 9, 712({\epsilon_0 + \epsilon_1}))$-VE with circuit complexity $O(\log(\frac{\sqrt{N}}{\epsilon_1})\log(\frac{1}{\epsilon_1})(a + b + n + T_1 + T_2))$.
\end{proof}

\begin{figure}
    \centering
    \includegraphics[width=0.58\linewidth]{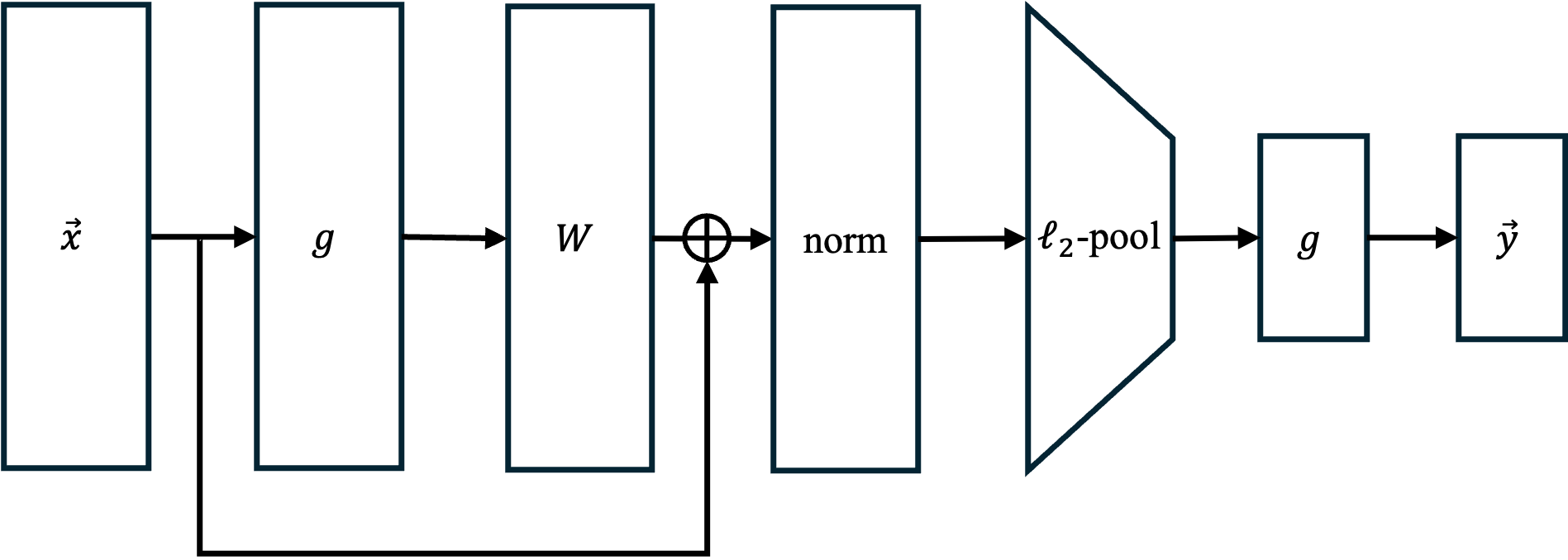}
    \caption{\textbf{Full-rank linear-pooling output block.} 
    }
    \caption{This figure shows the final output architectural block used in our neural networks for Regimes 1 and 2. Here, $g(x) = x^2$ and $W$ is a sub-normalized (potentially full-rank and dense) matrix.}
\label{fig:output_full_rank_pool_block}
\end{figure}

\begin{proof}[\hypertarget{proof:adl:multiple_skip_norm_blocks}{\textbf{Proof of~\cref{lemma:adl:multiple_skip_norm_blocks}}}]\label{proof:adl:multiple_skip_norm_blocks}
This result comes from repeatedly invoking~\cref{adl:lemma:general_skip_norm_block}, with the output of each application becoming the input of the next.

We will first give a bound on the total number of ancilla qubits of the block-encoding giving the final output after $k$ residual block layers. Let $a_0 = a$. $c = 2b + n + 9$. After one application of the residual block, the number of ancillas is given by $a_1 = 2a_0 + c$. Then, the general form for the number of ancillas is given by the recurrence $a_i = 2a_{i-1} + c$. We can obtain an upper-bound by instead setting $a_i = 2(a_{i-1} + c)$. Clearly, $a_i = 2^i(a + c) = 2^i (a + 2b + n + 9)$. Thus, we have a $(1, 2^k(a + 2b + n +9), \epsilon)$-block-encoding. 

We will now determine a bound on the resulting error, $\epsilon$. Note that the $i^{th}$ residual block introduces a new error-parameter $\epsilon_i$ which controls the error in the activation function and the normalization of that block. 
After the first iteration, the error $\delta_1$ is given by $\delta_1 = 712(\epsilon_0 + \epsilon_1)$. After the second iteration, the error from the previous iteration becomes the new $\epsilon_0$, and so the error after the second iteration is given by $\delta_2 = 712(\delta_1 + \epsilon_2)$. We can set $\epsilon_i = \delta_{i-1}$, giving the general form of the error after the $i^{th}$ residual layer of $\delta_i = 1424\delta_{i-1} = 724\cdot1424^{i-1}\epsilon_1 = 1424^{i}\epsilon_1/2$. Noting that we want a final error of at most $\epsilon$, we must set $\delta_k \le \epsilon$. I.e., we can set $\epsilon = 1424^{k}\epsilon_1/2 \implies \epsilon_1 = 2\epsilon/1424^k$. Thus, for $i > 1$, each $\epsilon_i = \delta_{i-1} = 1424^{i}\epsilon_1/2 = \frac{1424^{i}}{1424^k}\epsilon = {\epsilon}/{1424^{k-i}}$.

Define $h(\epsilon_i):= \log(\sqrt{N}/\epsilon_{i})\log(1/\epsilon_{i})$. Let the circuit complexity of the block-encoding after applying $i$ residual blocks be $O(R_i)$. Noting that $R_1 \in O(h(\epsilon_1)(a_0 +b + n + T_1 + T_2))$, $R_i$ asymptomatically dominates $T_1, T_2, a_{i-1}, n$ and $b$. Then, the circuit complexity after block $i +1$ will be $O(h(\epsilon_{i+1})(a_i + b + n + R_i + T_1)) \in O(h(\epsilon_{i+1})(a_i + R_i))$. Then, we can simplify to find that $R_k \in O((a_k + R_1)\prod_{i=1}^{k}h(\epsilon_i)) \in O((2^k(a + 2b + n) + T_1 + T_2)\prod_{i=1}^{k} h(\epsilon/1424^{k-i}))$. Noting that $\prod_{i=1}^k h(\epsilon_i) \in O((\prod_{i=1}^k \log(\sqrt{N}/\epsilon_i))^2)$, $\prod_{i=1}^{k} h(\epsilon/1424^{k-i})) \in O((\prod_{i=1}^k (k - i +\log(\sqrt{N}/\epsilon_i)))^2) \in O((k + \log(\sqrt{N}/\epsilon))^{2k})$. Since $k$ is an asymptotic constant, $O(k + \log(\sqrt{N}/\epsilon)) \in O(\log(\sqrt{N}/\epsilon))$, and so $\prod_{i=1}^{k} h(\epsilon/1424^{k-i})) \in O(\log(\sqrt{N}/\epsilon)^{2k})$. Thus, the overall circuit complexity is given by $O(\log(\sqrt{N}/\epsilon)^{2k} (a + 2b + n + T_1 + T_2))$. 

\end{proof}

\begin{lemma}[Full-Rank Linear Pooling Output Block]\label{adl:lemma:full_rank_linear_pooling_output_block}
Consider the architecture block shown in~\cref{fig:output_full_rank_pool_block}.
Let the dimension of the input vector be $N = 2^n$, and let the dimension of the output of the network block be $C = 2^c$ (i.e., the number of classes). Let the output of the network be given by the vector $\ket{y}_c$. 
Suppose we have $U_{\psi}$ an $(1, a, \epsilon_0)$-VE for the $N$-dimensional input vector $\ket{\psi}_n = \vec x$ with $O(T_{\epsilon_0})$ circuit complexity. Here, $T_{\epsilon_0}$ makes explicit that the complexity of the input circuit will be dependent on the desired error of the vector encoding of the layer input to this architectural block. 
Suppose we are given access to an arbitrary matrix $W$ such that $\lnorm{W}_2 \le 1$ as per~\cref{adl:theorem:matrix_vector_squared_product}. Then, if the weight on the skip-path is $\tau = 0.51$, we can draw a sample from a vector $\ket{\tilde \phi}_c$ such that $\lnorm{\ket{\tilde \phi}_c - \ket{y}_c}_2 \le \epsilon$ with $O(\log(\frac{N}{\sqrt{C}\epsilon})(T_{\epsilon_0} + a + n^2))$ circuit complexity and with $O(a + n)$ total ancilla qubits.
\end{lemma}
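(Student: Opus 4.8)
The plan is to chain three earlier primitives—the matrix-vector-squared product, the vector sum, and vector normalization—and then obtain the squared-$\ell_2$ pooling essentially for free via a partial measurement. First I would invoke \cref{adl:theorem:matrix_vector_squared_product} on the input VE $U_\psi$ and the matrix $W$ (accessed through \cref{adl:preprocessed_matrix_qram_datastructure}), producing a VE for $W(\vec x)^2/\mathcal N_W$, where $g(x)=|x|^2$ and $\mathcal N_W := \lnorm{W g(\ket{\psi}_n)}_2$. Since the input VE has scaling $\alpha=1$, this output VE has scaling $\mathcal N_W^{-1}$, error $2\epsilon_0/\mathcal N_W$, $O(a+n)$ ancillas, and depth $O(T_{\epsilon_0}+n^2)$. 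Crucially, the element-wise square obeys $\lnorm{(\vec x)^2}_2 \le \lnorm{\vec x}_2^2 = 1$, so with $\lnorm{W}_2\le 1$ we get $\mathcal N_W = \lnorm{W(\vec x)^2}_2 \le 1$; this bound is used repeatedly below.

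Next I would realize the skip connection with \cref{adl:lemma:vector_sum} at weight $\tau=0.51$, summing the input VE (scaling $1$, encoding $\vec x$) with the $W(\vec x)^2$ VE (scaling $\beta=\mathcal N_W^{-1}$). Because the $1/\beta = \mathcal N_W$ factor in the lemma cancels the $\mathcal N_W^{-1}$ normalization of the second summand, \cref{adl:lemma:vector_sum} directly yields the exact encoded vector $\ket{\Gamma}_n = \tau \vec x + (1-\tau)W(\vec x)^2$ with no residual Frobenius-type factor; the propagated error is $(\epsilon_0 + 2\epsilon_0)/\mathcal N_\Gamma = 3\epsilon_0/\mathcal N_\Gamma$ where $\mathcal N_\Gamma := \lnorm{\ket{\Gamma}_n}_2$.

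The crux—and the step I expect to be the main obstacle—is lower-bounding $\mathcal N_\Gamma$ by a constant, since every subsequent normalization and sampling cost scales as $1/\mathcal N_\Gamma$. Here the choice $\tau=0.51>\tfrac12$ does the work: by the reverse triangle inequality,
\begin{align}
\mathcal N_\Gamma \ge \tau\lnorm{\vec x}_2 - (1-\tau)\lnorm{W(\vec x)^2}_2 \ge 2\tau - 1 = 0.02,
\end{align}
using $\lnorm{\vec x}_2=1$ and $\mathcal N_W\le 1$. Note that $\tau=\tfrac12$ would \emph{not} suffice, because $\lnorm{W(\vec x)^2}_2$ may equal $1$ and the two terms could nearly cancel; biasing toward the skip path is exactly what forces a constant gap, and with no skip path the norm could be arbitrarily small. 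With this bound I would apply \cref{lemma:VE_fixed_amplitude_amplification_v2} using the constant bound $\alpha'=50 \ge \mathcal N_\Gamma^{-1}$ to lift the scaling to $1$, obtaining a clean VE for the normalized state $\ket{\bar\Gamma}_n := \ket{\Gamma}_n/\mathcal N_\Gamma$ at a cost that is only a $\log(1/\epsilon_{\mathrm{amp}})$ factor over the preceding depth.

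Finally I would realize $\pool_C$ as a measurement: preparing $\ket{\bar\Gamma}_n$ and measuring its $c$ most-significant qubits returns $j$ with probability $\sum_{l\in\text{block }j}|\bar\Gamma_l|^2$, i.e.\ a sample from $\pool_C(\ket{\bar\Gamma})$, which is nonnegative and $\ell_1$-normalized because $\ket{\bar\Gamma}_n$ is $\ell_2$-normalized—matching \cref{def:approximate_sampling_based_classification}. For the error budget I would push the VE error $\epsilon'$ on $\ket{\bar\Gamma}_n$ through \cref{lemma:error_propogated_through_pooling}, which inflates it to $2N\epsilon'/\sqrt C$ on the output distribution; demanding this be at most $\epsilon$ forces $\epsilon' = O(\epsilon\sqrt C/N)$, hence $\epsilon_0,\epsilon_{\mathrm{amp}} = O(\epsilon\sqrt C/N)$ and the stated $\log(N/(\sqrt C\,\epsilon))$ factor. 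Summing the depths—$O(T_{\epsilon_0}+n^2)$ from the squared product, constant overhead from the sum, and the logarithmic normalization factor—yields total circuit complexity $O(\log(\tfrac{N}{\sqrt C\,\epsilon})(T_{\epsilon_0}+a+n^2))$ with $O(a+n)$ ancillas, as claimed.
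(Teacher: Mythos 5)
Your proposal is correct and follows essentially the same route as the paper's proof: invoke \cref{adl:theorem:matrix_vector_squared_product}, apply \cref{adl:lemma:vector_sum} with $\tau=0.51$, lower-bound the post-skip norm by the constant $2\tau-1=0.02$ (you via the reverse triangle inequality, the paper by expanding $\mathcal N_\gamma^2$ and bounding the cross term — equivalent), normalize with \cref{lemma:VE_fixed_amplitude_amplification_v2}, and realize the pooling as a binning measurement whose error budget is set through \cref{lemma:error_propogated_through_pooling}, giving the same $\epsilon_0,\epsilon_1 = O(\epsilon\sqrt{C}/N)$ choices and the stated complexity.
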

\begin{proof}
Let $d$ represent the number of bits in part of the QRAM encoding of $W$, as per~\cref{adl:theorem:matrix_vector_squared_product}. Note that $d$ is assumed to be an asymptotic constant.
Let $\ket{\phi_1}_n := W g(\ket{\psi}_n)$, $\mathcal {N}_1 := \lnorm{\ket{\phi_1}_n}$ and $\ket{\Phi_1}_n := \ket{\phi_1}_n/\mathcal {N}_1$. 
Using~\cref{adl:theorem:matrix_vector_squared_product}, we can get a $(\mathcal N_1^{-1}, 2a + d + 3 + n, 2\epsilon_{0}\mathcal{N}_1^{-1})$-VE for $\ket{\Phi_1}_n$ with $O(T_{\epsilon_0} + dn + n^2)$ circuit complexity. Here $d$ is a constant specifying the precision in the representation of the elements of the matrix stored as per~\cref{adl:preprocessed_matrix_qram_datastructure}.

Let $\ket{\gamma}_n := \tau \ket{\psi}_n + (1-\tau)\ket{\Phi_1}_n\mathcal N_1
=
\tau \ket{\psi}_n + (1-\tau)W g(\ket{\psi}_n)$, and let $\mathcal N_{\gamma} := \lnorm{\ket{\gamma}_n}_2$. 

Then,~\cref{adl:lemma:vector_sum} yields $V_2$ a $(\mathcal N_{\gamma}^{-1}, 2a + d + 4 + n, 3\epsilon_0 \mathcal N_{\gamma}^{-1})$-VE for $\ket{\gamma}_n/\mathcal N_{\gamma}$ with $O(T_{\epsilon_0} + dn + n^2)$ circuit complexity. 

We will now lower-bound $\mathcal N_{\gamma}$. The main idea is that if you are summing two vectors, one with norm $1$, and the other with norm at most $1$, if you put arbitrarily more mass on the constant-norm vector $(\delta)$, you are guaranteed that the vectors cannot fully cancel out, and thus that some norm is preserved in the sum.  
Note that $\mathcal N_1 = \lnorm{Wg(\ket{\psi}_n)}_2 \le \lnorm{W}_2\lnorm{\sum_{j} \psi_j^2 \ket{j}_n}_2 \le \lnorm{\sum_{j} \psi_j \ket{j}_n}_2 = 1$. Consequently, $|\ip{\psi}{\Phi_1}| \le 1$, and so
\begin{align}
    \mathcal{N}_{\gamma}^2
    &=
    \lnorm{\tau \ket{\psi}_n + (1-\tau)\ket{\Phi_1}_n \mathcal N_1 }_2^2
    =
    \tau^2 + (1-\tau)^2\mathcal N_1^2 + 2\tau(1-\tau)\mathcal N_1\ip{\psi}{\Phi_1}\\
    &\ge 
    \tau^2 + (1-\tau)^2\mathcal N_1^2 + 2\tau(1-\tau) = (\tau - (1-\tau)\mathcal N_1)^2.
\end{align}
For some parameter $\delta \in [0, 1]$, assuming that $\tau = (1+\delta)/2$, we then get that $\mathcal{N}_{\gamma} \ge \delta$. 

Then, define $\epsilon_1 \in (0, 1]$.
We can then invoke~\cref{lemma:VE_fixed_amplitude_amplification_v2} yielding $V_3$ a $(1, 2a + d + 8 + n, \frac{6\epsilon_0}{\delta} + 2\epsilon_1)$-VE for $\ket{\gamma}_n/\mathcal{N}_{\gamma}$ with $O(\frac{1}{\delta}\log(1/\epsilon_1)(T_{\epsilon_0} + a + dn + n^2))$ circuit complexity. 

Define $\text{pool}_C$ as per~\cref{definition:ell_2_norm_squared_pooling}. Noting that $\text{pool}_C(\ket{\gamma}_n/\mathcal{N}_{\gamma}) = \ket{y}_c$.

We can equivalently define some $\ell_2$-normalized state $\ket{\tilde\Gamma}_n$ such that $V_3$ is a $(1, 2a+d+8+n,0)$-VE for $\ket{\tilde\Gamma}_n$. Then, since $\lnorm{\ket{\tilde\Gamma}_n - \frac{\ket{\gamma}_n}{\mathcal N_{\gamma}}}_2 \le \frac{6\epsilon_0}{\delta} + 2\epsilon_1$, we can invoke~\cref{lemma:error_propogated_through_pooling} which shows that $
\lnorm{\text{pool}_C(\ket{\tilde\Gamma}_n) - \ket{y}_c}_2 \le \frac{2N}{\sqrt{C}}(\frac{6\epsilon_0}{\delta} + 2\epsilon_1)$.

Consequently, to get an error of at most $\epsilon$, we set $\frac{2N}{\sqrt{C}}(\frac{6\epsilon_0}{\delta} + 2\epsilon_1) = \epsilon$, by setting $\epsilon_1 = \frac{\sqrt{C}\epsilon}{8 N}$ and $\epsilon_0 = \frac{\epsilon \sqrt{C} \delta}{24 N}$. Then, we can simply draw a sample $\epsilon$-close to $\ket{y}_c$ in $\ell_2$-norm distance by sampling the state prepared by $V_3$ and then assigning it to the appropriate bin.

Setting $\delta = 0.02$ gives $\tau = 0.51$. Then, $V_3$ is a $(1, 2a+d+8+n, \epsilon)$-VE for $\ket{\gamma}_n/\mathcal{N}_{\gamma}$ with $O(\log(\frac{N}{\sqrt{C}\epsilon})(T_{\epsilon_0} + a + dn + n^2))$ circuit complexity. Consequently, we can draw a sample from some vector $\ket{\tilde \phi}_c$ such that $\lnorm{\ket{\tilde \phi}_c - \ket{y}_c}_2 \le \epsilon$ with $O(\log(\frac{N}{\sqrt{C}\epsilon})(T_{\epsilon_0} + a + dn + n^2)) \in O(\log(\frac{N}{\sqrt{C}\epsilon})(T_{\epsilon_0} + a + n^2))$ circuit complexity, and with $O(a + n)$ ancilla qubits, noting that $d$ is an asymptotic constant.
\end{proof}

\section{Feasibility of QRAM Assumptions}\label{adl:section:feasibility_of_QRAM_assumptions}

In this section, we consider the feasibility of different QRAM assumptions to help motivate our discussion in~\cref{adl:appendix:section:architectures}. 
In~\cref{adl:subsection_active_vs_passive_qram} we consider the feasibility of our QRAM assumptions. In~\cref{adl:subsection_feasibility_of_qraqm} we summarize how arbitrary quantum states can be prepared by using a QRAM data-structure, in service of our subsequent discussion of the different architectural regimes.

\subsection{Passive and Active QRAM}\label{adl:subsection_active_vs_passive_qram}

It is clear that, if a fault-tolerant quantum computer can be constructed, that a QRAM based on the various quantum circuit constructions (see~\citet{jaques2023qram,giovannetti2008architectures,hann2021practicality}) can be directly implemented. Moreover, these circuit constructions have log-depth access costs. However, as laid out in~\citet{jaques2023qram}, the fundamental issue regarding the practicality of QRAM comes down to the opportunity cost of the total energy required to implement a query to the QRAM. Precisely, given a QRAM with $N$ bits of memory, a QRAM is considered \textit{passive} if and only if each query to the QRAM requires $o(N)$ \textit{total} energy input. If the query instead requires $\Omega(N)$ energy input (even if the time complexity is $O(\text{polylog}(N))$) then the QRAM is \textit{active}. Importantly, this means that any QRAM implemented in the error-corrected circuit-model must be active, as each qubit requires $O(1)$ classical resources to run the error-correction, resulting in an $\Omega(N)$ total energy cost per QRAM query. Even if error-correction is not used, if enacting the gates in the system requires constant energy input (e.g., by enacting the gates as laser pulses) then the QRAM will be active. If the QRAM is active, then~\citet{jaques2023qram} show that a wide-range of quantum linear algebra applications lose quantum speedup. Moreover, there are additional challenges such as how a noisy (non-error corrected) quantum memory could be interfaced with an error-corrected quantum processor. 

However, as noted in~\citet{jaques2023qram} there is some hope in practice, and we will now outline their arguments. As an example, consider classical Dynamic Random Access Memory (DRAM). DRAM requires a constant power draw for each bit in memory, and thus an $N$-bit memory requires $\Omega(N)$ energy input. This makes DRAM active. Nevertheless, because the energy expenditure of DRAM is often dwarfed by the energy expenditure of the CPU accessing it, it is usually treated as being a passive component in classical algorithm design. For instance,~\citet{carroll2010analysis} demonstrates that for mobile phones, ``RAM power is insignificant in real workloads'', and~\citet{mahesri2004power} draws a similar conclusion for laptops. At larger server-scales, the asymptotics of active memory become more noticeable, but memory still usually draws less power than the controlling CPU~\citep{ahmed2021review, fan2007power}. Analogously, consider a regime where a QRAM is active, but its constant energy costs are extremely small relative to the energy costs of the error-corrected quantum computer it is being interfaced with. 
Given the \textit{substantial} expected overheads of quantum error-correction~\citep{babbush2021focus}, the ratio of energy consumption for an error-corrected QPU to an active QRAM could be even more favourable than in the classical setting. Then, if there is some way to interface this noisy device with the error-corrected QPU, for moderate scales (e.g., terabytes of memory), it is conceivable that the QRAM could be practically treated as passive. We will call this a ``practically passive QRAM''. 
Nevertheless, even though practically passive QRAMs are asymptotically active, they are unlikely to allow full error-correction without losing their constant advantages (unless, for some reason, the structure of QRAM allows for extremely efficient custom-made error-correcting codes). Consequently, it is important that the QRAM implementation is resilient to errors. Indeed QRAMs based on the bucket-brigade architecture~\citep{giovannetti2008qram}, are intrinsically \textit{exponentially} (in terms of the number of memory registers) robust to errors~\citep{hann2021resilience, hann2021practicality, hong2012robust}.

In this paper, for simplicity, when making a QRAM assumption we treat the QRAM as passive. We stress that substantially more work is needed to fully understand the feasibility of QRAM, but that it is plausible that the QRAM assumptions made in this paper could be physically realized in practice. In particular, assuming that truly passive QRAM is impossible, we outline the following questions (building on~\citet{jaques2023qram}) which could result in our results being practically useful. How can a noisy QRAM system be interfaced with an error-corrected quantum computer? If such an interface is possible, how do errors in the QRAM propagate through the error-correction in the QPU? Recent promising work~\citep{dalzell2025distillation} provides answers to these two preceding questions, and offers a path forward for research aiming to construct practically passive and useful QRAM. Additional questions which need to be investigated to help realize a practically passive QRAM include some of the following. What is the ratio in energy consumption for plausible practically passive QRAM systems to the energy consumption of the controlling fault-tolerant QPUs for different error-correcting codes? Given potential active (practically passive) QRAM architectures, what is the total expected energy consumption for different sized memories?

\subsection{Input Preparation via QRAM}\label{adl:subsection_feasibility_of_qraqm}

The data-structure due to~\citet{kerenidis2017quantumrecommendationsystems} can allow for an arbitrary quantum state to be prepared, so long as the state amplitudes are made available through a specific QRAM data-structure. 
\begin{lemma}[Input Data QRAM Data-Structure~\citep{kerenidis2017quantumrecommendationsystems}]
Let $N = 2^n$. 
Given a vector $\vec x \in \mathbb R^{N}$, we can define a data-structure utilizing a QRAM with $\tilde O(N)$ total qubits \footnote{Neglecting the finite precision error due to storing vector elements (and their partial squared sums) in binary representations} storing $\vec x$. Then: (1) the cost to update (insert, delete, or modify) an entry $x_j$ is $O(n^2)$, (2) using the QRAM data-structure, the state $\ket{x} = \vec x /\lnorm{\vec x}_2$ can be prepared by a circuit with depth $O(n^2)$, acting on $O(n)$ qubits. 
\end{lemma}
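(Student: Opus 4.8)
The plan is to use the binary-tree data structure of Kerenidis and Prakash together with a Grover--Rudolph style sequence of uniformly-controlled rotations, implemented via the $CR_Y$ primitive of~\cref{adl:lemma_cry_gate}. First I would define the data structure as a complete binary tree of depth $n$ living in the classical QRAM of~\cref{adl:def:qram}: each leaf $j \in [N]$ stores the value $x_j$ together with its sign bit (equivalently $x_j^2$ and $\sgn(x_j)$), and each internal node $v$ at depth $\ell$ with address $b_1\cdots b_\ell$ stores the partial sum $B_v := \sum_{j \succeq v} x_j^2$ over all leaves $j$ descending from $v$, so that the root stores $B_{\mathrm{root}} = \lnorm{\vec x}_2^2$. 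Since the tree has $2N-1 = O(N)$ nodes and each stores a fixed-precision real number, the total memory is $\tilde O(N)$ qubits.

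For claim (1), modifying an entry $x_j \mapsto x_j'$ changes exactly the $n+1$ nodes on the root-to-leaf path: I would compute the difference $\Delta := (x_j')^2 - x_j^2$, update the leaf's value and sign, and add $\Delta$ to each of the $O(n)$ ancestor nodes. Each node update is a fixed-precision addition costing $O(n)$, giving the claimed $O(n^2)$ update cost.

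For claim (2), I would prepare $\ket{x}_n$ by applying one uniformly-controlled $Y$-rotation per level. Define, for each internal node $v$ with children $v0,v1$, the angle $\theta_v$ with $\cos\theta_v = \sqrt{B_{v0}/B_v}$ and $\sin\theta_v = \sqrt{B_{v1}/B_v}$, and store (or precompute) these angles in QRAM. Starting from $\ket{0}_n$, at level $\ell$ I would, controlled on the first $\ell$ qubits holding the node address $b_1\cdots b_\ell$, rotate qubit $\ell+1$ by $\theta_v$; concretely this is realized by one QRAM query (\cref{adl:def:qram}) loading the angle into a $d$-bit ancilla, one application of $CR_Y$ (\cref{adl:lemma_cry_gate}) to transfer the digitally-represented angle to the amplitude (cf.~\cref{adl:def:r_y_gate}), and one query to uncompute the ancilla. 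The QRAM query dominates each level at $O(\log N) = O(n)$ depth, and with $n$ levels the total depth is $O(n^2)$ on $O(n)$ working qubits, the $\tilde O(N)$ tree residing in the external memory. A final sign-correction step, a diagonal $\pm 1$ operation controlled on the leaf's stored sign bit (again fetched by a query), converts the nonnegative amplitudes $|x_j|/\lnorm{\vec x}_2$ into the signed amplitudes $x_j/\lnorm{\vec x}_2$.

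The main thing to verify carefully is the telescoping identity: multiplying the per-level cosines and sines along any root-to-leaf path $b_1\cdots b_n$ collapses the chain of partial-sum ratios into $\sqrt{B_{\mathrm{leaf}}/B_{\mathrm{root}}} = |x_j|/\lnorm{\vec x}_2$, so that the amplitudes are correct before sign correction. The remaining subtlety is bookkeeping the finite-precision error incurred by representing the $B_v$ (hence the $\theta_v$) in $O(\mathrm{polylog}\,N)$ bits, which I would treat exactly as in standard finite-precision arithmetic analyses and absorb into the $\tilde O(N)$ memory bound, as the footnote on the statement indicates.
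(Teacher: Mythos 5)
Your proposal is correct and follows essentially the same route as the paper, which simply cites \citet{kerenidis2017quantumrecommendationsystems} and sketches the identical idea: a Grover--Rudolph preparation driven by the binary tree of partial squared norms stored in QRAM, with the controlled rotation angles loaded per level. Your write-up just fills in the details (root-to-leaf update paths, the $CR_Y$-based angle transfer, telescoping of the amplitude ratios, and sign correction) that the paper leaves to the cited reference.
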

This is just a special case of the more general result in~\citet{kerenidis2017quantumrecommendationsystems} giving a similar data-structure for arbitrary matrices (which we presented as QRAM for quantum data in~\cref{adl:section:qram}). Intuitively, the state can be prepared by following Grover-Rudolph~\citep{grover2002creating}, using the QRAM data structure containing the tree of binary partial norms of the vector to compute the controlled rotation angles for each additional qubit.

\section{Architectures in Different Regimes}\label{adl:appendix:section:architectures}
As summarized in the main text, the results presented thus far can be used to construct a range of architectures in a number of different settings. In particular, we consider three regimes characterized by the QRAM assumptions they make. In the first regime, we assume that both the input to the network and the weights in the network are made available via QRAM. In the second regime, we assume that the network may use QRAM (since its QRAM data-structure may be pre-computed prior to inference-time), but that the input to the network is received classically and entirely on-the-fly, and thus that the input cannot be provided with QRAM (so a cost linear in the dimension of the input must be paid to load it into the quantum computer). In the third regime, we assume no QRAM. We will now expand on the arguments presented in the main text in greater detail. 

\subsection{Regime 1: Input and Network Use QRAM}\label{adl:appendix:discussion:regime_1}
Here we expand on the argument presented in~\cref{adl:discussion:regime_1}.

\paragraph{Online Input Construction}
Noting that as per~\cref{adl:subsection_feasibility_of_qraqm} QRAM data-structures can be efficiently updated, we note that there are a number of settings where it might be realistic for the input vector to be provided via QRAM. For example, in any setting where inference needs to be repeatedly performed on a slowly-changing input (e.g., in an interactive chat with an autoregressive LLM, where each output token becomes part of the new input), or where the input is the result of some other quantum algorithm. For example, in the context of auto-regressive interactive LLM (where the output would be a probability distribution over tokens instead of classes), the initial vector $\vec x$ might be an encoding of the hidden prompt to the network (and so the associated data-structure can be pre-computed). As a user queries the LLM, a small number of tokens are added to $\vec x$, and these updates can be efficiently performed to the data structure. Then, the network is run, and the new output token is added to $\vec x$, again efficiently. This process can then continue to repeat, and so the cost of loading the data is either entirely precomputed, or amortized on-the-fly. We can envision similar applications in the classification of video, where a very large, but slowly-changing, video needs to be analysed one frame at a time. Here, a cost would need to be paid proportional to the number of changing pixels between each frame, and so the input data-structure could be efficiently updated. Additional settings where it might be reasonable for the input to be provided efficiently could be if the input corresponds to some combination of continuous function (via~\citet{rattew2022preparing}), or if it was prepared as the output of some other quantum algorithm.

\paragraph{Receptive Field} 
To understand the importance of the final linear layer in the architecture for this regime, we must first summarize the receptive field problem of multi-layer convolutional architectures. 

For simplicity, consider a 2D convolution with one input channel and one output channel, and consider a sequence of $k$ such convolutional layers. Let the kernel be $D\times D$. Since a convolutional layer can map the information in location $i, j$ to, at the furthest, the location $i + D, j + D$, after $k$ layers the information in any given entry will come from local information in the input at most $\approx k D$ pixels away. Consequently, the final layer which is input to the output linear-layer-residual block will contain features with $k D$ local information, which the linear layer then combines in a global fashion. We conjecture that having a full-rank layer at this stage is more effective for merging the local information than a similar dimension, but low-rank, linear layer. Since the cost of the quantum algorithm grows exponentially with depth, without the final linear layer, with such an architecture no learning could occur which requires global information from the input image. 

Moreover, there are other approaches which could be taken to make the local information globally accessible to the earlier convolutional layers, potentially improving the power of such quantum-amenable architectures in practice.  For instance, after a set number of convolutional layers, a linear layer could be added to make local information global (however, this damages the nice algebraic properties of convolutional layers). Alternatively, a sequence of convolutions can be implemented in each residual block (without activation functions between them) as this would not increase the complexity exponentially, potentially allowing for many more convolutions in sequence. Most appealingly, a solution can be found in the popular classical architecture of bilinear neural networks~\citep{lin2015bilinear} (which forms the basis of the architecture presented for Regime 2). Here, paths of convolutional-based residual blocks are passed into a Kronecker product, which is followed by more layers. Via~\cref{adl:lemma:tensor_product_of_VEs}, we can efficiently do this in a quantum computer. Since the Kronecker product makes all local information globally available, it immediately solves the receptive field problem. However, while a Kronecker product makes local information globally accessible, it loses positional information. This can be resolved by enacting a positional encoding along one of the paths of the network prior to the product, e.g., as is done when Tokenizing the inputs to transformer architectures~\citep{vaswani2017attention}.

\paragraph{Dequantization}
A number of quantum algorithms which were believed to have exponential speed-ups over their classical counterparts lost their exponential speedup after new classical randomized algorithms were developed which mirrored the quantum input assumptions. For example, see the works of \citet{kerenidis2017quantumrecommendationsystems} and~\citet{tang2019quantum}. Indeed, it seems likely that, as was the case with the quantum CNN implementation in~\citet{kerenidis2020qcnn}, that the convolutional residual blocks in our architectures could be dequantized (even though they make no QRAM assumptions). However, our new techniques enables the final linear-residual block to contain an arbitrary full-rank and dense matrix. Since known dequantization techniques require the matrix to be either low-rank~\citep{chia2022sampling, tang2019quantum} or sparse (with certain strong caveats)~\citep{gharibian2022dequantizing}, existing techniques appear insufficient to dequantize our full architecture. Moreover, as previously discussed, removing the final linear layer introduces receptive field problems, highlighting that it is not a purely artificial addition to the network. Nevertheless, it would be interesting to exploring dequantizing the architecture without the final linear layer (or perhaps replacing it with a low-rank one), and this could result in some interesting techniques to classical accelerate inference for certain architectures.

\subsection{Regime 2: Network Stored in QRAM, Input Loaded Without QRAM}\label{adl:appendix:discussion:regime_2}
See the discussion in~\cref{adl:discussion:regime_2}.

\subsection{Regime 3: No QRAM}\label{adl:appendix:discussion:regime_3}
To reiterate, in this regime, both the matrix weights and the network input are not given by QRAM. We will now prove the complexity of the Regime 3 architecture shown in~\cref{fig:adl:flagship_architecture} (c), as discussed in~\cref{adl:discussion:regime_3}. We note that there are many simple modifications which could be made to this architecture, for example by having a final low-rank linear layer with $O(N)$ parameters. Adopt the notation used in~\cref{adl:theorem:flagship_result_general}.
Let the input be a $4\times M \times M$ tensor, and define $N = M^2$, $n = \log_2(N)$, $m= \log_2(M)$. Thus, the vectorized input is of dimension $O(N)$. Let $d$ be the number of paths into the input tensor (i.e., the latent dimension will be $O(N^d)$), as per~\cref{fig:adl:flagship_architecture} (c). $T_X$ is the access cost of the input; in the QRAM-free regime we assume a worst-case of $T_X \in O(N)$. Let $C$ be the number of output classes (or set of possible output tokens).

Assume $d=2$. 
Let $\delta > 0$ be an error parameter used only in the proof. 
Directly from the proof of~\cref{adl:theorem:flagship_result_general}, we have $U_{\text{conv}}$, a $(1, 2^k(63 +n), \delta)$-VE (vector encoding) for the $\ell_2$-normalized output of the $k$ convolutional/residual block layers. $U_{\text{conv}}$ has $O(\log(N/\delta)^{2k}(n^2 + T_{X}))$ circuit depth. Note that in that proof, $N$ corresponds to the vectorized dimension of the latent space (i.e., if there is 1 input and output channel, $N$ corresponds to the dimension of the vector acted upon by the matrix-form of the 2D convolution), and thus corresponds to $N^d$ here. 

Let $|\phi\rangle$ represent the exact vector output after the sequence of $k$ convolutional layers. This VE corresponds to a state $|\tilde\phi\rangle$ such that $\Vert |\phi\rangle - |\tilde\phi\rangle \Vert_2 \le \delta$. Consequently, by~\cref{lemma:error_propogated_through_pooling}, sampling this VE (and applying the binning-protocol) yields a sample from a vector $\text{pool}_C(|\tilde\phi\rangle)$ such that $\Vert \text{pool}_C(|\phi\rangle) - \text{pool}_C(|\tilde\phi\rangle) \Vert_2 \le \frac{2 N^2 \delta}{\sqrt{C}}$. Noting that the correct output of the network is given by $\vec y = \text{pool}_C(|\phi\rangle)$, we can get an overall error of $\epsilon$, such that the vector we sample from satisfies  $\Vert \vec y - \text{pool}_C(|\tilde\phi\rangle) \Vert_2 \le \epsilon$ by setting $\epsilon =  \frac{2 N^2 \delta}{\sqrt{C}} \implies \delta = \frac{\epsilon \sqrt{C}}{2N^2}$. By plugging this into the circuit complexity of $U_{\text{conv}}$, and noting that here we assume we pay the full input dimension cost (since there is no QRAM), $T_{X} \in O(N)$, and so this simplifies to  $O(N \log(N^3/\epsilon \sqrt{C})^{2k}) \in \tilde O(N \log(1/\epsilon)^{2k})$ total circuit cost. As stated in the main text, since the dimension of the vector acted on by the 2D convolution is $O(N^2)$ (when d=2), the classical cost to compute this is $\Omega(N^2)$: showing \textbf{a quadratic speedup over an exact classical implementation}. The speedup can be made asymptotically larger by increasing $d$.

\paragraph{Possible Limitations}
Here we will outline some of the possible limitations of the architecture shown in~\cref{fig:adl:flagship_architecture} (c). 
Since there is no final linear layer (in the architecture as directly presented), the receptive field problems outlined in~\cref{adl:discussion:regime_1} may appear to apply. However, by virtue of taking the tensor product of the input paths, local information becomes immediately globally accessible circumventing this limitation. Moreover, another way that local information could be made global is from the processing that occurs along each path prior to the tensor product, since there is no limit on the classical processing that can occur (so long as the total compute is linear in the dimension of the input). 

Moreover, our argument against dequantization in the first regime (see~\cref{adl:discussion:regime_1}) relies on the final dense and full-rank linear layer. However, since this layer is not feasible without QRAM, this argument does not apply here. However, as we are only suggesting a polynomial speedup in Regime 3, we do not expect a dequantized algorithm to completely close the performance gap past quadratic, as we benefit from amplitude amplification. However, exploring dequantized algorithms based on the ideas in this paper appears to be interesting subsequent work. 

Finally, if the network only contains the convolutional layers, it will likely be very under-parameterized making training challenging (see e.g., ~\citet{allen2019learning} for a discussion on overparameterized neural networks). However, where the dimension of the vectorized input is $N$, it would be easy to add $O(N)$ parameters, either in the classical paths prior to the tensor product, or as a final low-rank residual output block (prior to the $\ell_2$-norm-pooling), so long as the number of parameters in that block are $O(N)$.

\paragraph{Alternative: Parameterized Quantum Circuits as Network Layers}
Alternatively, one could use parameterized quantum circuits as network layers~\citep{peruzzo2014variational, benedetti2019parameterized, cerezo2021variational}, as the number of parameters in such circuits are usually polylogarithmic in the dimension of the operator. However, such circuits are often hard to train even on classical machines, due to under-parameterization, the barren plateau problem~\citep{mcclean2018barren, larocca_reviewbarrenplateausvariational_2024}, and the exponential amount of bad local minima in the optimization landscape~\citep{anschuetz2022quantum}. However, given good enough initializations and warm start assumptions~\citep{mhiri2025unifyingaccountwarmstart}, it may still possible to train such architectures, leading to potential speed-ups in inference.

\paragraph{Other Possible Sources of Speedup} In some cases, where the input can be efficiently prepared without paying a dimension-dependent cost (e.g., the input comes from quantum states which are easy to prepare, either via some other quantum algorithm, or via techniques like~\citet{rattew2022preparing}) it may be possible to obtain better than quadratic speedups. However, we leave this as a topic for future investigation.

\section{Technical Results}

We now report a result on the efficient polynomial approximation to the error function due to~\citet{low2017hamiltonian}, which builds on the results of \citet{sachdeva2014faster}. This result is an improvement over the approximation obtained by an integration of the series expansion for the Gaussian distribution.

\begin{lemma}[Polynomial Approximation to Error Function due to Corollary 4 of \citet{low2017hamiltonian}]\label{lemma:error_function_technical_details}

Let $m\ge1/2$, $1 \ge \epsilon> 0$. 
There exists a degree $k \in O(m\log(1/\epsilon))$ polynomial $P_{k, m}(x)$ such that
\begin{align}
    P_{k, m}(x) := \frac{2m e^{-m^2/2}}{\sqrt{\pi}}
    \left(
        I_0(m^2/2)x
        +
        \sum_{j=1}^{(k-1)/2}
        I_j(m^2/2)(-1)^j
        \left(
            \frac{T_{2j+1}(x)}{2j+1}
            -
            \frac{T_{2j-1}(x)}{2j-1}
        \right)
    \right)
\end{align}
and $\max_{x\in [-1, 1]}|\erf(m x) - P_{k,m}(x)| \le \epsilon$. Let $ 1 \ge c > 0$. 
Alternatively, if $k \in O(m \log(mc/\epsilon))$, then $\max_{x\in[-c, c]}|\erf(m x) - P_{k,m}(x)| \le \epsilon$. 
Additionally, for all $k$, $\max_{x\in[-1, 1]}|P_{k,m}(x)/x| \le \frac{4m}{\sqrt{\pi}}$, and $P_{k,m}(0)= 0$.
Finally, $\min_{x\in[-1, 1]}|\erf(mx)/x| \ge 1/2$, and $\erf(mx)$ has Lipschitz constant $L = \frac{2m}{\sqrt{\pi}}$, 
\end{lemma}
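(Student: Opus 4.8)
The plan is to treat the uniform-approximation content as a direct citation and to prove the remaining structural bounds by hand. The explicit polynomial $P_{k,m}$, the degree bound $k \in O(m\log(1/\epsilon))$ on $[-1,1]$ (and the variant $k\in O(m\log(mc/\epsilon))$ on the shrunk interval $[-c,c]$), together with the uniform error estimate $\max_{x}|\erf(mx)-P_{k,m}(x)|\le\epsilon$, are exactly the statement of Corollary 4 of \citet{low2017hamiltonian}, which I would invoke verbatim; these follow from truncating the Jacobi--Anger expansion of a Gaussian and controlling the tail via decay bounds on the modified Bessel functions $I_j$. The genuinely new work is to establish the remaining structural facts: $P_{k,m}(0)=0$, the quotient bound $\max_{[-1,1]}|P_{k,m}(x)/x|\le 4m/\sqrt\pi$, the lower bound $\min_{[-1,1]}|\erf(mx)/x|\ge 1/2$, and the Lipschitz constant $L=2m/\sqrt\pi$.

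First I would dispatch the easy items. Since every basis function in the definition of $P_{k,m}$ --- namely $x$, $T_{2j+1}$, and $T_{2j-1}$ --- is odd, $P_{k,m}$ is an odd polynomial and hence $P_{k,m}(0)=0$. The Lipschitz constant is immediate from $\frac{d}{dx}\erf(mx)=\frac{2m}{\sqrt\pi}e^{-m^2x^2}$, whose magnitude is at most $\frac{2m}{\sqrt\pi}$. For the lower bound on $|\erf(mx)/x|$, I would use that $\erf$ is concave on $[0,\infty)$, so on $(0,1]$ the chord inequality gives $\erf(mx)\ge x\,\erf(m)$, whence $|\erf(mx)/x|\ge \erf(m)\ge\erf(1/2)>1/2$ for $m\ge 1/2$; oddness then extends the bound to negative $x$.

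The main obstacle is the quotient bound $\max_{[-1,1]}|P_{k,m}(x)/x|\le 4m/\sqrt\pi$, which I would attack by differentiating rather than dividing. Applying $T_n'=nU_{n-1}$ to each paired term and using the Chebyshev identity $U_{2j}-U_{2j-2}=2T_{2j}$ collapses the expression to the clean form
\begin{align}
P_{k,m}'(x)=\frac{2m e^{-m^2/2}}{\sqrt\pi}\left(I_0(m^2/2)+2\sum_{j=1}^{(k-1)/2}(-1)^j I_j(m^2/2)\,T_{2j}(x)\right),
\end{align}
which I recognize as a truncation of the Chebyshev expansion of $\frac{2m}{\sqrt\pi}e^{-m^2x^2}$ obtained from the same Jacobi--Anger identity (using $I_j(-z)=(-1)^jI_j(z)$ and $x^2=\tfrac{1}{2}(1+T_2(x))$). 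Bounding $|T_{2j}|\le1$ and summing the positive Bessel coefficients via $I_0(z)+2\sum_{j\ge1}I_j(z)=e^{z}$ at $z=m^2/2$ yields $\max_{[-1,1]}|P_{k,m}'(x)|\le \frac{2m e^{-m^2/2}}{\sqrt\pi}\,e^{m^2/2}=\frac{2m}{\sqrt\pi}$. Since $P_{k,m}(0)=0$, the mean value theorem gives $|P_{k,m}(x)/x|=|P_{k,m}'(\xi)|\le\frac{2m}{\sqrt\pi}\le\frac{4m}{\sqrt\pi}$ for some $\xi$ between $0$ and $x$, closing the bound with room to spare. The only point requiring care is the sign bookkeeping in the Jacobi--Anger step and the justification of the Bessel summation identities, but these are standard once the expansion is correctly set up.
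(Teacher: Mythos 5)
Your treatment of the structural bounds is correct and essentially matches the paper's own proof: the paper likewise obtains $P_{k,m}(0)=0$ from the vanishing of the odd Chebyshev terms at $0$, the Lipschitz constant from $\frac{d}{dx}\erf(mx)=\frac{2m}{\sqrt{\pi}}e^{-(mx)^2}$, and the quotient bound by the same differentiate-then-sum-Bessel-coefficients route (the paper uses $\frac{d}{dx}\tfrac{1}{2}\bigl(\tfrac{T_{2j+1}(x)}{2j+1}-\tfrac{T_{2j-1}(x)}{2j-1}\bigr)=T_{2j}(x)$, which is exactly your $T_n'=nU_{n-1}$ plus $U_{2j}-U_{2j-2}=2T_{2j}$ step, followed by $|T_{2j}|\le 1$ and the identity $e^{y}=\sum_{j=-\infty}^{\infty}I_j(y)$); your constant $2m/\sqrt{\pi}$ is in fact tighter than the stated $4m/\sqrt{\pi}$. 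Your concavity/chord argument for $\min_{x\in[-1,1]}|\erf(mx)/x|\ge 1/2$ is a valid alternative to the paper's monotonicity-of-the-ratio argument.

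The genuine gap is the restricted-interval claim. The bound ``degree $k\in O(m\log(mc/\epsilon))$ suffices for error $\epsilon$ on $[-c,c]$'' is \emph{not} part of Corollary 4 of \citet{low2017hamiltonian} and cannot be invoked verbatim: that corollary gives only the uniform approximation on $[-1,1]$, and restricting it to a subinterval yields only $k\in O(m\log(1/\epsilon))$, with no gain from $c$. Deriving the improved degree is the main technical content of the paper's proof, and your proposal does not supply the idea that produces the factor $c$. Concretely, starting from the truncation-error bound (Equation (71) of Corollary 4 of \citet{low2017hamiltonian}), the paper rewrites each tail term as $\frac{T_{2j+1}(x)}{2j+1}-\frac{T_{2j-1}(x)}{2j-1}=2\int_0^{x}T_{2j}(t)\,dt$, whose magnitude on $[-c,c]$ is at most $2|x|\le 2c$ (rather than the $O(1)$ bound available on $[-1,1]$); this converts the truncation error into $\epsilon_1\le \frac{4cm e^{-m^2/2}}{\sqrt{\pi}}\sum_{j\ge (k+1)/2}|I_j(m^2/2)|$, so the Gaussian-tail bound (Corollary 3 of \citet{low2017hamiltonian}) need only be invoked with tolerance $\epsilon'=\frac{\sqrt{\pi}\epsilon}{2cm}$ --- which is precisely where the $mc/\epsilon$ inside the logarithm comes from. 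This refinement is not cosmetic: it is what the paper later exploits (with $c=O(1/\alpha)$ in \cref{adl:lemma:nlat_of_VE} and \cref{lemma:VE_nlat_lemma_importance_weighting_approximated_function_erf_2}) to make the polynomial degree scale with the sub-normalization $\alpha$ of the vector encoding; your citation-only plan covers just the $c=1$ case.
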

\begin{proof}
For the case where $\max_{x\in [-1, 1]}|\erf(m x) - P_{k,m}(x)| \le \epsilon$, the result on the polynomial approximation is directly taken from~\citet{low2017hamiltonian}. 
We will now prove the bound when the function is constrained to the interval $[-c, c]$. 
Let $\epsilon_1 := \max_{x\in[-c, c]}|\erf(m x) - P_{k,m}(x)|$.
From Equation (71) of Corollary 4 of \citet{low2017hamiltonian}, for a degree $k$ polynomial approximation, we have the following error-bound,
\begin{align}\label{adl:eqn:tight_erf_approx_mid_proof}
    \epsilon_1 \le  \frac{2m e^{-m^2/2}}{\sqrt{\pi}}
    \left|
        \sum_{j=(k+1)/2}^{\infty}
        I_j(m^2/2)(-1)^j
        \left(
            \frac{T_{2j+1}(x)}{2j+1}
            -
            \frac{T_{2j-1}(x)}{2j-1}
        \right)
    \right|.
\end{align}
Using the identity $\left(\frac{T_{2j+1}(x)}{2j+1}- \frac{T_{2j-1}(x)}{2j-1} \right) = 2\int_{0}^x T_{2j}(t) dt$, and using the fact that all Chebyshev polynomials of the form $T_{2j}$ are even, we can get the bound that $2\left|\frac{T_{2j+1}(x)}{2j+1}- \frac{T_{2j-1}(x)}{2j-1} \right| \le 2\int_{0}^{|x|} |T_{2j}(t)|dt \le 2|x| \le 2\max_{x \in [-c, c]}|x| = 2c$, since $\max_{x\in[-1,1]}|T_{2j}(x)|\le 1$. 

Then, applying the triangle inequality,~\cref{adl:eqn:tight_erf_approx_mid_proof} becomes, 
\begin{align}
    \epsilon_1 \le  \frac{4 c m e^{-m^2/2}}{\sqrt{\pi}}
        \sum_{j=(k+1)/2}^{\infty}
        \left| I_j(m^2/2) \right|.
\end{align}

Define $\epsilon_{gauss, \gamma, k}$ as per Corollary 3 of \citet{low2017hamiltonian}. Define some $\epsilon' >0$. Note that $\epsilon_{gauss, \gamma, k} = 2e^{-\gamma^2/2}\sum_{j = \frac{n}{2}+1}^{\infty}|I_j(\gamma^2/2)|$, and that $\epsilon_{gauss, \gamma, k} \le \epsilon'$ if $k \in O(\sqrt{(\gamma^2 + \log(1/\epsilon'))\log(1/\epsilon')})$. Thus, $\epsilon_1 \le \frac{2cm \epsilon'}{\sqrt{\pi}}$. To get an overall error-bound of at most $\epsilon$, we can set $\frac{2cm \epsilon'}{\sqrt{\pi}} = \epsilon$, and so $\epsilon' = \frac{\sqrt{\pi}\epsilon}{2cm}$. Thus, if we set $k \in O(m \log(\frac{cm}{\epsilon}))$, we are guaranteed that $\max_{x\in[-c, c]}|P_{k, m}(x) - \erf(mx)| \le \epsilon$.

Next, $\frac{d}{dx}\erf(mx) = \frac{2m}{\sqrt{\pi}}e^{-(mx)^2}$, and consequently the maximum value of the derivative of the function is when $x=0$, i.e., $\max_{x\in[-1, 1]}|\frac{d}{dx}\erf(mx)| = \frac{2m}{\sqrt{\pi}}$.

We will now prove that $|P_{k,m}(x)/x| \le \frac{4m}{\sqrt{\pi}}$ and $\min_{x\in[-1, 1]}|\erf(mx)/x| \ge 1/2$. 

Noting that $P_{k, m}(0) = 0$, (since for $x=0$, $T_{2j}(x) = \cos((2j+1) \arccos(0)) = \cos((2j + 1) \pi/2) = 0$), by Lipschitz continuity we have that $|P_{k,m}(x)/x| \le |\frac{d}{dx} P_{k,m}(x)|$.
Noting that $\frac{d}{dx}\frac{1}{2}(\frac{T_{2j+1}(x)}{2j+1}-\frac{T_{2j-1}(x)}{2j-1}) = T_{2j}(x)$, 
\begin{align}
    \max_{x\in[-1, 1]}|P_{k,m}(x)/x| &\le \max_{x\in[-1, 1]}\left|\frac{d}{dx} P_{k, m}(x) \right|\\
    &=
    \max_{x\in[-1, 1]}\left|\frac{2m e^{-m^2/2}}{\sqrt{\pi}}
    \left(
        I_0(m^2/2)
        +
        2\sum_{j=1}^{(k-1)/2}
        I_j(m^2/2)(-1)^j
        T_{2j}(x)
    \right)\right|.
\end{align}
A common identity for modified Bessel functions of the first kind states for $t\neq 0$, $e^{\frac{1}{2}y(t + t^{-1})}= \sum_{j = -\infty}^{\infty} t^j I_j(y)$.
Setting $t = 1$, we find $e^y = \sum_{j = -\infty}^{\infty} I_j(y)$. Moreover, since $I_j(y) \ge 0$ for all $y > 0$, $\sum_{j=1}^{(k-1)/2} I_j(m^2/2) \le e^{m^2/2}$. Thus, using that $\max_{x\in[-1,1]}|T_{2j}(x)|\le 1$,  
\begin{align}
    \max_{x\in[-1, 1]}|P_{k,m}(x)/x|
    &\le
    \frac{4m e^{-m^2/2}}{\sqrt{\pi}}
    \sum_{j=1}^{(k-1)/2}
        I_j(m^2/2) \le \frac{4m}{\sqrt{\pi}}.
\end{align}
Thus, it is clear that this upper-bound is independent of the degree of the polynomial approximation, and thus applies to the whole interval $x\in [-1,1]$ and not just $x\in[-c,c]$. 

Finally, we must show that $\min_{x\in[-1, 1]}|\erf(mx)/x| \ge 1/2$.
First, note that $|\erf(mx)/x|$ is symmetrical, so we can simply consider the interval $x \in [0, 1]$. Moreover, it is monotonically decreasing, so we can take the endpoint $\min_{x\in[-1, 1]}|\erf(mx)/x| = \erf(m)$. Since $m \ge 1/2$, $\erf(m) \ge \erf(1/2) \approx 0.52 > 1/2$. 
\end{proof}

\end{document}